\documentclass[12pt]{article}
\usepackage[utf8]{inputenc}
\usepackage{fullpage}
\usepackage{authblk}

\usepackage{hyperref}

\usepackage{verbatim}
\usepackage[T1]{fontenc}
\usepackage[tt=false,type1=true]{libertine}
\usepackage[varqu]{zi4}
\usepackage[libertine]{newtxmath}

\usepackage{amsmath}
\usepackage{mathtools}
\usepackage{bm}
\usepackage{bbm}

\usepackage{amssymb}

\usepackage{amsthm}

\usepackage{thmtools}
\usepackage{thm-restate}

\usepackage{authblk}

\usepackage{pgfplots}
\pgfplotsset{compat=1.18}

\renewcommand{\vec}[1]{\bm{#1}}
\newcommand{\mat}[1]{\mathbf{#1}}

\newcommand{\R}{\mathbb{R}}

\newcommand{\N}{\mathbb{N}}

\newcommand{\interior}{\mathrm{int}}
\newcommand{\card}{\mathrm{card}}
\newcommand{\vx}{\vec{x}}

\newcommand{\vm}{\vec{m}}

\DeclareMathOperator*{\argmax}{arg\,max}

\newcommand{\defeq}{\coloneqq}

\usepackage{cleveref}

\let\R\Relax
\usepackage{complexity}
\let\R\relax
\newcommand{\R}{\mathbb{R}}
\usepackage{booktabs}

\usepackage{tikz}
\usepackage[american]{circuitikz}
\usetikzlibrary{patterns,shapes.geometric,arrows.meta, positioning, fit, calc, backgrounds}

\theoremstyle{plain}
\newtheorem{theorem}{Theorem}[section]

\newtheorem{lemma}[theorem]{Lemma}
\newtheorem{claim}[theorem]{Claim}

\theoremstyle{definition}
\newtheorem{definition}[theorem]{Definition}
\newtheorem{assumption}[theorem]{Assumption}

\theoremstyle{remark}
\newtheorem{remark}[theorem]{Remark}

\newcommand{\declarecolor}[2]{\definecolor{#1}{RGB}{#2}\expandafter\newcommand\csname #1\endcsname[1]{\textcolor{#1}{##1}}}
\declarecolor{White}{255, 255, 255}
\declarecolor{Black}{0, 0, 0}
\declarecolor{Maroon}{128, 0, 0}
\declarecolor{Coral}{255, 127, 80}
\declarecolor{Red}{182, 21, 21}
\declarecolor{LimeGreen}{50, 205, 50}
\declarecolor{DarkGreen}{0, 80, 0}
\declarecolor{Purple}{146, 42, 158}
\declarecolor{Navy}{0, 0, 128}
\declarecolor{LightBlue}{84, 101, 202}
\definecolor{mydarkblue}{rgb}{0,0.08,0.45}

\hypersetup{ %
    pdftitle={},
    pdfkeywords={},
    pdfborder=0 0 0,
    pdfpagemode=UseNone,
    colorlinks=true,
    linkcolor=Maroon,
    citecolor=Navy,
    filecolor=Purple,
    urlcolor=Purple,
}

\usepackage{natbib}
\title{Chaos in Autobidding Auctions}

\author[1]{Ioannis Anagnostides\thanks{Part of this work was performed while at Google DeepMind.}}
\author[2]{Ian Gemp}
\author[2,3]{Georgios Piliouras}
\author[4]{Kelly Spendlove}

\affil[1]{Carnegie Mellon University}
\affil[2]{Google DeepMind}
\affil[3]{Singapore University of Technology and Design}
\affil[4]{Google}
\affil[ ]{\small \texttt{ianagnos@cs.cmu.edu}, \texttt{imgemp@google.com}, \texttt{gpil@google.com},
\texttt{spendlove@google.com}}

\date{}

\begin{document}

\maketitle
\thispagestyle{empty}

\begin{abstract}
    As autobidding systems increasingly dominate online advertising auctions, characterizing their long-term dynamical behavior is brought to the fore. In this paper, we examine the dynamics of autobidders who optimize value subject to a return-on-spend (RoS) constraint under uniform bid scaling. Our main set of results show that simple autobidding dynamics can exhibit formally \emph{chaotic behavior}. This significantly strengthens the recent results of Leme, Piliouras, Schneider, Spendlove, and Zuo (EC '24) that went as far as quasiperiodicity.
    
    Our proof proceeds by establishing that autobidding dynamics can simulate---up to an arbitrarily small error---a broad class of continuous-time nonlinear dynamical systems. This class contains as a special case \emph{Chua's circuit}, a classic chaotic system renowned for its iconic double scroll attractor. Our reduction develops several modular gadgets, which we anticipate will find other applications going forward. Moreover, in discrete time, we show that different incarnations of mirror descent can exhibit Li-Yorke chaos, topological transitivity, and sensitivity to initial conditions, connecting along the way those dynamics to classic dynamical systems such as the logistic map and the Ricker population model.

    Taken together, our results reveal that the long-term behavior of ostensibly simple second-price autobidding auctions can be inherently unpredictable and complex.
\end{abstract}

\tableofcontents
\thispagestyle{empty}
\clearpage
\setcounter{page}{1}

\section{Introduction}
\label{sec:intro}

Online advertising has undergone a paradigm shift. While the classic theory of advertising auctions is predicated on the assumption that advertisers bid so as to maximize their quasi-linear utility---the difference between the obtained value and the payment, the modern ecosystem is increasingly being dominated by \emph{autobidding}~\citep{Aggarwal24:Autobidding}. Instead of manually submitting bids, which can be prohibitively complex in modern real-world settings, the advertiser specifies some high-level constraints to an automated agent, the \emph{autobidder}, who then optimizes on behalf of the advertiser across multiple auctions so as to optimally meet those objectives. One common constraint is \emph{return-on-spend (RoS)}, which demands a certain level of value per unit of dollars spent.

The prevalence of autobidding systems and value maximization subject to budget constraints can be attributed to multiple factors. First, it often aligns more closely with how firms design and execute advertising campaigns: advertisers oftentimes fix a budget \emph{a priori} and then seek to maximize coverage or impact subject to that budget constraint, rather than explicitly trading off value and payments, as the quasi-linear model would predict. It is also easier for advertisers to verify \emph{ex post} whether automated bidding platforms are achieving their specified goals, since performance can be judged based on delivered outcomes without relying on inferred, counterfactual utilities. At a more fundamental level, assigning a precise monetary value to ad clicks or impressions---a key premise in the quasi-linear model---is often impossible; the true economic value of advertising is difficult to quantify beyond crude proxy measures. Finally, (bidder) welfare maximization---as accomplished by the celebrated VCG mechanism---is typically at odds with optimizing (auctioneer) revenue~\citep{Ausubel06:Lovely,Myerson81:Optimal}.

The widespread deployment of autobidding agents drastically alters the market's behavior. A pressing question is to understand the dynamic behavior of systems in which multiple autobidders compete against each other. Do they converge to an equilibrium? If not, what type of behavior does the system exhibit in the long run? Previous work recently set out to investigate these fundamental questions. In particular, the paper of~\citet{Leme24:Complex}, the main precursor to our work, found that even simple autobidding dynamics can fail to stabilize, exhibiting recurrent, quasiperiodic behavior. However, the full extent of this instability was not understood. Quasiperiodicity is a benign and highly structured behavior that allows a considerable degree of predictability. The main open question is whether these systems can exhibit more erratic and complex behavior, rendering market outcomes inherently unpredictable.

\subsection{Our results: the onset of chaos}

\begin{figure}[t]
    \centering
    \includegraphics[scale=0.65]{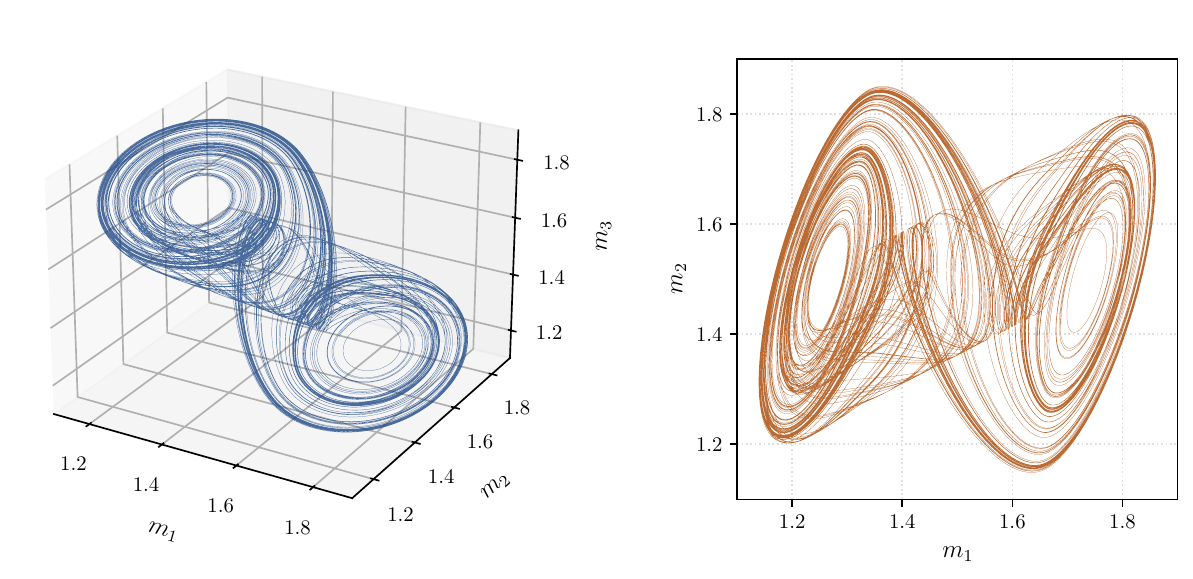}    
    \caption{Trajectories of autobidding dynamics.  }
    \label{fig:autobidding-chaos}
\end{figure}

We establish that autobidding dynamics are far more complex than previously understood, significantly strengthening the findings of~\citet{Leme24:Complex}. Specifically, we prove that they can exhibit formally \emph{chaotic behavior}. This means that the dynamics are beset by, among others, acute sensitivity to initial conditions and topological mixing. The practical ramifications are tangible: a tiny error in numerical simulation---for example, due to floating-point arithmetic or measurement noise---amplifies exponentially over time, rendering long-term prediction virtually impossible.

Our results cover both the discrete- and continuous-time settings. They also persist even in the simplified setting of pure RoS constraints (formally introduced in~\Cref{sec:prels}), absent any additional budget constraints. Since our goal is to establish complexity results, restricting to an ostensibly simple class of problems makes our results more surprising.

\subsubsection{Continuous-time dynamics} 

We analyze the continuous-time dynamical system put forward by~\citet{Leme24:Complex}, arguably the most natural and simple algorithm for optimizing under RoS constraints; it is also closely related to other recently analyzed dynamics with strong theoretical properties~\citep{Gaitonde23:Budget,Lucier24:Autobidders}, as discussed in more detail in \Cref{sec:related}. To place our approach in context, we highlight that establishing chaos for continuous-time systems is notoriously challenging. For example, characterizing the attractor of Lorenz's iconic system---one of the most well-studied dynamical systems---took many decades to be finally resolved via a computer-assisted proof, and was the subject of Stephen Smale's 14th problem~\citep{Tucker02:Rigorous}. 

In this context, instead of characterizing autobidding systems from scratch, we make use of reductions. In particular, we prove that a broad class of nonlinear dynamical systems can be simulated---up to an arbitrarily small error in the vector field---by autobidding dynamics under the appropriate market competition (\Cref{theorem:simulation-general}). This class contains as a special case \emph{Chua's circuit}~\citep{Matsumoto03:Chaotic}, a classic chaotic system. \Cref{fig:autobidding-chaos} illustrates the autobidding dynamics that result from our reduction, which (approximately) simulate Chua's circuit. The main takeaway is that long-term forecasting of market states can be intractable, and small perturbations can lead to macroscopically different outcomes.

\paragraph{Proof of our general simulation result}

From a technical standpoint, our reduction develops a series of gadgets---configurations of items, values, and reserve prices---to implement specific dynamical system functionalities. The conceptual structure of our reduction is illustrated in~\Cref{fig:reduction}.

Our simulation result encompasses any (bounded) nonlinear system in which each nonlinearity depends solely on the corresponding state variable. In other words, we exclude \emph{coupled} nonlinear interactions such as the cross terms $x_i x_{i'}$ present in, for example, the Lorenz attractor. Chua's circuit (\Cref{sec:chua}) is a canonical member of this class, as its nonlinearity---typically a piecewise-linear function---depends only on the voltage across a single component.

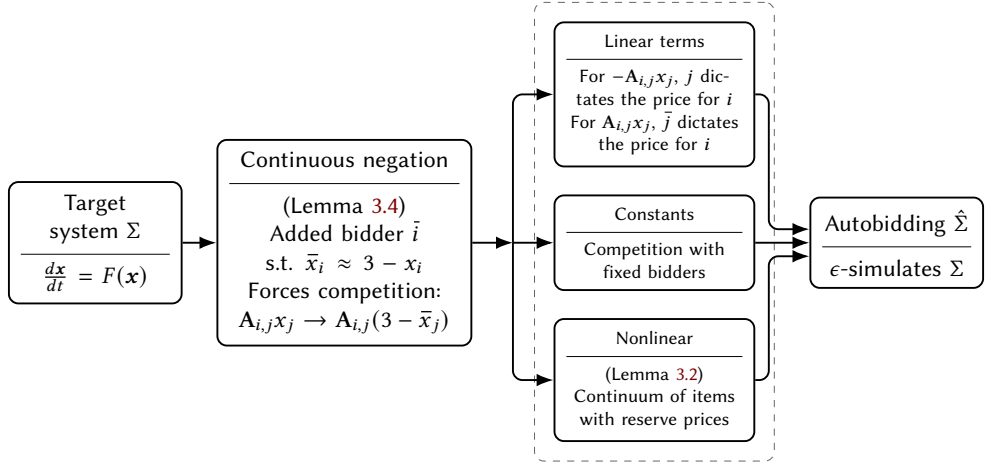
\begin{figure}[t]
    \centering
    \scalebox{0.9}{

\tikzset{
    base/.style={
        rectangle, rounded corners, draw=black, line width=0.8pt,
        align=center, font=\sffamily\footnotesize,
        inner sep=5pt
    },
    startend/.style={
        base, 
        text width=2.2cm, 
        minimum height=1cm
    },
    transform/.style={
        base, 
        text width=3.4cm, 
        minimum height=1.5cm
    },
    branch/.style={
        base, 
        text width=2.6cm, 
        font=\sffamily\scriptsize
    },
    arrow/.style={
        -{Latex[length=2.5mm]}, line width=0.8pt, draw=black
    },
    group_label/.style={
        font=\sffamily\scriptsize\itshape, 
        color=black!70
    }
}

\begin{tikzpicture}[node distance=0.4cm and 0.5cm]

    \node[startend] (start) {
        Target system $\Sigma$
        \\[-0.5em] \rule{\linewidth}{0.4pt} \\ 
        $\frac{d \vx}{dt} = F(\vx)$
    };

    \node[transform, right=of start] (negation) {
        Continuous negation
        \\[-0.5em] \rule{\linewidth}{0.4pt} \\ 
        (\Cref{lemma:continuousnegation})\\
        Added bidder $\widebar{i}$ s.t.
        $\widebar{x}_i \approx 3-x_i$ \\
        Forces competition:\\
        $\mat{A}_{i, j} x_j \to \mat{A}_{i, j}(3 - \widebar{x}_j)$
    };

    
    \node[branch, right=1.2cm of negation] (const) {
        Constants 
        \\[-0.5em] \rule{\linewidth}{0.4pt} \\ 
        Competition with fixed bidders
    };
    
    \node[branch, above=of const] (linear) {
        Linear terms 
        \\[-0.5em] \rule{\linewidth}{0.4pt} \\ 
        For $- \mat{A}_{i, j} x_j$, $j$ dictates the price for $i$\\
        For $\mat{A}_{i, j} x_j$, $\widebar{j}$ dictates the price for $i$
    };
    
    \node[branch, below=of const] (nonlinear) {
        Nonlinear
        \\[-0.5em] \rule{\linewidth}{0.4pt} \\ 
        (\Cref{lemma:nonlinear-sim})\\
        Continuum of items with reserve prices
    };

    \node[startend, right=0.8cm of const] (final) {
        Autobidding $\hat{\Sigma}$
        \\[-0.5em] \rule{\linewidth}{0.4pt} \\ 
        $\epsilon$-simulates $\Sigma$
    };

    
    \draw[arrow] (start) -- (negation);
    
    \coordinate (split) at ($(negation.east) + (0.6cm,0)$);
    \draw[arrow] (negation.east) -- (split);

    \draw[arrow, rounded corners] (split) |- (linear.west);
    \draw[arrow] (split) -- (const.west);
    \draw[arrow, rounded corners] (split) |- (nonlinear.west);

    \draw[arrow, rounded corners] (linear.east) -- ++(0.2cm,0) |- ($(final.west) + (0, 0.2cm)$);
    \draw[arrow] (const.east) -- (final.west);
    \draw[arrow, rounded corners] (nonlinear.east) -- ++(0.2cm,0) |- ($(final.west) + (0, -0.2cm)$);

    \begin{pgfonlayer}{background}
        \node[fit=(linear)(nonlinear)(const), rounded corners, draw=black!60, dashed, inner sep=8pt] (groupbox) {};
    \end{pgfonlayer}

\end{tikzpicture}}
    \caption{The key steps in our main simulation result (\Cref{theorem:simulation-general})}
    \label{fig:reduction}
\end{figure}

The first essential gadget is what we refer to as \emph{continuous negation}. To appreciate its importance, we recall that autobidding systems are, in a certain regime, \emph{competitive} systems~\citep{Leme24:Complex}. This arises because the price paid by an autobidder is determined by the (bid) multiplier of a competitor, imposing specific constraints on the interaction structure. \citet{Leme24:Complex} observed that for \emph{linear} dynamical systems, this constraint is not restrictive in that one can always apply a linear transformation to map a general linear system to a competitive one. However, in the nonlinear setting, such coordinate transformations are problematic because they distort the specific structure of the nonlinearities we aim to simulate.

Our continuous negation gadget addresses this challenge by converting the linear part of any nonlinear system into a competitive form \emph{without} coordinate changes (\Cref{fig:reduction}). Specifically, it takes as input the multiplier of an input autobidder $m(t)$, and produces an output multiplier that satisfies $\widebar{m}(t) \approx 3 - m(t)$ (the constant $3$ has no special significance). This allows us to introduce an auxiliary autobbider $\widebar{i}$ so that non-competitive (positive feedback) interactions with $i$ are replaced by competitive (negative feedback) interactions with $\widebar{i}$.

We construct this gadget by reverse engineering an autobidding system governed by the differential equation 
\[
\frac{d \widebar{m} }{dt} = 3 \lambda - \lambda m(t) - \lambda \widebar{m}(t),
\]
parameterized by $\lambda > 0$. We prove that when $\lambda = \lambda(\epsilon)$ is large enough, $\widebar{m}(t)$ approaches $3 - m(t)$ to any desired precision $\epsilon > 0$ uniformly in time (\Cref{lemma:approxint}). This appoximation is the only source of error in our simulation.

Having converted the linear component into a competitive form, our next gadget constructs the nonlinear term. A key observation is that, after a suitable preprocessing, the target nonlinearity can be assumed to have a negative derivative. This monotonicity is crucial: it allows us to design a continuum of items and reserve prices that exactly recovers the underlying nonlinearity (\Cref{lemma:nonlinear-sim}). Specifically, the density of the items is dictated by the derivative of the nonlinear function. This step can also be (approximately) implemented with a discrete set of items by a standard integral approximation (\Cref{lemma:integral-approx}).

We couple these modules with the gadgets developed by~\citet{Leme24:Complex} to arrive at our main simulation result (\Cref{theorem:simulation-general}). A crucial aspect of our result is that chaos is not merely an artifact of discretization or specific learning rate choices, as is often the case in prior work, but rather an intrinsic property of the continuous-time dynamics.

We then analyze the system resulting from the simulation of Chua's circuit. First, we numerically verify the existence of a positive Lyapunov exponent (\Cref{sec:posLLE}), the hallmark of chaotic sensitivity whereby initially close trajectories diverge exponentially. Second, we address the structural robustness of chaos under the perturbations introduced by our continuous negation gadget. Leveraging the framework of~\citet{Galias97:Positive}---who provided a computer-assisted proof of positive topological entropy in Chua's circuit---we find that for $\lambda \gg 1$ (in which case the continuous negation error is negligible), the key topological preconditions of~\citet{Galias97:Positive} pertaining to the (deformed) horseshoe map are preserved (\Cref{sec:pos-topentr}). This confirms that the presence of chaos is structurally robust under the approximations inherent in our reduction.

\subsubsection{Discrete-time dynamics} 

In the second part of the paper, we examine discrete-time approximations of the previously studied continuous-time dynamics. Specifically, we analyze different incarnations of the classic \emph{mirror descent} algorithm under standard regularizers---namely, the entropic and the Euclidean one.

We show that even in a minimal setting with two autobidders and two items, the dynamics can exhibit \emph{Li-Yorke chaos} (\Cref{theorem:Li-Yorke-auto}). Formally, this means that, for certain learning rates, the trajectory of multipliers admits a dense set of periodic points and uncountable \emph{scrambled sets} (\Cref{theorem:LiYorke}). The underlying instance is simple, and has been studied before as it exposes the inefficiency of the VCG auction under value maximization with RoS constraints.

Furthermore, we also establish topological transitivity and sensitivity to initial conditions, satisfying the standard definition of Devaney chaos (\Cref{thm:Devaney}). We do so by connecting autobidding dynamics to the \emph{Ricker population model} (\Cref{thm:recover-Ricker}) and the classic \emph{logistic map} (\Cref{theorem:logistic}). Unlike our continuous-time results, which required a delicate construction to simulate Chua's circuit, the discrete-time instances we construct are minimal and based on natural conditions likely to arise in practice. On the other hand, our discrete-time results hinge on having a large enough learning rate, so in that sense they are less robust relative to the continuous-time counterparts.

Viewed as a whole, our paper brings together decades of dynamical systems research with the emerging study of autobidding systems.
\subsection{Related work}
\label{sec:related}

Our work builds upon several key strands of literature, which we discuss below.

\paragraph{Autobidding systems} There is a growing line of work that examines various aspects of autobidding systems. For an excellent survey covering recent advances, we refer to~\citet{Aggarwal24:Autobidding}. Closer to our paper are the works of~\citet{Gaitonde23:Budget} and~\citet{Lucier24:Autobidders}, which show that the liquid welfare attained when all autobidders adopt certain gradient-based dynamics is at least half the optimal one; these results does not hinge on the convergence of the dynamics, thereby strengthening the price of anarchy bound of~\citet{Aggarwal19:Autobidding} that only applies with respect to fixed points---also known as~\emph{autobidding equilibrium}~\citep{Li24:Vulnerabilities}. The dynamics considered in those papers are closely related to the ones we study in this paper. From a broader standpoint, characterizing the (liquid) welfare of autobidding markets has received extensive attention in recent years (\emph{e.g.},~\citealp{Deng24:Efficiency,Balseiro21:Robust,Fikioris23:Liquid,Baldeschi26:Optimal}). Our focus in this paper is different, examining whether the trajectories of autobidding dynamics settle in a predictable pattern.

\paragraph{Chua's circuit} Chua's circuit holds a distinct place in the history of dynamical systems, being the subject of intense study and extensively referenced in the literature. It is a canonical physical realization of chaotic dynamics, which can be easily implemented as an electronic circuit. A salient aspect of that system is its structural robustness. Unlike other systems whose chaotic behavior is brittle, Chua's circuit exhibits chaos over a broad range of parameters. The physical consequence of this is that even if the constituents of the system (resistors, inductors, and capacitors) are imperfect, chaos persists. By virtue of our reduction, autobidding dynamics inherit this robustness.

\paragraph{Chaos in multiagent settings} Chaotic behavior has been extensively documented in multiagent settings under various dynamics~\citep{Palaiopanos17:Multiplicative,Bielawski21:Follow,Cheung20:Chaos,Sato02:Chaos}. In particular, it is straightforward to simulate Lorenz's system through replicator dynamics. What distinguishes our results from that line of work is that autobidding dynamics have their own particular structure, so it was hitherto unclear whether they could sustain chaos, especially in continuous time. For example, it is worth pointing out that the class of nonlinear systems covered by our simulation result does not contain Lorenz's system; whether that can be (even approximately) simulated by autobidding dynamics is left as a challenging open question.

\paragraph{Competitve systems} As was observed by~\citet{Leme24:Complex}, autobidding dynamics behave, at least in a certain regime, as a competitive system; the basic reason for this is that bidders compete with each other for the items. Based on this insight, \citet{Leme24:Complex} went on to connect autobidding dynamics to \emph{repressilators} in synthetic
biology~\citep{Elowitz00:Synthetic}, a genetic regulatory network consisting of at least one negative feedback loop. Relatedly, \citet{Morrison24:Diversity} investigated competitive threshold-linear networks, demonstrating how the underlying graph connectivity dictates the system's emergent dynamics, which range from stable limit cycles and quasiperiodic attractors to chaos. Although the class of nonlinear systems considered by~\citet{Morrison24:Diversity} is distinct from the one covered by our simulation result (\Cref{theorem:simulation-general}), it would be interesting to understand whether the results of~\citet{Morrison24:Diversity} have any implications in the autobidding setting.

\paragraph{Complexity and dynamical systems} Finally, there is a significant body of work at the intersection of complexity theory and dynamical systems that establishes that many natural problems concerning the long-term system behavior are inherently intractable~\citep{Papadimitriou16:Computational,Chatziafratis19:Computational}. A natural question is whether autobidding dynamics, which are governed by specific budget or RoS constraints, are subject to similar complexity barriers.
\section{Preliminaries}
\label{sec:prels}

This section provides some basic preliminaries on value maximization with RoS constraints and autobidding dynamics. Additional background on discrete- and continuous-time dynamical systems is deferred to~\Cref{sec:moreprels,app:galias_framework}.

\subsection{Value maximization with RoS constraints}
\label{prels:ros}

Our main focus in this paper is on the behavior of autobidding dynamics, which capture the competition of multiple agents adaptively bidding for a set of items. Here, the advertiser, or buyer, provides as input to the autobidder a \emph{return-on-spend (RoS)} target, $\tau_i \geq 1$, which is assumed to remain fixed. Another common input in practice and prior literature alike is a \emph{budget constraint}~\citep{Dobzinski14:Efficiency}, which forces the total expenditure of each agent $i$ to remain below a given threshold $B_i > 0$; all our complexity results hold by lifting the budget constraint, which can be thought of as taking $B_i = + \infty$.

In more detail, we consider a setting comprising $n$ \emph{autobidders}. The auctioneer is to allocate $k \in \N$ items. The interaction between the auctioneer and the autobidders proceeds over a sequence of rounds. At every time $t$, each autobidder $i$ submits a bid $b_{i j}^{(t)}$ for item $j \in [k]$. We assume throughout that each autobidder $i$ has a perfect model of values, denoted by $\{ v_{ij} \}_{j \in [k]}$, unbeknownst to the auctioneer. Valuations are assumed to be \emph{additive}, so that the value $v_i(S)$ of a bundle $S \subseteq [k]$ is equal to $\sum_{j=1}^k v_{i j}$. The auctioneer takes as input the bids, whereupon it produces an \emph{allocation} $\vec{x} \in \R_{\geq 0}^{n \times k}$, with $\sum_{i=1}^n x_{ij} = 1$ for each item $j \in [k]$, together with \emph{payments} $\vec{p} \in \R^{n \times k}_{\geq 0}$. We consider the usual second-price allocation rule---which is an instantiation of the Vickrey-Clarke-Groves (VCG) mechanism~\citep{Vickrey61:Counterspeculation,Clarke71:Multipart,Groves73:Incentives}---with uniform tie breaking, whereby
\begin{equation}
    \label{eq:uni-tiebreaking}
    x_{ij}(\vec{b}) \defeq \frac{ \mathbbm{1} \{ b_{ij} = \max_{1 \leq i' \leq n} b_{i' j} \} }{ | \{ i' : b_{i' j} = \max_{1 \leq i'' \leq n} b_{i'' j} \} | } \quad 
\end{equation}
and
\[
p_{ij}(\vec{b}) \defeq x_{i j}(\vec{b}) \cdot \max_{i' \neq i} b_{i' j}.
\]

In words, the allocation of each item proceeds separately, reflecting the fact that valuations are additive. For each item $j \in [k]$, only bidders who submitted the highest bid obtain a positive fraction of the item and pay the second highest bid. A fractional allocation can be thought of in probabilistic terms by interpreting $x_{ij}$ as the probability that $i$ will receive item $j$. We posit uniform tie breaking for simplicity, in accordance with~\citet{Leme24:Complex}, although all our results are agnostic to that; it is worth pointing out that guaranteeing existence of autobidding equilibria requires using more careful tie breaking than~\eqref{eq:uni-tiebreaking}, but that is less of a concern in the dynamic setting~\citep{Li24:Vulnerabilities}.

\begin{remark}[Continuum of items]
    For mathematical convenience, we sometimes assume that there is a continuum of items to be allocated. Under mild assumptions on the density function, this can be approximated to any precision through the use of a sufficiently large number of items (\emph{cf.}~\Cref{lemma:integral-approx}).
\end{remark}

\begin{remark}[Reserve prices]
    Some of our constructions make use of \emph{reserve prices}, which are standard both in practice and in prior literature on the subject. If an item has a reserve price attached to it, it can be thought of as having an additional, fictitious autobidder submit that bid for the item. \citet{Li24:Vulnerabilities} have shown how to simulate the presence of reserve prices in equilibrium via auxiliary autobidders.
\end{remark}

It is well-known that VCG maximizes welfare in the \emph{quasi-linear} model, in which each agent $i$ is assumed to be maximizing its utility $\sum_{j=1}^k (v_{i j} x_{ij}(\vec{b}) - p_{i j} (\vec{b}))$. In fact, in that model, truthful reporting is a dominant strategy for each agent, so there is an obvious optimal bidding strategy \emph{regardless} of how other bidders engage; from a dynamical standpoint, this means that learning algorithms exhibit convergent behavior in that setting (\emph{e.g.},~\citealp{Bichler25:Beyond}). However, autobidding systems depart from the normative quasi-linear model in auction theory.

\paragraph{RoS objective} Instead of optimizing (quasi-linear) utility, the goal of each autobidder $i \in [n]$ is to submit a set of bids so as to maximize the allocation \emph{value} $\sum_{j = 1}^k x_{ij} v_{ij}$ subject to the \emph{return-on-spend (RoS)} constraint parameterized by $\tau_i \geq 1$; that is,
\begin{equation}
    \max_{\vec{b}_i} \sum_{j=1}^k v_{i j} x_{i j}(\vec{b}) \quad \text{subject to} \quad \tau_i \cdot \sum_{j=1}^k p_{i j}(\vec{b}) \leq \sum_{j=1}^k v_{i j} x_{i j}(\vec{b}).
\end{equation}

The prevalence of such constraints and the subsequent departure from the quasi-linear model in auction theory can be attributed to multiple factors, as we detailed in~\Cref{sec:intro}. By rescaling the values, it is without loss of generality to take $\tau_i = 1$ for each autobbider $i \in [n]$. In that case, the constraint takes the form of \emph{individual rationality}, which means that the expected utility from participating is nonnegative.

\paragraph{Uniform bid scaling} A natural strategy commonly employed in practice is for an autobidder $i \in [n]$ to select bids as $b_{i j} = m_i \cdot v_{i j}$, for some \emph{multiplier} $m_i > 0$. This is referred to as \emph{uniform bid scaling}. It has the effect of reducing the problem of the autobidder down to a single variable. For auction formats that are truthful under quasi-linear utilities, it has been shown that uniform bid scaling is optimal~\citep{Aggarwal19:Autobidding}; On the other hand, for auction formats that are not truthful---notably first-price auctions (FPAs) or generalized second-price auctions (GSPs)---uniform bid scaling can lead to suboptimal strategies; this was investigated extensively by~\citet{Deng21:Towards}.

Solving this optimization problem from the perspective of a single agent is straightforward: each agent should select the maximal multiplier that guarantees nonnegative utility (\Cref{fig:utility}). What makes autobidding interesting is that multiple agents are simultaneously striving to achieve that goal, resulting in each agent chasing a moving target. In fact, we establish that complex, chaotic behavior can emerge even with a single autobidder in the presence of reserve prices (\Cref{sec:Ricker}).

\begin{figure}
    \centering
    \scalebox{0.8}{\begin{tikzpicture}
    \begin{axis}[
        axis lines = center,
        xlabel = {$m$ (Multiplier)},
        ylabel = {Utility},
        ymin = -0.4, ymax = 1.3,
        xmin = 0, xmax = 3.0,
        xtick = {1},
        xticklabels = {$m=1$},
        ytick = \empty,
        axis line style = {-stealth},
        xlabel style={at={(current axis.right of origin)}, anchor=west, xshift=2pt},
        ylabel style={at={(axis description cs:0.08,1)}, anchor=south},
        clip=false
    ]

    \addplot [
        domain=0:1, 
        samples=50, 
        color=blue, 
        very thick,
    ]
    {1 - (x-1)^2};

    \addplot [
        color=blue,
        very thick,
        smooth,
        tension=0.5
    ] coordinates {
        (1, 1)      
        (1.5, 0.8)  
        (2.0, 0.4)  
        (2.4, 0)    
        (2.7, -0.3) 
    };

    \draw[dashed, gray] (axis cs:1,0) -- (axis cs:1,1);
    \node[circle, fill=black, inner sep=1.5pt] at (axis cs:1,1) {};
    
    \node[circle, fill=red, inner sep=1.5pt] at (axis cs:2.4,0) {};

    \end{axis}
\end{tikzpicture}}
    \caption{An illustrative utility plotted as a function of the multiplier. Under VCG, selecting a multiplier $m = 1$ can always guarantee the maximum utility, but that is typically not the optimal choice for value maximization under an RoS constraint. Increasing the multiplier beyond 1---which corresponds to overbidding---can decrease its quasi-linear utility since it can pay for an item more than its value for it. The goal is to select the largest possible multiplier that guarantees nonnegative utility; this corresponds to the red dot in the figure.}
    \label{fig:utility}
\end{figure}
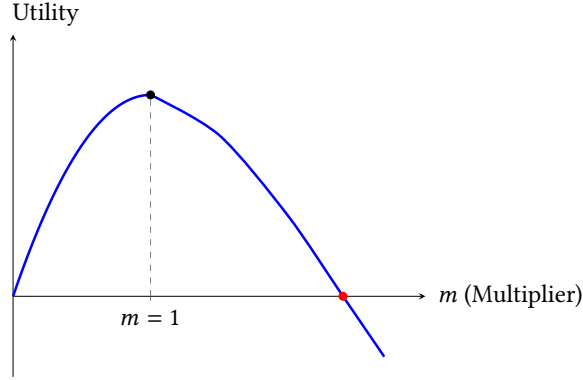

\subsection{Autobidding dynamics}

Autobidding dynamics emerge from the repeated interactions of autobidders with the auction mechanism, occurring even when a single autobidder adjusts bids to satisfy the RoS constraints. In what follows, we denote by $u_i(\vec{m}) \defeq \sum_{j=1}^k v_{i j} x_{i j}(\vec{m}) - \sum_{j=1}^k p_{i j}(\vec{m})$ the quasi-linear utility of autobidder $i$. Following~\citet{Leme24:Complex}, we study the following continuous-time dynamical system:
\begin{equation}
    \label{eq:continuous-time}
    \frac{ d m_i }{d t} = u_i(\vec{m}) \quad \forall i = 1, \dots, n.
\end{equation}

This is perhaps the most natural dynamical system to consider in this setting. Intuitively, $u_i(\vec{m})$ represents the slack in $i$'s constraint. When $u_i(\vec{m}) > 0$, the agent is overperforming relative to their target, prompting an increase in the multiplier in order to acquire more items. Conversely, when $u_i(\vec{m}) < 0$, the agent is violating the constraint, causing a reduction in the multiplier. If the multipliers remain bounded within $[0, M]$, it follows that~\eqref{eq:continuous-time} satisfies the RoS constraint on average up to an error of at most $M/T$~\citep{Leme24:Complex}. 

Our results in~\Cref{sec:discrete} concern discrete-time versions of the dynamical system given in~\eqref{eq:continuous-time}; for related dynamics, we refer to~\citet{Gaitonde23:Budget} and~\citet{Lucier24:Autobidders}.

\section{Simulating Chua's circuit and continuous-time autobidding chaos}
\label{sec:Chua}

This section shows that the continuous-time autobidding dynamics~\eqref{eq:continuous-time} can exhibit chaotic behavior, in a sense that will be formalized shortly. As we explained before, proving that from scratch is known to be especially challenging, so our basic approach is to show that autobidding dynamics can \emph{simulate} a known chaotic dynamical system---namely, Chua's circuit~\citep{Matsumoto03:Chaotic}. We are then able to leverage existing techniques developed for the analysis of Chua's circuit.

\subsection{Simulating a class of nonlinear systems}
\label{sec:sim-general}

In fact, our simulation result is more general, covering a broad class of nonlinear dynamical systems that contains Chua's circuit as a special case. Specifically, we consider the following class of continuous-time systems.
\begin{equation}
    \label{eq:nonlinear}
    \frac{d x_i}{dt} = \langle \mat{A}_{i, :}, \vec{x} \rangle + h_i(x_i) \quad \forall i = 1, 2, \dots d,
\end{equation}
where $\mat{A} \in \R^{d \times d} $ is a matrix, $\mat{A}_{i, :}$ denotes the $i$th row of $\mat{A}$, and each function $h_i(\cdot)$ is potentially nonlinear but continuous. The key structural assumption made above is that each nonlinear function \emph{depends only on the corresponding variable}. In other words, there are no coupling terms of the form $h(x_i, x_{i'})$ for $i \neq i'$. This excludes, for example, systems like Lorenz's which contains coupled quadratic terms; extending our simulation result beyond~\eqref{eq:nonlinear} is left as an interesting, but challenging direction for the future.

Our only additional (mild) assumption concerning~\eqref{eq:nonlinear} is that its orbits remain bounded, and this property is \emph{structurally robust} in that it is maintained when the vector field is subjected to a small perturbation. Interesting physical systems, like Chua's circuit, have such structural robustness, for otherwise their behavior would be brittle and fundamentally alter under marginal changes in the model parameters. Our assumptions are gathered below.

\begin{assumption}
    \label{assumption:nonlinear}
    There exists $B > 0$ such that for any time $t \geq 0$ and $i \in [d]$, $x_i(t) \in [-B, B]$. In particular, by rescaling and shifting, we can assume without any loss of generality that $x_i(t) \in [1.1, 1.9]$ for all $i \in [d]$. This is so when the underlying vector field $F$ is replaced by any $\hat{F}$ with $\|F(\vx) - \hat{F}(\vx) \| \leq \epsilon$ for any $\vx \in [-B, B]^d$, where $\epsilon > 0$ is sufficiently small.
    
    Furthermore, we assume that each function $h_i$ is continuous and differentiable almost everywhere, so that $|h_i'(\cdot)| \leq L$ for any $i \in [d]$.
\end{assumption}

The interval $[1.1, 1.9]$ is used for concreteness; there is nothing special about these numbers. By interpreting each variable as a multiplier, what is more important is maintaining the invariance that each multiplier always exceeds the value 1. We want to avoid establishing chaos through dynamics in which some autobidders sometimes use a multiplier below 1, because these are dominated strategies and are thus likely to be avoided by sophisticated autobidders.

We also point out that our subsequent construction can even handle the case where each function $h_i$ has a finite number of discontinuities, but working with non-continuous systems is more cumbersome, so we restrict our attention to the continuous setting.

We start by making a simple but crucial observation: by adding and subtracting the term $(L+1) x_i - (L+1) x_i$ in~\eqref{eq:nonlinear}, one obtains an equivalent system of the form
\begin{equation*}
    \frac{d x_i}{dt} = \langle \Tilde{\mat{A}}_{i, :}, \vec{x} \rangle + \Tilde{h}_i(x_i) \quad \forall i = 1, 2, \dots, d,
\end{equation*}
where $\Tilde{h}_i(x_i) = h_i(x_i) - (L+1) x_i$. This means that $\Tilde{h}_i'(\cdot) < 0$. In other words, we can assume without loss of generality that each nonlinear function $h_i$ in~\eqref{eq:nonlinear} has a negative derivative: $h'_i(\cdot) < 0$. This is important for the first part of our construction. To keep the notation as simple as possible, in the sequel, $\mat{A}$ plays the role of $\tilde{\mat{A}}$ and $h$ that of $\tilde{h}$.

\subsubsection{Nonlinear function}
\label{subsubsec:nonlinear}

We will first show how to create an instance in which the utility function of an autobidder contains any nonlinear term with negative derivative.

\begin{restatable}{lemma}{nonlinear}
    \label{lemma:nonlinear-sim}
Let $h: [1.1, 1.9] \to \R$ be any continuous and differentiable almost everywhere function such that $h'(\cdot) < 0$. There is an instance in which the utility of the autobidder is equal to $h$, modulo an additive constant. 
\end{restatable}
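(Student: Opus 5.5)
We construct a single autobidder facing a continuum of items. Each item $r$ ranges over the interval $[1.1, 1.9]$. The autobidder has value $1$ for item $r$, and the item carries reserve price $r$, equivalently a fictitious competing bid of $r$. Let $g(r) \defeq -h'(r) > 0$ be the item density on $[1.1,1.9]$. This is defined almost everywhere, and we assume it is integrable: $h$ is Lipschitz, being continuous, differentiable almost everywhere, and with bounded derivative under \Cref{assumption:nonlinear}. Lipschitz implies absolutely continuous, so $h(m) = h(1.1) + \int_{1.1}^m h'(r)\,dr$.

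Under uniform bid scaling with multiplier $m \in [1.1,1.9]$, the bidder bids $m$ on every item. It wins item $r$ exactly when $m > r$, and ties form a measure-zero set that we ignore. When it wins, it pays the second-highest bid, which is the reserve $r$. Its quasi-linear utility is therefore
\[
u(m) = \int_{1.1}^{m} (1 - r)\, g(r)\, dr.
\]
This gives the wrong weighting, because of the factor $(1-r)$. We fix it by choosing the density as $g(r) \defeq -h'(r)/(r-1)$. This is positive on $[1.1,1.9]$, since $r - 1 \geq 0.1$ and $h' < 0$, and it is integrable. With this choice,
\[
u(m) = \int_{1.1}^m h'(r)\,dr = h(m) - h(1.1).
\]
So the utility equals $h$ modulo the additive constant $-h(1.1)$.

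To avoid sign confusion, the values could alternatively be scaled: take value $v$ and reserve $r$ with $r > v$, so the bidder overpays and each win contributes negatively. The key point is only that every won item contributes a strictly negative amount $(v - r)$. Its density can then be tuned to match the strictly positive $-h'$, and this is where the negative-derivative preprocessing is essential. Items with reserve above the range never affect $u$ on $[1.1,1.9]$, and multipliers never drop below $1.1$, so the lower endpoint contributes only the constant.

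The main obstacle is measure-theoretic. We must verify the fundamental theorem of calculus for an almost-everywhere differentiable $h$, which holds here via Lipschitz continuity and would fail for Cantor-like functions. We must also confirm that the continuum model with reserves is legitimate. That is deferred to the integral approximation of \Cref{lemma:integral-approx} and the reserve-price remark.
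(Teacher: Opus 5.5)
Your construction is the same as the paper's. You use items of value $1$ indexed by their reserve price $p\in[1.1,1.9]$, with density $h'(p)/(1-p)$, so that $u(m)=\int_{1.1}^m(1-p)\rho(p)\,dp=h(m)-h(1.1)$. The problem is the justification you add for the last equality, which is false. Continuity, differentiability almost everywhere, and $|h'|\le L$ wherever $h'$ exists do \emph{not} make $h$ Lipschitz. Let $\kappa$ be the Cantor function transported affinely onto $[1.1,1.9]$. Then $h(x)=-x-\kappa(x)$ satisfies every hypothesis you invoke, including $h'=-1<0$ a.e. Yet $\int_{1.1}^m h'(p)\,dp=-(m-1.1)\neq h(m)-h(1.1)$, so your instance produces the wrong utility. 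This is exactly the Cantor-type failure you mention yourself, and your hypotheses do not exclude it.

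The paper's proof uses the same fundamental-theorem-of-calculus step without comment. So the real issue is that the lemma's stated hypotheses are too weak, not that your approach is wrong. What is actually needed is absolute continuity of $h$. That does hold if $h$ is differentiable at all but countably many points with bounded derivative, for instance if $h$ is piecewise $C^1$. This covers Chua's piecewise-linear diode, and it survives the shift by $-(L+1)x$.

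If $h$ is continuous and nonincreasing but not absolutely continuous, you can still succeed by replacing the density with the Lebesgue--Stieltjes measure $d\mu(p)=\frac{1}{p-1}\,d(-h)(p)$.

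Without monotonicity, no instance can work. As $m$ increases through $[1.1,1.9]$, an item with value $v$ and competing bid $c$ is newly won only when $m$ crosses $c/v>1$, which adds $v-c<0$. So the utility is always nonincreasing on $[1.1,1.9]$. But $h(x)=-x+2\kappa(x)$ has $h'<0$ a.e. and still $h(1.9)>h(1.1)$.
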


The presence of the additive constant is moot, because it can be pushed into the linear term. The construction is based on an instance with a continuum of items with reserve prices and a carefully constructed density function (\Cref{sec:proofsChua}). 

Furthermore, we observe that this construction is possible even with a finite number of items, with the caveat that one can only \emph{approximate}---uniformly, to any precision---the nonlinear function; our next gadget (\Cref{lemma:continuousnegation}) inherently incurs an approximation error, so some additional error in~\Cref{lemma:nonlinear-sim} does not affect our main simulation result. To do so, the first idea is to quantize the density and the reserve prices constructed in~\Cref{lemma:nonlinear-sim}---per the usual Riemann sums. This makes the density supported on a discrete number of prices, but its value may still be a non-integer, corresponding to allotting a fraction of each item. To recover such fractional allocations, one can use multiple \emph{symmetric} autobidders facing the same problem. Under the uniform tie breaking rule (\Cref{prels:ros}), this means that each autobidder obtains a suitable fraction of the item.

\begin{lemma}
    \label{lemma:integral-approx}
    Let $h: [1.1, 1.9] \to \R$ be any continuous and differentiable almost everywhere function such that $h'(\cdot) < 0$. For any precision $\epsilon > 0$, there is an instance with a sufficiently large number of items $k = k(\epsilon)$ and autobidders $n = n(\epsilon)$ such that
    \begin{itemize}
        \item if $m_1 = m_2 = \dots = m_n = m$, then $u_i(m)$ is uniformly $\epsilon$-close to $h$ for any $i \in [n]$, modulo some additive constant, and
        \item the system is symmetric, so that a symmetric initialization produces symmetric multipliers.
    \end{itemize}
\end{lemma}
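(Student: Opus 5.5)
The plan is to discretize the continuum construction behind \Cref{lemma:nonlinear-sim}. I first recall that construction. A single autobidder with multiplier $m$ faces a continuum of items, each of value $1$ to it. The item indexed by $r \in [1.1, 1.9]$ carries reserve price $r$, and the items have density $\rho(r)$. Bidding $m \cdot 1$ wins exactly the items with $r < m$ and pays the reserve $r$, so
\[
u(m) = \int_{1.1}^{m} (1 - r)\rho(r)\, dr .
\]
Choosing $\rho(r) \defeq h'(r)/(1-r)$ gives $\rho \geq 0$, since $h' < 0$ and $1 - r < 0$ on the interval. It also gives $\rho \leq 10L$, because $r - 1 \geq 0.1$ and $|h'| \leq L$ by \Cref{assumption:nonlinear}. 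Then $u(m) = h(m) - h(1.1)$. This step uses that $h$ is absolutely continuous (Lipschitz), which I take to be implicit in the assumptions, so that $h$ is the integral of $h'$.

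For the finite version I partition $[1.1, 1.9]$ into cells $I_\ell = [r_\ell, r_{\ell+1})$ of width $\delta$, with $k = \lceil 0.8/\delta \rceil$, and set $w_\ell \defeq \int_{I_\ell} \rho$. A cell mass can be fractional, and the paper suggests recovering it through tie-splitting among symmetric bidders. I would avoid that route, because when symmetric bidders tie, the second-highest bid is $m$ rather than the reserve, which changes the payment. The weight can instead be put into the value. Item $\ell$ has value $w_\ell$ and reserve price $w_\ell r_\ell$. Under uniform bid scaling the bid is $m w_\ell$, so the item is won iff $m > r_\ell$, and the utility contribution is $w_\ell(1 - r_\ell)$.

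To obtain $n$ autobidders with the symmetry property, I give each bidder its own private copy of this item set, valued at $0$ by everyone else. Each bidder's utility then depends only on its own multiplier and equals
\[
u_i(m) = \sum_{\ell : r_\ell < m} w_\ell (1 - r_\ell).
\]
The dynamics~\eqref{eq:continuous-time} are identical across bidders, so a symmetric initialization stays symmetric. Any $n$ works, in particular $n = 1$.

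The main step is the uniform error bound $\sup_m |u_i(m) - (h(m) - h(1.1))| \leq \epsilon$. I would split the comparison with $\int_{1.1}^m (1-r)\rho$ into two terms.
\begin{itemize}
\item \emph{Full cells below $m$.} Here $|1 - r| - |1 - r_\ell| \leq \delta$ on $I_\ell$, and the total mass is at most $\int \rho \leq 8L$, so this error is at most $8L\delta$.
\item \emph{The partial cell containing $m$.} Its mass is at most $10L\delta$ and $|1 - r| \leq 0.9$, giving an error of at most $9L\delta$.
\end{itemize}
Taking $\delta \leq \epsilon/(17L)$ then gives uniform $\epsilon$-closeness to $h$, modulo the additive constant $-h(1.1)$.

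I expect the only genuine subtlety to be that $h'$ exists only almost everywhere. Defining $w_\ell$ as exact integrals of $\rho$, rather than by point evaluation, handles this, since no regularity of $\rho$ beyond boundedness is used.
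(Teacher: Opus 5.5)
Your proof is correct, and it departs from the paper at the step of realizing the fractional cell masses $w_\ell$. The paper only sketches this lemma. It quantizes $\rho$ by Riemann sums, keeps unit values, and recovers fractional item counts by letting several symmetric autobidders tie on the same items, so that uniform tie-breaking splits each item. You instead fold the mass into the item itself---value $w_\ell$, reserve $w_\ell r_\ell$. Uniform bid scaling then keeps the winning threshold at $m=r_\ell$, with payment $w_\ell r_\ell$. Your objection to the tie-splitting route is sound: under $p_{ij}=x_{ij}\max_{i'\neq i}b_{i'j}$, each of $n$ tied bidders pays $m/n$ rather than $r_\ell/n$. An item therefore contributes $(1-m)/n$ instead of $(1-r_\ell)/n$, and the sketch as written does not reproduce the Riemann sum for $\int(1-r)\rho$ without further rework. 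Your construction avoids this problem and has several advantages. It already works with $n=1$, and the symmetry clause is trivial because the bidders are decoupled. It does not rely on exact ties, which any asymmetric perturbation would break. It gives the explicit bound $17L\delta$. It also uses only the kind of non-unit values that the paper's other gadgets already employ. The one thing the paper's route would retain is homogeneous unit-value items, as in \Cref{lemma:nonlinear-sim}. Your side remark is also correct: the identity $u(m)=h(m)-h(1.1)$ needs $h$ to be absolutely continuous (e.g.\ Lipschitz). A.e.\ differentiability with $h'<0$ does not guarantee this, and the paper uses it tacitly.
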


\subsubsection{Continuous negation}

The next central piece in our construction is what we refer to as \emph{continuous negation}. To explain its role, we first need to point out that autobidding systems are, at least in a certain regime, \emph{competitive}~\citep{Leme24:Complex}. What this means is that when one autobidder increases its multiplier, this has a negative externality on the other autobidders' utilities on account of the concomitant increase in the prices. This is a considerable obstacle in simulating even linear systems, let alone nonlinear ones. In particular, while~\citet{Leme24:Complex} observed that one can apply a linear transformation to map a general linear system to a competitive one, in the nonlinear setting such coordinate transformations are prohibitive because they affect the specific structure of the system we aim to simulate---namely~\eqref{eq:nonlinear}---by introducing nonlinear coupling terms.

Continuous negation addresses this technical challenge. It takes as input a multiplier $m$, and, without affecting the evolution of the input, produces an output multiplier $\widebar{m} = 3 - m$, which we think of as the negation of $m$. The only caveat is that the implementation incurs some small approximation error, which can be made arbitrarily small.

\begin{restatable}{lemma}{contneg}
    \label{lemma:continuousnegation}
    Let $m(t) \in [1.1, 1.9]$ be the multiplier of some autobidder. There is an instance which does not affect the evolution of $m(t)$ and produces an autobidder whose multiplier $\widebar{m}(t)$ evolves as
    \begin{equation*}
        \widebar{m}(t) = e^{-\lambda t} \int_{\tau=0}^t \lambda e^{\lambda \tau} ( 3 - m(\tau)) d \tau + e^{- \lambda t} \widebar{m}(0) \quad \forall t \geq 0.
    \end{equation*}
\end{restatable}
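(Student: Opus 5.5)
The plan is to build the auxiliary autobidder $\widebar{i}$ so that its utility is exactly affine:
\[
u_{\widebar{i}}(\vm) = 3\lambda - \lambda m - \lambda \widebar{m}.
\]
Then \eqref{eq:continuous-time} gives the linear ODE $\frac{d\widebar{m}}{dt} = 3\lambda - \lambda m(t) - \lambda \widebar{m}(t)$. Multiplying by the integrating factor $e^{\lambda t}$ and integrating from $0$ to $t$ yields the formula in the statement. I will use three groups of items, each contributing one term of $u_{\widebar{i}}$. Every group is designed so that the original autobidder $i$ never wins anything in it, which leaves $u_i$, and hence the evolution of $m(t)$, unchanged.

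\textbf{The term $-\lambda m$.} Add one item with $v_{i j} = \lambda$ and $v_{\widebar{i} j} = a\lambda$ for a constant $a \geq 2$; no other bidder values it. Both $m$ and $\widebar{m}$ will lie in $[1.1, 1.9]$, so $\widebar{m}\, a \lambda \geq 2.2\lambda > 1.9 \lambda \geq m\lambda$. Hence $\widebar{i}$ always wins the item and pays the second-highest bid $m\lambda$, contributing $a\lambda - \lambda m$ to $u_{\widebar{i}}$. Autobidder $i$ loses, pays nothing, and gets zero from this item.

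\textbf{The term $-\lambda \widebar{m}$.} The function $h(\widebar{m}) = -\lambda \widebar{m}$ is continuous with $h' = -\lambda < 0$, so \Cref{lemma:nonlinear-sim} gives a set of items with reserve prices, valued only by $\widebar{i}$, on which $\widebar{i}$'s utility equals $-\lambda\widebar{m}$ up to an additive constant $c$. Its finite-item version is \Cref{lemma:integral-approx}, if exactness is not required.

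\textbf{The constant term.} Add one item valued $v$ only by $\widebar{i}$, with reserve price $r < 1.1 v$. Then $\widebar{i}$ always wins it and gains the constant $v - r$. I choose $v - r$ so that the constants sum correctly, $a\lambda + c + (v - r) = 3\lambda$. If the required constant turns out to be negative, I instead use a reserve above $v$ that is always met, and the sign works the same way.

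\textbf{Invariance of $[1.1, 1.9]$ (main obstacle).} The steps above are valid only while $\widebar{m}(t) \in [1.1, 1.9]$. Both the domain of \Cref{lemma:nonlinear-sim} and the claim that $\widebar{i}$ always wins the first item rely on this. I will take $\widebar{m}(0) \in [1.1, 1.9]$. The solution formula writes $\widebar{m}(t)$ as a convex combination, with weights $e^{-\lambda t}$ and $1 - e^{-\lambda t}$, of $\widebar{m}(0)$ and a weighted average of the values $3 - m(\tau) \in [1.1, 1.9]$, so $\widebar{m}(t)$ stays in the interval. To avoid circularity I run a continuation (bootstrap) argument. On any maximal interval where $\widebar{m}$ stays in range, the affine form of $u_{\widebar{i}}$ holds, so the formula holds there, so $\widebar{m}$ cannot exit. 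Hence the interval is all of $[0, \infty)$.
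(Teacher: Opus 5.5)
Your proposal is correct and is essentially the paper's construction: an item shared with the input bidder that $\widebar{i}$ always wins at price $\lambda m$, a continuum of reserve-price items with density $\lambda/(p-1)$ contributing $-\lambda\widebar{m}$ plus a constant, and the integrating factor $e^{\lambda t}$. The paper avoids your separate constant item by taking the value $1.95\lambda$ and the support $[1.05,2]$, so the constants add up to exactly $3\lambda$.

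Your convex-combination argument for the invariance $\widebar{m}(t)\in[1.1,1.9]$ is more self-contained than the paper's appeal to \Cref{lemma:approxint}, since it works for every $\lambda>0$ and every $\widebar{m}(0)\in[1.1,1.9]$. The one step that needs an extra line is continuation when $\widebar{m}$ touches an endpoint (e.g.\ $m\equiv 1.1$, $\widebar{m}(0)=1.9$). You can settle it by checking that $\widebar{i}$'s true utility points inward just outside $[1.1,1.9]$, or by supporting the density on a slightly larger interval, as the paper does.
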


\begin{proof}
    Let $\lambda \geq 1$ be some parameter. 
    We consider a new output autobidder. The input autobidder competes with the output bidder for a single item valued $\lambda$ and $1.95 \lambda$, respectively. By maintaining the invariance that $\widebar{m}(t) \approx 3 - m(t)$ (\Cref{lemma:approxint}), and in particular $\widebar{m}(t) > 1$, we can thus guarantee that the output autobidder always wins this item, so the utility function of the input autobidder remains unaffected. The corresponding payment is dictated by the multiplier of the input autobidder, and is equal to $\lambda m(t)$.
    
    Moreover, we consider a continuum of items, which we represent with a density function $\rho : [1.05, 2] \to \R_{\geq 0}$. We parameterize $\rho$ in terms of the reserve price attached to each item in the continuum. The output autobidder values these items equally for 1 per unit. (Only the output autobidder values these items.) The density is set to $\rho(p) = \lambda/(p-1)$ for $p \geq 1.05$. As a result, the utility of the output autobidder reads
    \begin{equation*}
        1.95 \lambda - \lambda m(t) + \int_{1.05}^{\widebar{m}(t)} (1 - p) \rho(p) dp = 3 \lambda - \lambda m(t) - \lambda \widebar{m}(t),
    \end{equation*}
    which in turn means that the differential equation describing the evolution of $\widebar{m}(t)$ can be expressed as
    \begin{equation*}
        \frac{d \widebar{m} }{dt} = 3 \lambda - \lambda m(t) - \lambda \widebar{m}(t).
    \end{equation*}
    Equivalently,
    \begin{equation*}
        \frac{d}{dt} \left( e^{\lambda t} \widebar{m}(t) \right) = \lambda e^{\lambda t} (3 - m(t)).
    \end{equation*}
    Integrating gives the claim.
\end{proof}

The fact that this construction approximately implements continuous negation is formalized in the following lemma, whose proof appears in~\Cref{sec:proofsChua} (\Cref{fig:lam} contains an illustration).

\begin{restatable}{lemma}{approxint}
    \label{lemma:approxint}
    In the context of~\Cref{lemma:continuousnegation}, if in addition $\widebar{m}(0) = 3 - m(0)$, then
    \begin{equation*}
        | \widebar{m}(t) - (3 - m(t))| \leq \Theta_\lambda \left( \frac{\ln \lambda}{\lambda} \right) \quad \forall t \geq 0.
    \end{equation*}
\end{restatable}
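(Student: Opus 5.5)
We need a bound on the error $e(t) \defeq \widebar{m}(t) - (3 - m(t))$. Since $\widebar{m}(0) = 3 - m(0)$, $e(0)=0$. Writing $3-m(t) = \int_0^t \lambda e^{-\lambda(t-\tau)}(3-m(t))\,d\tau + e^{-\lambda t}(3-m(t))$, the formula of \Cref{lemma:continuousnegation} gives
\begin{equation*}
    e(t) = \int_0^t \lambda e^{-\lambda (t-\tau)} \bigl( m(t) - m(\tau) \bigr) d\tau + e^{-\lambda t}\bigl(m(t) - m(0)\bigr).
\end{equation*}
Thus $\widebar{m}(t)$ is an exponentially weighted average of past values of $3-m$ with memory scale $1/\lambda$, and the error is controlled by how much $m$ moves over windows of length about $1/\lambda$.

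The key input is a Lipschitz bound on $m$, i.e., a bound on $|u(\vec m)|$ for the input autobidder. In the intended use, $m$ is one of the simulated variables whose utility is the bounded vector field (plus the negation gadget's contribution, which is zero for the input bidder by construction, since the output bidder always wins the shared item). Hence $|dm/dt| \le C$ for a constant $C$ independent of $\lambda$, at least while the invariant $\widebar{m}>1$ holds. Then $|m(t)-m(\tau)| \le C|t-\tau|$ and $|m(t)-m(\tau)|\le 0.8$, since $m\in[1.1,1.9]$.

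Split the integral at $t-\tau = \delta \defeq \ln\lambda/\lambda$. The recent part is bounded by
\begin{equation*}
    \int_0^{\delta} \lambda e^{-\lambda s} C s\, ds \le \frac{C}{\lambda}.
\end{equation*}
The distant part is bounded by $0.8\int_\delta^\infty \lambda e^{-\lambda s}ds = 0.8 e^{-\lambda\delta} = 0.8/\lambda$. For the boundary term, if $t\le\delta$ then $|m(t)-m(0)|\le C\delta$, and otherwise $e^{-\lambda t}\le 1/\lambda$. Altogether $|e(t)| \le O(\ln\lambda/\lambda)$ uniformly in $t$, matching the claimed $\Theta_\lambda(\ln\lambda/\lambda)$ rate (using the Lipschitz bound directly would even give $O(1/\lambda)$).

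The main obstacle is that the argument is circular. The Lipschitz bound on $m$, and the fact that the output bidder wins the shared item and so leaves $m$ unaffected, both presuppose $\widebar{m}(t) > 1$. That in turn follows from the error bound, since $3-m\ge 1.1$. I would close this with a continuity/bootstrap argument. Let $t^*$ be the first time $\widebar{m}\le 1.05$. On $[0,t^*)$ all the above estimates hold, so $\widebar{m}\ge 1.1 - O(\ln\lambda/\lambda) > 1.05$ for $\lambda$ large, contradicting continuity at $t^*$. Hence $t^*=\infty$.

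Alternatively, one can avoid the invariant entirely. Since $\widebar{m}$ is a convex combination of the values $\widebar{m}(0)$ and $3-m(\tau)\in[1.1,1.9]$, we get $\widebar{m}(t)\in[1.1,1.9]$ directly, and this is the cleaner route I would take.
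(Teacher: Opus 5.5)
Your argument is correct and follows the paper's: split the exponentially weighted integral at a lag of order $\ln\lambda/\lambda$, use the Lipschitz bound on $m$ for the recent part and the boundedness of $m$ for the distant part, and handle small $t$ separately. Your exact error identity gives both signs at once, whereas the paper bounds $\widebar{m}(t)$ from above and below separately.

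Your side remarks go slightly beyond the paper and are sound. The Lipschitz bound alone does give $|\widebar{m}(t)-(3-m(t))| \le C(1-e^{-\lambda t})/\lambda$. The continuation argument for the invariant, which the paper leaves implicit, is also correct. One caveat: the convex-combination version is still a continuation argument, not a way to avoid the invariant, because the formula of \Cref{lemma:continuousnegation} is only valid while $\widebar{m}$ stays in $[1.05,2]$.
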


Without the assumption regarding initialization, the continuous negation is still valid, but only for sufficiently large $t$.

\begin{figure}
    \centering
    \includegraphics[scale=0.36]{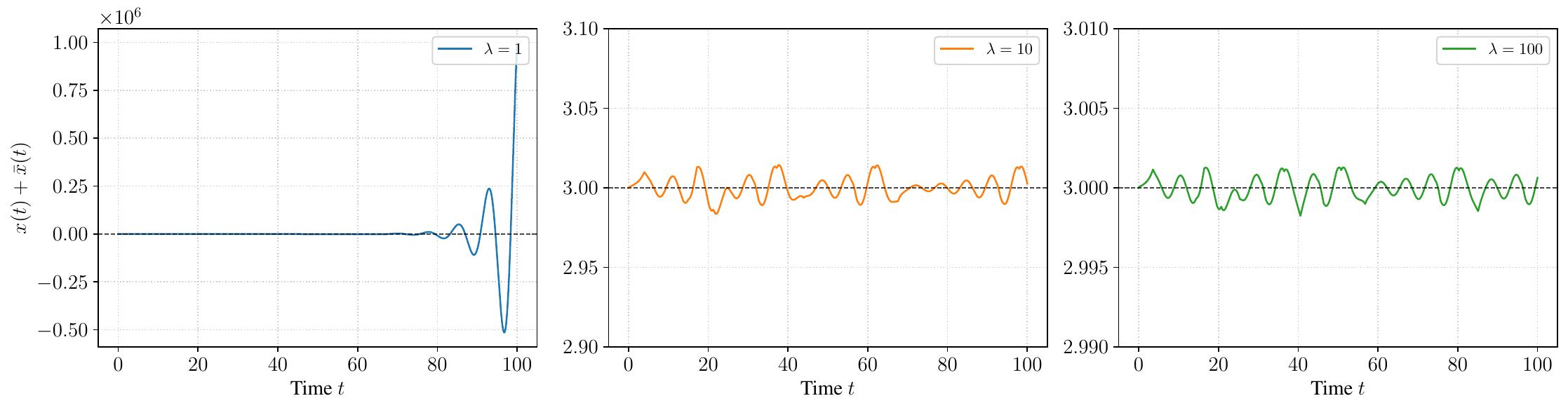}
    \caption{\Cref{lemma:continuousnegation} in action concerning~\eqref{eq:augmented-Chua}. As $\lambda$ increases, $x(t) + \protect\widebar{x}(t)$ approaches 3.}
    \label{fig:lam}
\end{figure}

\subsubsection{Putting everything together}

Armed with~\Cref{lemma:nonlinear-sim,lemma:continuousnegation}, we now show how to (approximately) simulate any nonlinear continuous dynamical system of the form given in~\eqref{eq:nonlinear}. Let us first formalize our notion of approximate simulation.

\begin{definition}[Approximate simulation]
    \label{def:approx-sim}
    Consider a $d$-dimensional continuous dynamical system $\Sigma$ described by the vector field $F : \R^d \to \R^d$. We say that a continuous dynamical system $\hat{\Sigma}$, given by $\frac{d \vx}{dt} = \hat{F}(\vx)$, \emph{$\epsilon$-simulates} $\Sigma$ if $\| F(\vx(t)) - \hat{F}(\vx(t)) \| \leq \epsilon$ for any $t \geq 0$.
\end{definition}

We should caveat this definition by pointing out that the fact that the vector field is approximated to some small accuracy does not necessarily imply that the orbits of the two systems remain close. For example, if $\Sigma$ is a stable linear system right at the edge of stability, a small perturbation in $F$ can result in an unstable $\hat{\Sigma}$; so, the trajectories of the two systems diverge at an exponential rate. Yet, \Cref{def:approx-sim} is often enough to capture interesting properties of the original system, as we shall demonstrate in the sequel. 

We further note that near a \emph{hyperbolic} invariant set, $\epsilon$-orbits (obtained under the approximate vector field $\hat{F}$) \emph{do} remain within a neighborhood of some true orbit; this is known as the \emph{shadowing lemma}~\citep{Hammel88:Numerical}. The precondition of the shadowing lemma---the presence of a hyperbolic invariant set---is not satisfied for Chua's circuit, which is why we will need to carry out a deeper analysis to examine the behavior of its approximation (\Cref{sec:pos-topentr,sec:posLLE}).

In this context, we establish the following theorem.

\begin{restatable}{theorem}{simgen}
    \label{theorem:simulation-general}
    Consider any continuous dynamical system $\Sigma$ per~\eqref{eq:nonlinear} satisfying~\Cref{assumption:nonlinear}. There exists an autobidding system such that a projection of the dynamics induced by~\eqref{eq:continuous-time} $\epsilon$-simulates $\Sigma$ (\Cref{def:approx-sim}).
\end{restatable}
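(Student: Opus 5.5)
We build the autobidding system one gadget at a time, with one ``primal'' autobidder $i$ per coordinate $x_i$ of $\Sigma$. By \Cref{assumption:nonlinear}, after rescaling and shifting, $x_i\in[1.1,1.9]$. After the $(L+1)x_i$ preprocessing, each $h_i$ has $h_i'<0$. The goal is to design items so that the utility of autobidder $i$ equals
\[
u_i(\vec m)=\langle \mat{A}_{i,:},\vec m\rangle+h_i(m_i)+\text{error},
\]
where the error is small in terms of $\lambda$. Then the projection onto the primal multipliers $(m_1,\dots,m_d)$ gives a vector field $\hat F$ that is uniformly close to $F$.

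The construction has four steps.

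\textbf{Step 1 (negations).} For each $j$, attach a negation autobidder $\widebar{j}$ via \Cref{lemma:continuousnegation}, initialized with $\widebar m_j(0)=3-m_j(0)$. That gadget leaves $m_j$ untouched. By \Cref{lemma:approxint}, $|\widebar m_j(t)-(3-m_j(t))|\le \Theta(\ln\lambda/\lambda)$ for all $t\ge 0$. Hence $\widebar m_j\in(1,2)$ as long as $m_j$ stays in $[1.1,1.9]$, and we use this invariance throughout.

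\textbf{Step 2 (linear terms).} Write each linear term as $-|\mat A_{ij}|m_j$ or $+|\mat A_{ij}|m_j$.
\begin{itemize}
\item For a negative coefficient, add an item that autobidder $i$ values at $c\,|\mat A_{ij}|$ with $c$ large, say $c=2$. Autobidder $j$ (or a copy of $j$) values it at $|\mat A_{ij}|$. Since $m_i>1$ and $m_j<2$, autobidder $i$ wins and pays $|\mat A_{ij}|m_j$. Its utility therefore gains the constant $c|\mat A_{ij}|$ and the term $-|\mat A_{ij}|m_j$, while $j$'s utility is unchanged because $j$ loses.
\item For a positive coefficient, do the same with $\widebar{j}$ setting the price. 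Autobidder $i$ pays $|\mat A_{ij}|\widebar m_j\approx |\mat A_{ij}|(3-m_j)$, which equals $+|\mat A_{ij}|m_j$ plus a constant, with error $|\mat A_{ij}|\cdot\Theta(\ln\lambda/\lambda)$. The loser $\widebar j$ is again unaffected, so the negation dynamics of \Cref{lemma:continuousnegation} are preserved.
\item The diagonal entries $\mat A_{ii}$ can be folded into $h_i$, because only the term $h_i(m_i)-(L+1)m_i$ matters and it stays decreasing after choosing $L$ large.
\end{itemize}
We must check the win conditions carefully: bids must be ordered by a fixed margin uniformly over $[1.1,1.9]$ and the negation error. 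A factor-$2$ value gap together with multipliers in $(1,2)$ may only give ties at the boundary, so I would use a gap of $2.5$ to get strict margins.

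\textbf{Step 3 (nonlinear terms).} Give autobidder $i$ the item continuum of \Cref{lemma:nonlinear-sim}, valued only by $i$ and carrying only reserve prices. This contributes $h_i(m_i)$ up to a constant and affects nobody else.

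\textbf{Step 4 (constants).} All accumulated constants, together with the constant part of $\langle\mat A_{i,:},\cdot\rangle$ after the shift, are corrected exactly. For a positive correction, add an item valued only by $i$ with a low reserve price. For a negative correction, add an item that $i$ must buy at a fixed price above its value, enforced by a reserve price or a fixed bidder.

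By additivity of the auction across items, $u_i$ is the sum of these contributions. Hence $\|F(\vec m)-\hat F(\vec m)\|\le C\|\mat A\|\ln\lambda/\lambda$, which is at most $\epsilon$ for $\lambda$ large.

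The main obstacle is circularity. The negation bound of \Cref{lemma:approxint} assumes $m_j(t)\in[1.1,1.9]$ for all $t$, while that confinement is guaranteed only for $\epsilon$-perturbations of $F$, by the robustness clause of \Cref{assumption:nonlinear}. I would resolve this by a continuation (bootstrap) argument. Let $T^*$ be the first exit time from a slightly enlarged box. On $[0,T^*]$ the perturbation is at most $\epsilon$, so by robustness the orbit stays inside $[1.1,1.9]$, contradicting the exit. This gives $T^*=\infty$.

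Two things need verifying here. First, the argument must use the version of the confinement statement applying to time-dependent perturbations of size $\epsilon$, as the assumption is phrased. Second, the error in \Cref{lemma:approxint} must depend on $\lambda$ and the bounded speed $|\dot m_j|\le\|F\|_\infty+\epsilon$, not on the horizon. If a finite-item version is desired, \Cref{lemma:integral-approx} replaces Step 3 at an extra $\epsilon$ cost, with symmetric copies of the primal bidders.
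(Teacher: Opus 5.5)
Your proposal is correct and matches the paper's construction. It uses negation bidders $\widebar{j}$ initialized at $3-m_j(0)$, and realizes linear terms by items that $i$ always wins at a price set by $j$ (negative coefficients) or by $\widebar{j}$ (positive coefficients, including the positive diagonal created by the $(L+1)x_i$ shift, with negative diagonal entries folded into $h_i$). It then adds the reserve-price continuum of \Cref{lemma:nonlinear-sim} for $h_i$ and separate items for the constants; your reserve price just above the item's value for negative constants is a simpler substitute for the paper's auxiliary bidder pinned at multiplier $2$.

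Your continuation argument is more careful than the paper, which simply asserts that \Cref{assumption:nonlinear} and \Cref{lemma:approxint} keep all multipliers in $[1.05,1.95]$. You are right that this step tacitly requires the robustness clause to cover time-dependent perturbations of size $\epsilon$, because the projected error depends on the extra state $\widebar{m}$.
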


By ``projection of the dynamics'' above we simply mean a restriction into a subset of its coordinates.

The proof proceeds by first converting the linear component of the system into a \emph{competitive} linear system. This is done using the continuous negation gadget (\Cref{lemma:continuousnegation}). To simulate the competitive linear system, we make use of certain gadgets developed by~\citet{Leme24:Complex}, while the nonlinear component can be recovered using~\Cref{lemma:nonlinear-sim}. The overall structure of the reduction was shown earlier in~\Cref{fig:reduction}. We defer the detailed proof to~\Cref{sec:proofsChua}.

It is worth comparing our simulation approach with that of~\citet{Leme24:Complex} concerning linear dynamical systems. While their notion of simulation is exact, it requires applying in addition a linear transformation. On the other hand, our simulation is with respect to a certain \emph{projection}---that is, a subset of the multipliers---of the autobidding system. This is an important distinction in linear systems, and turns out to be crucial for simulating nonlinear systems. Specifically, if one applies a linear transformation to a system as in~\eqref{eq:nonlinear}, one ends up with a nonlinearities that contain \emph{coupled} terms of the form $h_i(x_i, x_{i'}, \dots,)$. As we have stressed already, it is unclear how to recover such terms through the use of autobidding dynamics.

\subsection{Chua's circuit}\label{sec:chua}

Chua's circuit is a landmark continuous-time system whose central role in dynamical systems was highlighted earlier in~\Cref{sec:related}. The state equations are as follows.

\begin{equation}\label{eq:Chua}\begin{gathered}C_1 \frac{dx}{dt} = G(y - x) - g(x),\\ C_2 \frac{dy}{dt} = G(x - y) + z, \\
L \frac{dz}{dt} = -y - R_0 z.
\end{gathered}\end{equation}
Above, $g(x)$ is a three-segment piecewise-linear function representing the characteristic curve of Chua's diode, while $C_1$, $C_2$, $G$, $R_0$, and $L$ are parameters of the system. We defer the physical interpretation together with a precise specification of its parameters in~\Cref{appendix:Chuacircuit}. What is important for our purposes here is that Chua's circuit adheres to the structure imposed in~\Cref{sec:sim-general}. In particular, the only nonlinear component appears in the evolution of $x$ in~\eqref{eq:Chua}, and crucially is a function of $x$ itself. For the set of parameters we consider it also has bounded orbits, so we can apply our general simulation result.


More concretely, let us examine how our reduction proceeds in this case. For the specific set of parameters we consider (\Cref{appendix:Chuacircuit}), we have $x(t) \in [-2.5, 2.5]$, $y(t) \in [-0.45, 0.45]$, and $z(t) \in [-9.5, 9.5]$. The first step is to rescale and shift each variable such that the system remains bounded in $[1.1, 1.9]^3$. We will not spell out this step here, as it complicates the ensuing expressions, but is accounted for in our simulations. The more interesting step first involves making sure that the nonlinearity $\tilde{g}$ has a negative derivative (as required by~\Cref{lemma:nonlinear-sim}), which for the specific set of parameters we consider can be achieved by adding and subtracting $x$ in the first differential equation. And second, the use of continuous negation (\Cref{lemma:continuousnegation}) to create a competitive linear system. These steps are summarized below (without accounting for shifting and rescaling).

\begin{equation}
\label{eq:augmented-Chua}
\begin{aligned}
    C_1 \frac{dx}{dt} &= 3 - \widebar{x} + G(3 - \widebar{y}) + \tilde{g}(x), & \qquad 
    \frac{d \widebar{x}}{dt} &= 3 \lambda - \lambda x - \lambda \widebar{x}, \\
    C_2 \frac{dy}{dt} &= G (3 - \widebar{x}) - G y + (3 - \widebar{z}), & \qquad 
    \frac{d \widebar{y}}{dt} &= 3 \lambda - \lambda y - \lambda \widebar{y}, \\
    L \frac{dz}{dt} &= -y - R_0 z, & \qquad 
    \frac{d \widebar{z}}{dt} &= 3 \lambda - \lambda z - \lambda \widebar{z}.
\end{aligned}
\end{equation}

The validity of the simulation rests on using a large enough $\lambda > 0$. \Cref{fig:lam} shows the effect of $\lambda$ on $x(t) + \widebar{x}(t)$. We find that, as expected, small values of $\lambda$ break the consistency of the simulation. Visually, for larger values of $\lambda$ the orbits closely match the iconic double scroll attractor of Chua's circuit. In particular, \Cref{fig:autobidding-chaos} from our introduction corresponds to the rescaled and shifted version of~\eqref{eq:augmented-Chua} under $\lambda = 100$.

\subsubsection{Positive Lyapunov exponent}
\label{sec:posLLE}

Beyond visual confirmation, we dive deeper into the properties of the autobidding system produced by our reduction. First, we estimate the largest \emph{Lyapunov exponent}, which measures the rate of exponential separation; \Cref{appendix:LLE} provides further background together with a description of the standard way in which we performed the estimation. 

We find that the largest Lyapunov exponent of our autobidding system is $\approx 0.14214$. This is obtained again using $\lambda = 100$. The existence of a positive Lyapunov exponent in a bounded system provides strong confirmation for the onset of chaos.

\subsubsection{Positive topological entropy}
\label{sec:pos-topentr}

Finally, we examine whether the approximate version of Chua's circuit given in~\eqref{eq:augmented-Chua} has positive \emph{topological entropy} (\Cref{sec:top-entropy}). For Chua's circuit, \citet{Galias97:Positive} provided the first rigorous proof of positive topological entropy through a computer-assisted proof, in particular through the use of interval arithmetic. To analyze~\eqref{eq:augmented-Chua}, we follow the approach of~\citet{Galias97:Positive}. The overall framework is quite elaborate, and its presentation is deferred to~\Cref{app:galias_framework}. The key step in which Galias made use of interval arithmetic has to with how certain quadrangles map under the \emph{Poincar\'e map} (\Cref{sec:Poincare}) induced by Chua's circuit. To conclude the existence of infinitely many periodic orbits, one needs to guarantee a specific geometric property (\Cref{theorem:deformed}). We show that for a sufficiently large $\lambda$, \eqref{eq:augmented-Chua} indeed satisfies the required geometric property (\Cref{fig:quadrangles-approx}). Interestingly, for this to be the case, $\lambda$ needs to be considerably larger than what used in~\Cref{fig:autobidding-chaos} and the Lyapunov exponent calculation.

\begin{figure}[t]
    \centering
    \includegraphics[scale=0.42]{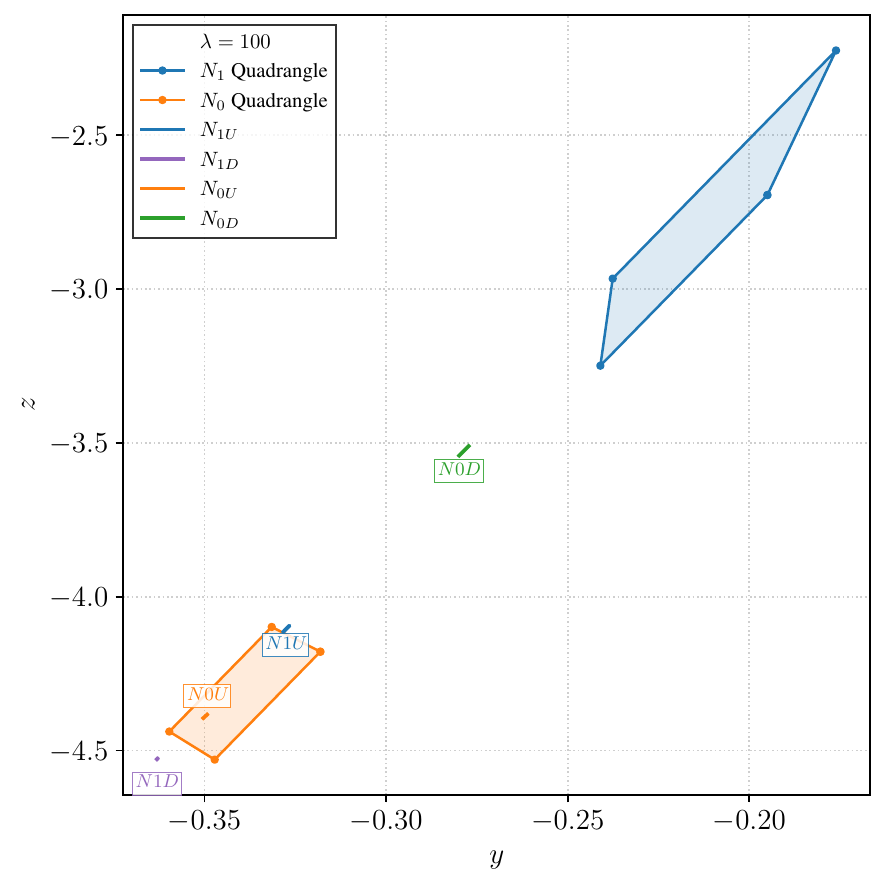}
    \includegraphics[scale=0.42]{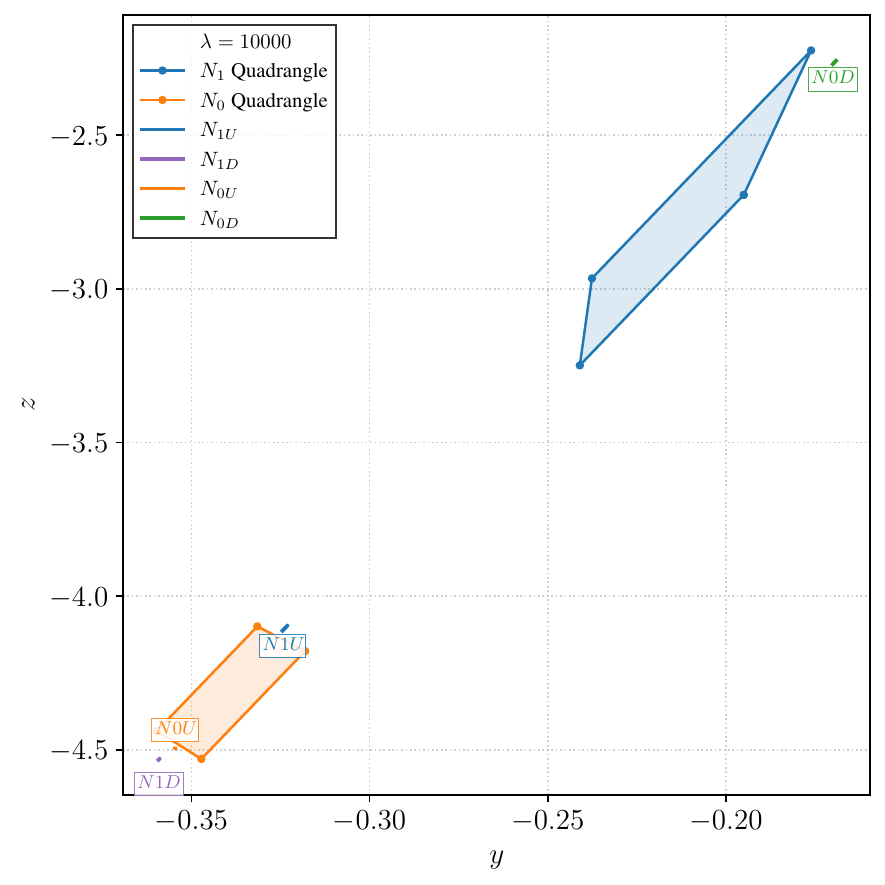}
    \caption{Verification that the approximate version of Chua's circuit given in~\eqref{eq:augmented-Chua} satisfies the precondition of~\Cref{theorem:deformed}. A detailed, self-contained overview of the significance of this geometric condition and its relation to the iconic (deformed) horseshoe map is given in~\Cref{app:galias_framework}.}
    \label{fig:quadrangles-approx}
\end{figure}

While we do not carry out the rigorous verification using interval arithmetic, \Cref{fig:quadrangles-approx} provides further strong evidence that, when $\lambda$ is large enough, the continuous negation gadget does not qualitatively alter the behavior of the system.

\section{Chaos in discrete-time autobidding dynamics}
\label{sec:discrete}

In the second part of the paper, we analyze the behavior of discrete-time autobidding dynamics. We first focus on a particular discrete-time version of the previously analyzed continuous-time dynamics~\eqref{eq:continuous-time}, namely
\begin{equation}
    \label{eq:expmap}
    \vec{m}^{(t+1)} \defeq \vec{m}^{(t)} \circ \exp( \eta u(\vec{m}^{(t)}) ),
\end{equation}
for some \emph{learning rate} $\eta > 0$. As we shall show formally, the learning rate dictates the behavior of the dynamics. Above, we use the shorthand notation $\vm = (m_1, \dots, m_n)$ and 
\begin{equation*}
    \exp (\eta u(\vm^{(t)})) = (\exp (\eta u_1( \vm^{(t)} ), \dots, \exp(\eta u_n(\vm^{(t)})),
\end{equation*}
where $\circ$ is the component-wise product. The dynamics~\eqref{eq:expmap} produced by the exponential map can be viewed as an instance of the classic \emph{mirror descent} algorithm~\citep{NemirovskiYudin83:Problem} under the entropic regularizer.

\begin{claim}[Entropic mirror descent]
    \label{claim:entropic}
    The update rule~\eqref{eq:expmap} can be equivalently expressed as $m_i^{(t+1)} \in \argmax_{m_i \geq 0} \left\{ \eta m_i u_i(\vm^{(t)}) - m_i \log m_i + m_i \log m_i^{(t)} + m_i - m_i^{(t)} \right\} $ for all autobidders $i \in [n]$.
\end{claim}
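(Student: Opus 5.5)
The plan is a direct first-order optimality computation, done separately for each autobidder $i \in [n]$. Fix $i$, write $g \defeq u_i(\vm^{(t)})$ (a constant with respect to the optimization variable), and let $m_i^{(t)} > 0$. Define
\[
\phi(m_i) \defeq \eta m_i g - m_i \log m_i + m_i \log m_i^{(t)} + m_i - m_i^{(t)}
\]
on $m_i \geq 0$, with the convention $0 \log 0 = 0$. The objective is $\eta g$ times $m_i$ minus the unnormalized KL divergence $D(m_i \,\|\, m_i^{(t)}) = m_i \log(m_i/m_i^{(t)}) - m_i + m_i^{(t)}$, which is why the update is an instance of entropic mirror descent.

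First, I will establish that $\phi$ is strictly concave on $(0,\infty)$ by computing $\phi''(m_i) = -1/m_i < 0$. It is continuous on $[0,\infty)$ and tends to $-\infty$ as $m_i \to \infty$, since the $-m_i \log m_i$ term dominates. Hence a maximizer exists and is unique.

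Next, I will rule out the boundary point $m_i = 0$. Near zero, $\phi'(m_i) = \eta g - \log m_i + \log m_i^{(t)} \to +\infty$, so $\phi$ is increasing near the origin and the maximizer is interior. Setting $\phi'(m_i) = 0$ then gives $\log m_i = \log m_i^{(t)} + \eta g$, that is, $m_i = m_i^{(t)} \exp(\eta u_i(\vm^{(t)}))$. This is exactly the $i$th coordinate of~\eqref{eq:expmap}, so the $\argmax$ is the singleton $\{m_i^{(t+1)}\}$ and the membership holds.

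There is no real obstacle here. The only point requiring care is the boundary behavior at $m_i = 0$, which the divergence of $\phi'$ handles. I would also note explicitly that positivity of $m_i^{(0)}$ propagates to all $t$, because the update multiplies by a positive factor; this ensures $\log m_i^{(t)}$ is always well defined.
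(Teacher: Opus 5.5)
Your argument is correct. The paper states this claim without proof, treating it as the standard fact that mirror descent with the unnormalized KL divergence as the Bregman term yields the multiplicative update. Your first-order computation is exactly the routine verification left implicit there: strict concavity via $\phi''(m_i) = -1/m_i$, ruling out the boundary because $\phi'(m_i) \to +\infty$ as $m_i \to 0^+$, and solving $\phi'(m_i) = 0$ to obtain $m_i = m_i^{(t)} \exp(\eta u_i(\vm^{(t)}))$. It even shows slightly more than stated: the $\argmax$ is a singleton, so the membership holds as an equality.
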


We also examine the usual gradient descent algorithm, which can again be viewed as mirror descent but with \emph{Euclidean} regularization:
\begin{equation}
    \label{eq:GD}
    m_i^{(t+1)} = \Pi_{\R_{\geq 0}} ( m_i^{(t)} + \eta u_i(\vm^{(t)})) = \max\{0, m_i^{(t)} + \eta u_i(\vm^{(t)}) \} \quad \forall i \in [n].
\end{equation}

Although the utility function of each autobidder may not be continuous, we will restrict our attention to dynamical systems that are continuous. It is worth noting that continuity in the utility functions can always be enforced by assuming a continuum of items.

We begin by analyzing the discrete-time dynamical system~\eqref{eq:expmap} in the setting of two autobidders with symmetric valuation profiles (\Cref{sec:symmetry}). Even in this simple setting, we establish that, when the learning rate is large enough, it can exhibit \emph{Li-Yorke chaos} (\Cref{theorem:Li-Yorke-auto}); this holds for an infinite parametric class of (symmetric) valuation profiles. \Cref{sec:Ricker} then makes a more general connection: under a suitable valuation profile, the system~\eqref{eq:expmap} can exactly recover the classic Ricker model (\Cref{thm:recover-Ricker}). In fact, the upshot is that the symmetric system we analyze in~\Cref{sec:symmetry} is just an instance of the Ricker model. We go on to show that gradient descent on a suitably constructed instance recovers the classic \emph{logistic map} (\Cref{theorem:logistic}). The underlying market is particularly natural and simple. Just like~\Cref{sec:Chua}, these connections allows us to inherit many theoretical insights from dynamical systems to the emerging study of autobidding dynamics.

\subsection{Symmetric systems}
\label{sec:symmetry}

We first turn to \emph{symmetric} instances, by which we mean problems in which $u_i(m, \dots, m) = u_{i'}(m, \dots, m)$ for any $m > 0$ and $i, i' \in [n]$. For such systems, it follows inductively that if we initialize at a point of the form $(m, m \dots, m)$, we maintain the invariance $m^{(t)}_{i} = m^{(t)}_{i'}$ for all $i, i' \in [n]$ under~\eqref{eq:expmap}. One consequence of this is that the resulting system is essentially one-dimensional, making it more analytically tractable.

In fact, we will show that chaos emerges even in the simple case of two autobidders and two items. In particular, we consider a parametric family of valuations given by
\begin{equation}
    \label{eq:param-valuations}
    \mat{V} \defeq \begin{bmatrix}
        v  & 1 \\
        1 & v
    \end{bmatrix}
\end{equation}
for some parameter $v > 1$. In words, each autobidder strictly prefers their respective item---the first autobidder favors the first item and the second autobidder favors the second item; as $v \gg 1$, this gap grows larger. It is clear that this instance adheres to the notion of symmetry defined above: when they both employ the same multiplier, each secures their preferred item valued at $v$ and for a price equal to $m$. Specifically, the one-dimensional map under the exponential dynamics~\eqref{eq:expmap} is given by $F(m) = m e^{\eta (v - m)}$.

It is worth pointing out that this simple instance already reveals fundamental differences between utility maximization and value maximization subject to RoS constraints. Specifically, under a suitable tie breaking rule, there are equilibria in which both items are procured by the same autobidder. When $v \gg 1$, such an equilibrium delivers only about $50 \%$ of the optimal welfare, a price of anarchy bound which is tight~\citep{Aggarwal19:Autobidding}.

From a dynamical point of view, we find that such instances can support a broad range of behaviors depending on the learning rate. Pictorially, \Cref{fig:cobweb} portrays the \emph{cobweb plots} for different values of $\eta$. The plots illustrate the system's evolution as the learning rate increases: the transition from asymptotic stability (left), to stable limit cycles (middle), and finally to complex aperiodicity (right). We will formalize this behavior, showing that, for a large enough learning rate, the system exhibits \emph{Li-Yorke chaos} (\Cref{theorem:Li-Yorke-auto}). A neat representation of this transition can be found in the \emph{bifurcation diagram} of the system, presented in~\Cref{sec:bifurcation}.

Before we do so, let us first comment on the other two regimes. First, when $\eta$ is small enough so that $| 1 - \eta v| < 1$, the unique fixed point---which is $v$---is \emph{linearly stable}~\citep{Strogatz94:Nonlinear}, which implies local asymptotic stability; this follows by inspecting the absolute value of the derivative of the map---which is differentiable for the particular system we consider---at the fixed point. This agrees with the analysis of~\citet{Leme24:Complex} in continuous time, where they proved global convergence through an application of the Poincar\'e-Bendixson theorem. In the intermediate regime in terms of the learning rate, the system ends up at a limit cycle (\Cref{fig:cobweb}, middle). In particular, when $\eta = \ln 3 \approx 1.0986$, the system (with $v = 2$) has the period-$2$ orbit $1 \to e^{\ln 3} = 3 \to 1$; for other values of $\eta$ in that neighborhood, the system still ends up at a period-$2$ orbit, as is also clearly illustrated in the bifurcation diagram (\Cref{fig:bifurcation}).

\begin{figure}[t]
    \centering
    \includegraphics[scale=0.5]{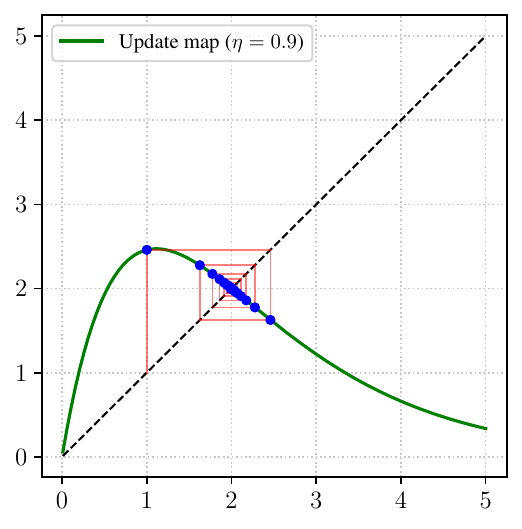}
    \includegraphics[scale=0.5]{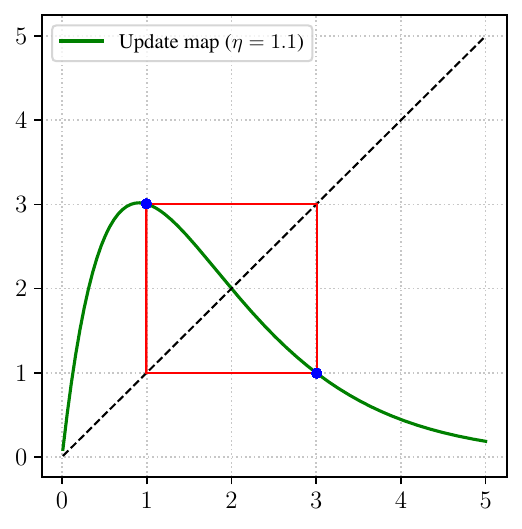}
    \includegraphics[scale=0.5]{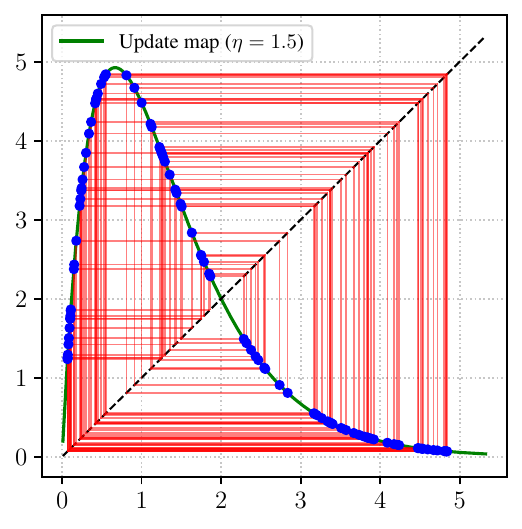}
    \caption{Cobweb plots for different values of the learning $\eta \in \{0.9, 1.1, 1.5\}$ under the dynamics~\eqref{eq:expmap} and the valuations given in~\eqref{eq:param-valuations} for $v = 2$. We refer to~\Cref{sec:cobweb} as to how the plots are produced.}
    \label{fig:cobweb}
\end{figure}

We now formalize that for sufficiently large $\eta$, the system exhibits Li-Yorke chaos. To do so, it suffices to establish the existence of orbits with period $3$, as formalized below.

\begin{restatable}[Period 3]{lemma}{periodthree}
    \label{lemma:period3}
    Let $F$ be the map associated with ~\eqref{eq:expmap} under the valuation profile given in~\eqref{eq:param-valuations}. For any $\eta \geq \max \{ 1, \frac{1}{v-1} \ln (3v - 1) \}$, 
    there exists a point $m \in (1, v)$ such that $F^{(3)}(m) = m$, $F(m) \neq m$, and $F^{(2)}(m) \neq F(m)$.
\end{restatable}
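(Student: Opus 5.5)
The plan is to apply the intermediate value theorem to $G(m) \defeq F^{(3)}(m) - m$ on $[1, v]$, where $F(m) = m e^{\eta(v-m)}$. I will show that $G(1) < 0$ and that $G(m) > 0$ for $m$ slightly below $v$. This gives a root $m^\ast \in (1, v)$. A separate observation then rules out a fixed point of $F$.

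\textbf{Step 1: excluding lower periods.} On $(0,\infty)$ the only fixed point of $F$ is $v$, since $F(m) = m$ iff $e^{\eta(v-m)} = 1$. Any root $m^\ast \in (1,v)$ of $G$ therefore satisfies $F(m^\ast) \neq m^\ast$. Its minimal period divides $3$ and is not $1$, so it equals $3$. In particular $F^{(2)}(m^\ast) \neq F(m^\ast)$, because otherwise $F(m^\ast)$ would be a fixed point, i.e. equal to $v$, and then $m^\ast = F^{(3)}(m^\ast) = v$, a contradiction. So everything reduces to finding a sign change of $G$ on $(1,v)$.

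\textbf{Step 2: the sign at $m = 1$.} Set $b \defeq F(1) = e^{\eta(v-1)}$. The hypothesis $\eta \geq \frac{1}{v-1}\ln(3v-1)$ is exactly $b \geq 3v - 1$. Then $c \defeq F^{(2)}(1) = b e^{\eta(v-b)} > 0$, and
\[
F^{(3)}(1) = c\, e^{\eta(v - c)} < c\, e^{\eta v} = e^{\eta(v-1)} e^{\eta(2v - b)} = e^{\eta(3v - 1 - b)} \leq 1 .
\]
The strict inequality comes from $c > 0$. Hence $G(1) < 0$, and the threshold $\ln(3v-1)/(v-1)$ is exactly what this computation needs.

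\textbf{Step 3: the sign just left of $v$.} We have $F'(v) = 1 - \eta v$, so $(F^{(3)})'(v) = (1-\eta v)^3$. If $\eta v > 2$, this is $< -1$, hence $G'(v) < 0$ while $G(v) = 0$, and therefore $G(m) > 0$ for $m \in (v - \delta, v)$. I must verify $\eta v > 2$ from the hypotheses. For $v \geq 2$ it follows from $\eta \geq 1$. For $1 < v < 2$ I will show $\frac{v}{v-1}\ln(3v-1) > 2$ by an elementary one-variable estimate; this quantity blows up as $v \to 1^+$ and equals $2\ln 5$ at $v = 2$.

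This elementary check, including the borderline regimes, is the main obstacle. It is where the condition $\eta \geq 1$ enters, and I expect it to need a short calculus argument. For example, I would write $v = 1 + s$ and bound $\ln(2 + 3s) \geq \ln 2 + \tfrac{3s}{2+3s}$. If the derivative route becomes messy, a fallback is to evaluate $G$ at a concrete point such as $F^{-1}$-preimages near $v$. With Steps 2 and 3 in hand, the intermediate value theorem on $[1, v-\delta']$ finishes the proof via Step 1.
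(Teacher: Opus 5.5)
Your proof is correct and is essentially the paper's argument. The paper applies the intermediate value theorem on $(1,v)$ to $g(m) = 3v - m - F(m) - F^{(2)}(m)$, which has the same sign as your $G$ because $F^{(3)}(m) = m\,e^{\eta g(m)}$; it gets $g(1) < 0$ from exactly your computation at $m=1$, and $g > 0$ just left of $v$ from $g'(v) = -3 + 3\eta v - \eta^2 v^2 < 0$.

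The step you flag as the main obstacle is unnecessary. You only need $(F^{(3)})'(v) = (1-\eta v)^3 < 1$, which holds whenever $\eta v > 0$, so there is no need to prove $\eta v > 2$. Your Step 1 also rules out $F^{(2)}(m) = F(m)$ more cleanly than the paper, without invoking $\eta \geq 1$.
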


The proof is deferred to~\Cref{sec:proofs-discrete}. Combining with the famous theorem of~\citet{Li75:Period}, we arrive at the following result.

\begin{theorem}[Autobidding dynamics can exhibit Li-Yorke chaos]
    \label{theorem:Li-Yorke-auto}
    There is an infinite class of symmetric autobidding instances in which the dynamical system~\eqref{eq:expmap} exhibits Li-Yorke chaos.
\end{theorem}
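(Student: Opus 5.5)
The plan is to apply the Li--Yorke theorem (``period three implies chaos'') to the one-dimensional map obtained by restricting~\eqref{eq:expmap} to the diagonal. For each $v > 1$ we take the two-bidder, two-item instance with valuation matrix~\eqref{eq:param-valuations}. This family is indexed by a continuous parameter, so it is infinite. Each member is symmetric in the sense of \Cref{sec:symmetry}: when both bidders use the same multiplier $m$, each wins its preferred item, worth $v$, and pays the competing bid $m \cdot 1 = m$.

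First, we use symmetry to reduce the dynamics. By induction, the diagonal $\{(m,m) : m > 0\}$ is invariant under~\eqref{eq:expmap}. On the diagonal the dynamics are governed by the scalar map $F(m) = m e^{\eta(v-m)}$, which is continuous as a self-map of $(0,\infty)$.

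Second, we fix a learning rate $\eta \geq \max\{1, \frac{1}{v-1}\ln(3v-1)\}$. \Cref{lemma:period3} then provides a point $m \in (1,v)$ with $F^{(3)}(m) = m$, $F(m) \neq m$ and $F^{(2)}(m) \neq F(m)$.

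We also need $F^{(2)}(m) \neq m$, so that the orbit has least period exactly three. Suppose $F^{(2)}(m) = m$. Then $F^{(3)}(m) = F(m)$. Combined with $F^{(3)}(m) = m$, this gives $F(m) = m$, contradicting the lemma. Hence the orbit has period exactly $3$.

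Third, we invoke the theorem of \citet{Li75:Period}. It is stated for a continuous map of an interval $J$ into itself, whereas our $F$ acts on $(0,\infty)$. So the main technical step is to find a compact invariant interval containing the period-$3$ orbit.

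We compute $F'(m) = e^{\eta(v-m)}(1-\eta m)$. Thus $F$ is unimodal with its maximum at $m = 1/\eta$, and the maximal value is $M \defeq F(1/\eta) = e^{\eta v - 1}/\eta$. This gives $F((0,\infty)) \subseteq (0, M]$.

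Now take $J = [F(M), M]$, the standard trapping interval for a unimodal map. Since $F(m) \to 0$ as $m \to \infty$ and $F$ is decreasing beyond $1/\eta$, we check that $F(J) \subseteq J$. Moreover, every orbit enters $J$ after finitely many steps, and periodic points lie in $J$, so the period-$3$ orbit lies in $J$.

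If this bookkeeping becomes awkward, there is a simpler fallback. The Li--Yorke argument only uses the intervals between consecutive points of the periodic orbit, together with the intermediate value theorem. So it applies verbatim to $J = [\min \text{orbit}, \max \text{orbit}]$ with $F$ continuous on $(0,\infty)$. Alternatively, one can apply the theorem to $F$ on any large compact interval $[\delta, M]$, since the definition of chaos only concerns the existence of periodic points of every period and an uncountable scrambled set.

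Finally, we lift the conclusions back to the full system. Li--Yorke yields periodic points of every period $k \geq 1$ and an uncountable scrambled set $S \subset J$. Under the embedding $m \mapsto (m,m)$, which is an isometry up to the factor $\sqrt{2}$, these correspond to periodic orbits and to a scrambled set of~\eqref{eq:expmap} in $\R^2$. The $\limsup$ and $\liminf$ conditions on distances are preserved under this embedding. Hence~\eqref{eq:expmap} exhibits Li--Yorke chaos for every $v > 1$ at a sufficiently large $\eta$, which proves \Cref{theorem:Li-Yorke-auto}.
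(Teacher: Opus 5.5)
Your proposal is correct and follows the paper's route. You restrict to the invariant diagonal, obtain a period-3 point of $F(m)=me^{\eta(v-m)}$ from \Cref{lemma:period3} for $\eta \geq \max\{1, \frac{1}{v-1}\ln(3v-1)\}$, and invoke \Cref{theorem:LiYorke} for every $v>1$. Your extra bookkeeping (least period three, the trapping interval $[F(M),M]$, and the lift via $m \mapsto (m,m)$) is sound, but the interval step is unnecessary: $(0,\infty)$ is itself an interval that $F$ maps continuously into itself, and that is all \Cref{theorem:LiYorke} requires.
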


Another interesting aspect of this dynamical system is that it has negative \emph{Schwarzian derivative}; for a $C^3$ map $F$, the Schwarzian derivative, $SF(m)$, is defined as 
\[
SF \defeq \frac{F'''}{F'} - \frac{3}{2} \left( \frac{F''}{F'} \right)^2.
\]
A simple calculation reveals the following.

\begin{restatable}{lemma}{Schwarzian}
    For any learning rate $\eta > 0$, the dynamical system~\eqref{eq:expmap} under the valuation profile given in~\eqref{eq:param-valuations} has negative Schwarzian derivative, $SF(m)$, for any $m \geq 0$.
\end{restatable}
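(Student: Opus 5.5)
The plan is to compute $SF$ directly for the explicit one-dimensional map $F(m) = m e^{\eta(v-m)}$ identified above. The first step is to confirm that this formula governs the symmetric dynamics, and that $F$ is smooth enough for $SF$ to make sense. For two autobidders under the profile~\eqref{eq:param-valuations} with a common multiplier $m$, each bidder wins its preferred item and pays $m$. So $u_i(m,m) = v - m$, and the exponential map~\eqref{eq:expmap} reduces to $F(m) = m e^{\eta(v-m)}$. This is real-analytic on all of $\R$, so $F \in C^3$. Since $SF$ is only defined where $F' \neq 0$, I will interpret the claim on $\{m \geq 0 : F'(m) \neq 0\}$. The single critical point $m = 1/\eta$ is excluded, which is where $SF = -\infty$ in the usual convention.

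Next I will factor out the exponential. Write $F(m) = e^{\eta v} m e^{-\eta m}$; the constant $e^{\eta v}$ cancels in $SF$, so it suffices to treat $G(m) = m e^{-\eta m}$. Differentiating repeatedly gives $G^{(k)}(m) = (-\eta)^{k-1} e^{-\eta m}(k - \eta m)$. Hence
\[
G' = e^{-\eta m}(1-\eta m), \quad G'' = -\eta e^{-\eta m}(2-\eta m), \quad G''' = \eta^2 e^{-\eta m}(3-\eta m).
\]
Set $s = \eta m$. Substituting,
\[
SG = \eta^2\left[\frac{3-s}{1-s} - \frac{3}{2}\frac{(2-s)^2}{(1-s)^2}\right] = \frac{\eta^2}{2(1-s)^2}\Bigl[2(3-s)(1-s) - 3(2-s)^2\Bigr].
\]

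The final step is to simplify the bracket. It equals $2(3 - 4s + s^2) - 3(4 - 4s + s^2) = -s^2 + 4s - 6 = -\bigl((s-2)^2 + 2\bigr)$. This is strictly negative for every real $s$. It follows that
\[
SF(m) = -\frac{\eta^2\bigl((\eta m - 2)^2 + 2\bigr)}{2(1-\eta m)^2} < 0
\]
whenever $\eta m \neq 1$. The bound holds for every $\eta > 0$ and indeed every $m$, not only $m \geq 0$.

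There is no real obstacle here; the step most prone to error is the algebra of the bracket, so I will double-check it at a sample point. At $s = 0$ the bracket is $6 - 12 = -6$, consistent with $-(4 + 2) = -6$. The only interpretive point is the critical point $m = 1/\eta$, which I will handle by the stated convention.
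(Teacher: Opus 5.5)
Your proposal is correct and matches the paper's proof: both differentiate $F(m) = m e^{\eta(v-m)}$ three times and arrive at $SF(m) = -\frac{\eta^2}{2(1-\eta m)^2}\bigl((\eta m-2)^2+2\bigr) < 0$ for $m \neq 1/\eta$. Factoring out $e^{\eta v}$ and substituting $s = \eta m$ are only cosmetic simplifications of the same calculation.
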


This imposes a strict regularity on the system, following a universal period-doubling route to chaos~\citep{Feigenbaum78:Quantitative} characteristic of \emph{S-unimodal maps} (\Cref{def:unimodal}). In particular, by Singer's theorem~\citep{Singer78:Stable}, the system admits at most one stable attracting periodic point (excluding boundary fixed points), precluding multistability. This guarantees that as the learning rate $\eta$ increases, the system follows a standard period-doubling route to chaos without the coexistence of competing stable cycles, thereby ensuring a structurally predictable transition into the chaotic regime (\Cref{fig:bifurcation}).

Interestingly, despite the emergence of chaos, we find empirically that the welfare and \emph{average} revenue of the system both converge to a value of $4$ when $v = 2$ (\Cref{fig:revenuewelfare}), matching the outcomes observed in the convergent regime (\Cref{fig:revenuewelfare}, left); a welfare and revenue of $4$ correspond to each autobidder winning their preferred item but at a price equal to their value for it. This shows that, in this example, the macroscopic behavior is robust to dynamic instability, as the time-averaged statistics smooth out the volatility of individual bidding strategies. In stark contrast, a system analyzed in the sequel (\Cref{fig:logistic-revenuewelfare}) shows how macroscopic performance can degrade in the chaotic regime.

It is worth noting that the chaos documented above is eliminated if one restricts the multiplier to be at least $1$; that is, $\vm^{(t+1)} = \max\{1, \vec{m}^{(t)} \circ \exp( \eta u(\vec{m}^{(t)}) ) \}$. This truncation is quite natural since choosing a multiplier strictly less than $1$ is a (weakly) dominated strategy. Even when the learning rate is large, these truncated dynamics avoid chaos and settle into a stable period-$2$ orbit---at least under the valuation profile given in~\eqref{eq:param-valuations}. This is another sense in which the chaos established in~\Cref{sec:Chua} is more robust.

In a similar vein, let us comment on mirror descent but with Euclidean regularization. Under the valuation profile given in~\eqref{eq:param-valuations}, it follows that $m^{(t+1)} = \max\{0, m^{(t)} + \eta (v - m^{(t)} ) \}$. Numerical simulations confirm that this map lacks the mechanism necessary for chaotic mixing and cannot support the complex behavior observed under~\eqref{eq:expmap}. In stark contrast, the next section provides an example where the Euclidean regularizer does produce chaos.

\begin{figure}[t]
    \centering
    \includegraphics[scale=0.4]{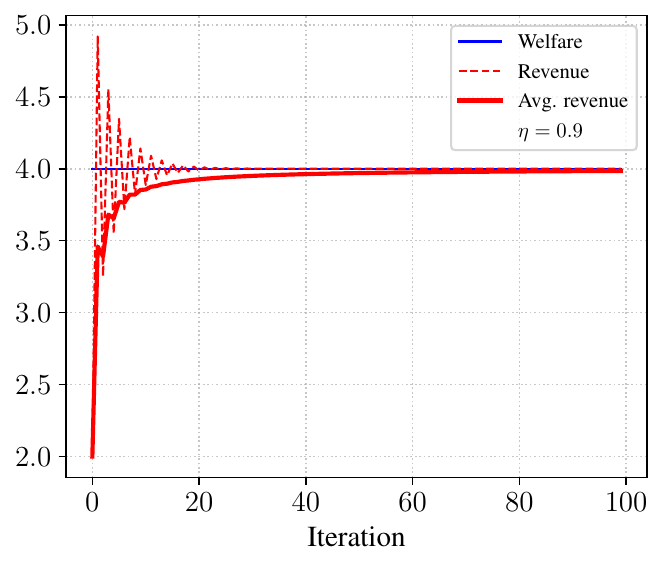}
    \includegraphics[scale=0.4]{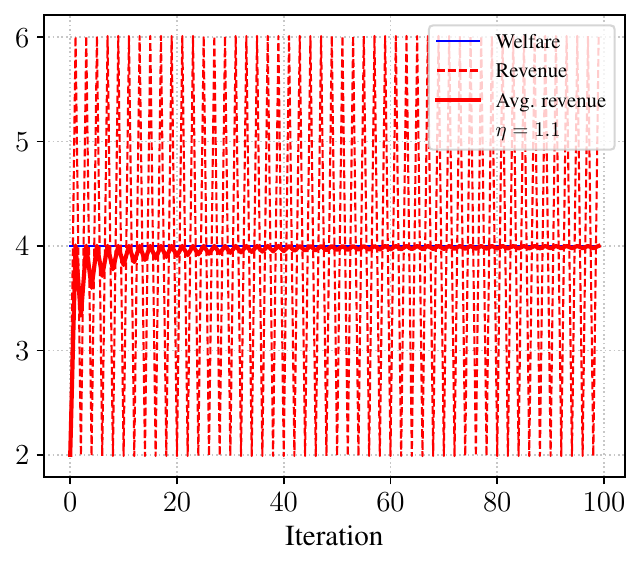}
    \includegraphics[scale=0.4]{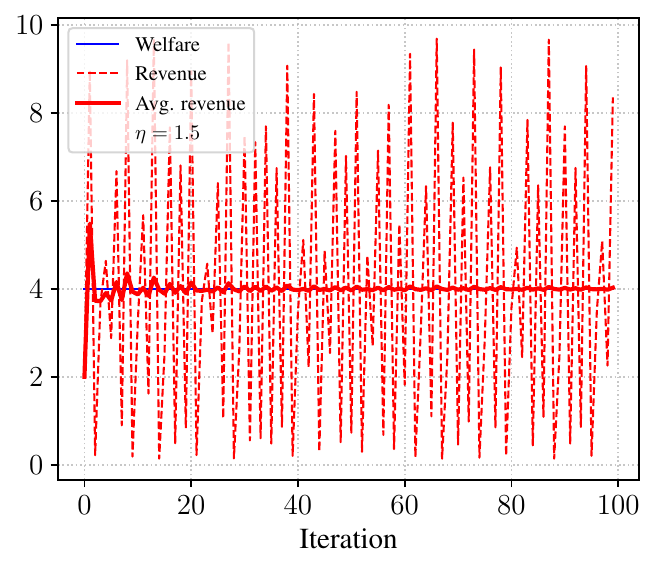}
    \caption{Welfare and revenue under the dynamics~\eqref{eq:expmap} and the valuations given in~\eqref{eq:param-valuations} for $v = 2$.}
    \label{fig:revenuewelfare}
\end{figure}

\subsection{Recovering classic dynamical systems}
\label{sec:Ricker}

Li-Yorke chaos is just one possible manifestation of chaos. As in~\Cref{sec:Chua}, our goal is to provide a more precise characterization by resorting to existing results. Also, one limitation of the previous result is that the advent of Li-Yorke chaos hinges on symmetric initialization, making the ensuing behavior rather brittle. In this subsection, we first observe that one can recover the well-known \emph{Ricker model}~\citep{Ricker54:Stock} via suitable autobidding instances, and this is so without requiring symmetry. In fact, the upshot is that the dynamical system analyzed in~\Cref{sec:symmetry} is itself an instance of the Ricker model. We then show that autobidding dynamics can recover the logistic map.

\paragraph{Ricker model} The Ricker population model was introduced in the context of stock and recruitment in fisheries. It describes the expected number in a population $N^{(t)}$ at time $t$. The evolution of the system is governed by the recursion
\begin{equation*}
    N^{(t+1)} = N^{(t)} \exp\left( r \left(1 - \frac{N^{(t)}}{k} \right) \right),
\end{equation*}
where $r$ is the intrinsic growth rate and $k$ the \emph{carrying capacity} of the environment---the maximum population size that can be sustained. It is a well-known fact that the Ricker model exhibits Li-Yorke chaos~\citep{Elaydi07:Discrete}. It can also be shown to exhibit other forms of chaos, such as \emph{topological transitivity} (\Cref{def:transitivity}) and \emph{sensitivity to initial conditions} (\Cref{def:sensitivity}). 

We observe that by interpreting $r$ as the learning rate and setting $\mat{V} = [[ 1, 1/k], [1/k, 1]]$ with $k > 1$, the dynamics~\eqref{eq:expmap} exactly recover the Ricker model.
\begin{theorem}[Autobidding dynamics subsume the Ricker model]
\label{thm:recover-Ricker}
    There is an autobidding instance in which the dynamics~\eqref{eq:expmap} recover the Ricker model.
\end{theorem}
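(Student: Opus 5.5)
The plan is to verify directly that the instance suggested just before the statement, namely two autobidders and two items with valuation matrix $\mat{V} = [[1, 1/k],[1/k, 1]]$ for some $k > 1$, turns the exponential dynamics~\eqref{eq:expmap} into the Ricker recursion once we identify $\eta$ with $r$. We restrict to the invariant diagonal $m_1 = m_2$, exactly as in~\Cref{sec:symmetry}. This is the same mechanism as the family~\eqref{eq:param-valuations}, up to rescaling the values by $1/v$, which is why the symmetric system of~\Cref{sec:symmetry} will turn out to be a Ricker map.

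\textbf{Step 1: the utilities on the diagonal.} Under uniform bid scaling with $m_1 = m_2 = m > 0$, autobidder 1 bids $m$ on item 1 and $m/k$ on item 2, and autobidder 2 bids symmetrically. Since $k > 1$ we have $m > m/k$, so there are no ties: each autobidder wins exactly its preferred item, which it values at $1$. By the second-price rule $p_{ij}(\vec{b}) = x_{ij}\max_{i'\neq i} b_{i'j}$, it pays its competitor's bid on that item, namely $m/k$. Hence $u_i(m,m) = 1 - m/k$ for both $i$.

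\textbf{Step 2: invariance of the diagonal.} Because the two utilities coincide on the diagonal, the update~\eqref{eq:expmap} sends $(m,m)$ to $(m',m')$ with
\[
m' = m\exp\bigl(\eta(1 - m/k)\bigr).
\]
This is the Ricker map with $N = m$, intrinsic growth rate $r = \eta$ and carrying capacity $k$. The boundary point $m = 0$ is a fixed point of both systems, so the correspondence holds on all of $\R_{\geq 0}$, and induction on $t$ finishes the argument.

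\textbf{Remarks and main obstacle.} As a consistency check, the family~\eqref{eq:param-valuations} is the scaling $\mat{V} = v\cdot[[1, 1/v],[1/v, 1]]$. On the diagonal its map $F(m) = me^{\eta(v - m)}$ is Ricker with $r = \eta v$ and carrying capacity $v$, recovering the upshot claimed for~\Cref{sec:symmetry}. The step I expect to need the most care is the claim that symmetry is not required. Off the diagonal, this two-bidder instance gives the coupled map $m_i' = m_i\exp(\eta(1 - m_j/k))$ whenever each bidder still wins only its own item, i.e.\ when $m_j/k < m_i < k\, m_j$. This coupled map is not literally Ricker in each coordinate. I would first try to realize the Ricker map from a single bidder's own multiplier instead. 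The idea is to replace the competitor on each item by reserve prices, in the spirit of the continuum gadget of~\Cref{lemma:nonlinear-sim}, so that the utility becomes $1 - m/k$ modulo constants as a function of the bidder's own $m$. The catch is that a continuum with density $\rho$ only produces utilities whose derivative $(v-p)\rho(p)$ is negative where $p > v$. So I would use items with value $v$ and reserve prices above $v$, and check whether the resulting range of $m$ covers the Ricker orbit. If it does not, I would settle for stating the theorem on the invariant diagonal, which already suffices for the statement as written.
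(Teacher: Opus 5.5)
This is correct and is essentially the paper's argument: the paper just observes that with $\mat{V} = [[1,1/k],[1/k,1]]$, $k>1$, and $r=\eta$, the dynamics~\eqref{eq:expmap} on the invariant diagonal reduce to $m \mapsto m\exp(\eta(1-m/k))$, which is exactly your Steps 1--2. Your caveat about asymmetric initializations is also accurate: off the diagonal this instance gives the coupled map $m_i' = m_i\exp(\eta(1-m_j/k))$, so the paper's side remark that symmetry is not required is not actually delivered by this instance, but the theorem as stated only needs the diagonal reduction and you can drop the single-bidder detour.
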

This connection allows us to inherit existing results from the dynamical systems literature to the autobidding setting. We highlight a notable implication below. (Detailed background concerning the relevant concepts from dynamical systems is given in~\Cref{sec:moreprels}.)

\begin{theorem}
    \label{thm:Devaney}
    There is an infinite class of autobidding instances in which the dynamical system~\eqref{eq:expmap} exhibits Devaney chaos on an attractor $\Lambda$. In particular, the system is sensitive to initial conditions.
\end{theorem}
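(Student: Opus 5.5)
Our plan is to reduce \Cref{thm:Devaney} to known results on the Ricker model via \Cref{thm:recover-Ricker}, so that all dynamical content is inherited. We first fix the instance: two autobidders, two items, valuations $\mat{V} = [[1, 1/k],[1/k,1]]$ with $k > 1$, and learning rate $\eta = r$. When $m_1 = m_2 = m$ and $m \geq 0$, each autobidder wins its preferred item and pays the competitor's bid $m/k$. Its utility is therefore $1 - m/k$, and~\eqref{eq:expmap} reduces to $F(m) = m\exp(r(1 - m/k))$, which is exactly the Ricker map. Rescaling $N = m/k$ gives the normalized map $f_r(N) = N e^{r(1-N)}$. This map is topologically conjugate to $F$ through a linear homeomorphism, and Devaney chaos and sensitivity are preserved under such conjugacies. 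Sensitivity survives because the conjugacy is bi-Lipschitz. By varying $k > 1$ (and, as in \Cref{sec:symmetry}, the family~\eqref{eq:param-valuations} after rescaling), we obtain infinitely many instances.

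Next we invoke the dynamical systems literature for $f_r$. The map $f_r$ is a $C^\infty$ unimodal map on $[0,\infty)$ with critical point $N = 1/r$. It has negative Schwarzian derivative, which is essentially the computation behind the Schwarzian lemma stated above. For $r$ large (past the Feigenbaum accumulation point), there is an invariant interval or Cantor-type set $\Lambda$ on which $f_r$ is Devaney chaotic. The cleanest route is to pick parameters where the critical orbit is eventually periodic (Misiurewicz parameters), or where $f_r$ has a period-3 orbit (as in \Cref{lemma:period3}). In either case there is a closed invariant set on which $f_r^{p}$ is semiconjugate or conjugate to a full shift. More precisely, period 3 yields a horseshoe $I_0, I_1$ with $f^{p}(I_j) \supseteq I_0 \cup I_1$. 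This gives a compact invariant $\Lambda$ on which $f^{p}$ is conjugate to the one-sided shift on two symbols; expansion on $\Lambda$ follows from negative Schwarzian together with the absence of attracting cycles near $\Lambda$.

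On the shift, density of periodic points and topological transitivity are standard. By the theorem of Banks et al., these two properties imply sensitivity on an infinite compact metric space, which gives Devaney chaos for $f^{p}$ on $\Lambda$. We then pass back to $f$ on $\Lambda \cup f(\Lambda) \cup \dots \cup f^{p-1}(\Lambda)$. Periodic points of $f^{p}$ are periodic for $f$, and transitivity of $f$ on the union follows by cyclically permuting the pieces. Finally, we transport everything through the conjugacy to the multiplier dynamics.

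The main obstacle is calling $\Lambda$ an attractor. A horseshoe obtained from period 3 is typically a repeller, not an attractor. To handle this, we would instead cite the known result (e.g., \citet{Elaydi07:Discrete}) that for suitable $r$ the Ricker map has an absolutely continuous invariant measure and its attractor is a finite union of intervals, on which it is transitive. Such $r$ include Misiurewicz parameters, which exist by Jakobson's theorem and where Collet–Eckmann-type results apply. This interval is $[f_r^2(1/r), f_r(1/r)]$, or a union of such intervals. On that attractor, periodic points are dense, so Devaney chaos and sensitivity follow as above. Should only the weaker statement be defensible, we would fall back to Devaney chaos on the invariant set $\Lambda$ together with sensitivity.
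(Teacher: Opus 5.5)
Your core argument matches the paper's. The paper also realizes the Ricker map on the invariant diagonal $m_1 = m_2$ of the instance $\mat{V} = [[1, 1/k],[1/k, 1]]$ with $\eta = r$ (\Cref{thm:recover-Ricker}). It then simply asserts, citing Elaydi, that the Ricker map is conjugate to the shift on some invariant set $\Lambda$, so Devaney chaos and sensitivity are inherited.

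You go beyond this in two ways. First, you make the shift structure concrete through the period-$3$ orbit of \Cref{lemma:period3}, and you correctly lift Devaney chaos from $f_r^2$ on $\Lambda$ to $f_r$ on $\Lambda \cup f_r(\Lambda)$. Your appeal to negative Schwarzian derivative for expansion is imprecise: uniform expansion needs Ma\~n\'e-type hypotheses, e.g.\ $\Lambda$ bounded away from the critical point. It is also unnecessary, since the strict horseshoe for $f_r^2$ given by period $3$ already yields, by standard results on interval maps, a compact invariant set on which $f_r^2$ is conjugate to the full two-shift.

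Second, you rightly note that such a $\Lambda$ is a repeller, not an attractor. You replace it by the metric attractor at a Misiurewicz parameter: a cycle of intervals on which $f_r$ is transitive and hence Devaney chaotic. This gives a genuine attractor of the one-dimensional map, at the cost of heavier S-unimodal theory (Misiurewicz's absolutely continuous invariant measure and the Guckenheimer/Blokh--Lyubich description of attractors). Misiurewicz parameters do not come from Jakobson's theorem, though. For the Ricker family you can get one directly from the intermediate value theorem: $f_r^{3}(1/r) - 1$ is positive at $r = 2$ and negative at $r = 3$, and the fixed point $N = 1$ is repelling for $r > 2$.

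One caveat applies equally to the paper. The diagonal is transversally repelling for the two-bidder system: in the region $m_2/k < m_1 < k m_2$ one has $m_1^{(t+1)}/m_2^{(t+1)} = (m_1^{(t)}/m_2^{(t)}) \exp ( r (m_1^{(t)} - m_2^{(t)})/k )$. So even your interval attractor is an attractor only for the restricted one-dimensional map, not for \eqref{eq:expmap} itself.
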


Sensitivity to initial conditions is a particularly intuitive manifestation of chaos, popularly known as ``butterfly effects.'' From a practical standpoint, it hinders numerical simulation since a tiny error---caused, for example, by floating-point arithmetic or measurement noise---gets exponentially amplified over time, rendering prediction impossible beyond a short time frame.

\begin{figure}[t]
    \centering
    \includegraphics[scale=0.4]{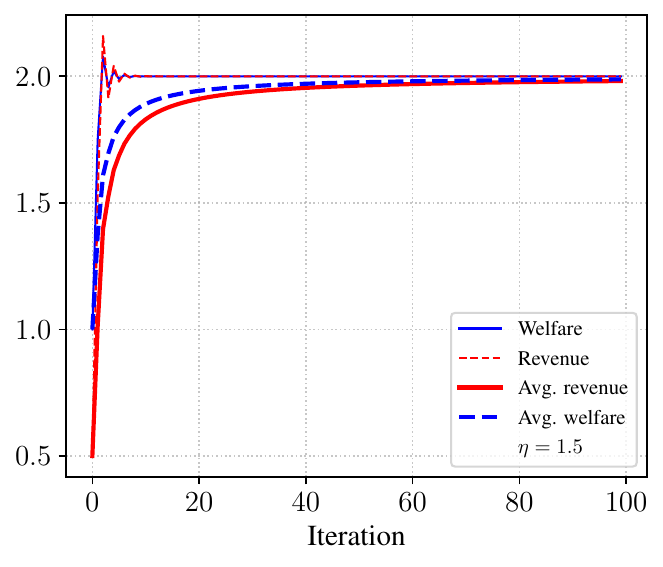}
    \includegraphics[scale=0.4]{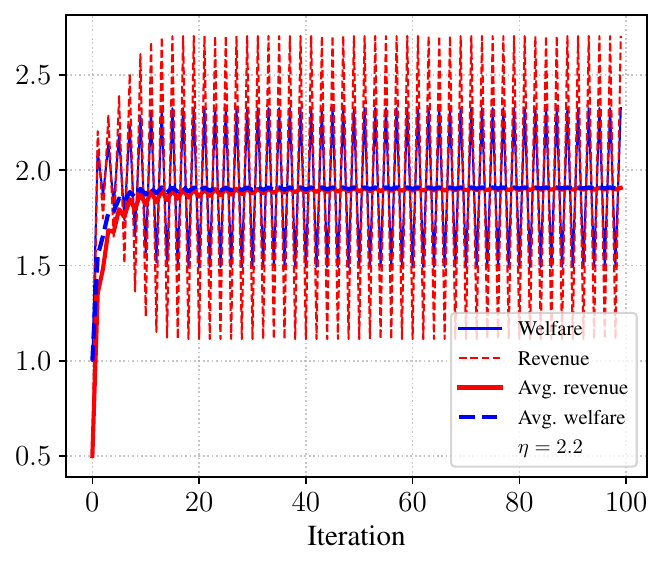}
    \includegraphics[scale=0.4]{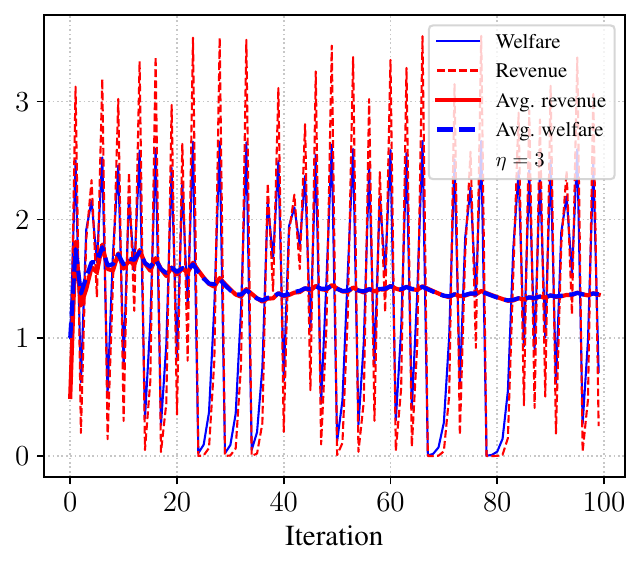}
    \caption{Welfare and revenue under the dynamics~\eqref{eq:GD} corresponding to~\Cref{theorem:logistic}.}
    \label{fig:logistic-revenuewelfare}
\end{figure}

\paragraph{Gradient descent and the logistic map} The foregoing results concern the mirror descent dynamics induced by the entropic regularizer. We now characterize the complexity of mirror descent but with Euclidean regularization, also known as gradient descent. Under a single autobidder, the dynamical system takes the form $m^{(t+1)} = \Pi_{\R_{\geq 0}} (m^{(t)} + \eta u(m^{(t)})) = \max\{0, m^{(t)} + \eta u(m^{(t)}) \}$. We observe that gradient descent in the autobidding setting can recover the famous \emph{logistic map}, one of the most well-studied dynamical systems. In particular, similarly to~\Cref{lemma:nonlinear-sim}, we first show how to create an autobidder whose utility matches any function satisfying certain properties.

\begin{lemma}
    \label{lemma:sim-max}
    Let $h : [0, \infty) \to \R$ be any differentiable function that is increasing in $[0, 1)$ and decreasing in $(1, \infty)$. Suppose further that $h(0) = 0$ and $h''(1)$ exists. There is an instance in which the utility of the autobidder is equal to $h$.
\end{lemma}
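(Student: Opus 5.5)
We build a single autobidder facing a continuum of items with reserve prices, mirroring \Cref{lemma:nonlinear-sim}. The key computation is the utility generated by an item density. Suppose the autobidder values items at $1$ per unit. Let $\rho(p)\,dp$ be the mass of items with reserve price $p$. With multiplier $m$, the bidder wins every item whose reserve is at most $m$ and pays that reserve. Its utility is therefore
\[
u(m)=\int_0^{m}(1-p)\rho(p)\,dp, \qquad u'(m)=(1-m)\rho(m).
\]
This derivative is positive for $m<1$ and negative for $m>1$ whenever $\rho>0$, which matches the sign pattern required of $h$. We set
\[
\rho(p)\defeq \frac{h'(p)}{1-p}, \qquad p\neq 1 .
\]
Since $h$ increases on $[0,1)$ and decreases on $(1,\infty)$, we get $\rho\geq 0$ on both sides of $1$, so it is a legitimate density. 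Then $u(m)=u(0)+\int_0^m h'(p)\,dp=h(m)-h(0)=h(m)$, using $u(0)=0=h(0)$. Reserves lie in $[0,\infty)$ and every item is valued by this bidder only, so no other bidder interferes. If nonnegative reserves cause trouble at $p=0$, we place the reserve as an infinitesimal and the utility is unaffected.

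The main obstacle is the point $p=1$, where the denominator $1-p$ vanishes. For $\rho$ to be a valid, finite-mass (or at least locally integrable) density near $1$, we need $h'(p)/(1-p)$ to stay bounded. Since $h$ is differentiable and has a maximum at $1$, we have $h'(1)=0$. Because $h''(1)$ exists, $h'(p)=h''(1)(p-1)+o(|p-1|)$, so $\rho(p)\to -h''(1)\geq 0$ as $p\to 1$. We simply define $\rho(1)\defeq -h''(1)$. Thus $\rho$ is bounded near $1$, and the integral defining $u$ converges and is differentiable with $u'=h'$ almost everywhere. Since $u$ is the integral of $h'$, it equals $h$ by the fundamental theorem of calculus. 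Here we need $h'$ integrable; differentiability of $h$ everywhere lets us appeal to the Henstock or Lebesgue version, or we simply assume $h'$ is locally integrable, as in the smooth examples of interest such as the logistic map.

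A second point to check is infinite mass at large $p$. For $m$ in any bounded range only reserves up to $m$ matter, so items with huge reserves are never won and contribute nothing. Hence local integrability suffices.

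Finally, we remark as in \Cref{lemma:integral-approx} that the continuum can be replaced, up to uniform error, by finitely many items via Riemann sums. The lemma itself, however, only asks for the continuum construction above.
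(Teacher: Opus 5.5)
Your proposal is correct and is essentially the paper's proof: the same continuum of unit-valued items with density $\rho(p) = h'(p)/(1-p)$ for $p \neq 1$ and $\rho(1) = -h''(1)$, giving $u(m) = \int_0^m (1-p)\rho(p)\,dp = h(m) - h(0) = h(m)$. Your additional checks only make explicit what the paper leaves implicit. These are that $h'(1)=0$, so $\rho$ has a finite limit at $1$, and that the fundamental theorem of calculus applies. The hedge about assuming integrability of $h'$ is unnecessary, because $h$ is monotone and everywhere differentiable on $[0,1]$ and on $[1,m]$, so $h'$ is automatically integrable there and the fundamental theorem of calculus holds.
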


\begin{proof}
    We consider a continuum of items represented using a density function $\rho : [0, \infty) \to \R_{\geq 0}$ parameterized on the reserve price. Each item is valued equally for $1$ per unit. We set $\rho(p) = \frac{h'(p)}{1 - p}$ for $p \neq 1$, which is nonnegative since, by assumption, $h'(p) > 0$ for $p \in (0, 1)$ and $h'(p) < 0$ for $p > 1$. Further, $\rho(1) = - h''(1)$. The utility of the autobidder under a multiplier $m \geq 0$ is equal to $\int_{p = 0}^m (1 - p) \rho(p) dp = h(m) - h(0) = h(m)$.
\end{proof}

The simplest example of this construction is when the density $\rho$ is constant (one can safely impose an upper bound on the density if the multiplier remains bounded), in which case the utility function produced is the quadratic $\int_{p = 0}^m (1 - p) = m - \frac{m^2}{2}$. We observe that executing gradient descent on this instance recovers a rescaled version of the logistic map. Specifically, if $\eta = r - 1$, we have
\begin{equation}
    \label{eq:rescaled-log}
    F(m) = \max\Bigg\{0, m + (\overbrace{r-1}^{\eta}) \overbrace{\left( m - \frac{m^2}{2} \right)}^{u(m)} \Bigg\} = \max\left\{0, r m \left( 1 - \frac{r-1}{2r} m \right) \right\}.
\end{equation}
If $m \in (0, \frac{2r}{r-1})$ and $r \leq 4$, it follows that $r m \left( 1 - \frac{r-1}{2r} m \right) \in (0, \frac{2r}{r-1})$. So, in that regime,
\begin{equation*}
    F(m) = r m \left( 1 - \frac{r-1}{2r} m \right).
\end{equation*}
If we keep track of the rescaled multiplier $x = \frac{r-1}{2r} m$, we find that $x^{(t+1)} = r x^{(t)} (1 - x^{(t)})$, which is precisely the logistic map. The bifurcation diagram under~\eqref{eq:rescaled-log} is given in~\Cref{fig:GD-bifurcation}.

\begin{theorem}[Autobidding dynamics subsume the logistic map]
    \label{theorem:logistic}
    There is an instance with a single autobidder and reserve prices such that the dynamics~\eqref{eq:GD} recover the logistic map.
\end{theorem}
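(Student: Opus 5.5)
The plan is to assemble the instance directly from \Cref{lemma:sim-max} and the computation in~\eqref{eq:rescaled-log}, and then verify the invariant interval. Fix $r \in (1,4]$, set $\eta = r-1$, and consider a single autobidder facing a continuum of items. Each item is valued at $1$ per unit, and the reserve prices have constant density $\rho \equiv 1$ on $[0, M]$ with $M \geq \frac{2r}{r-1}$. Applying \Cref{lemma:sim-max} to $h(m) = m - m^2/2$ is legitimate: $h(0)=0$, $h$ is increasing on $[0,1)$ and decreasing on $(1,\infty)$, and $h''(1) = -1$ exists. The lemma's density $h'(p)/(1-p) = 1$ is exactly this constant density.

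For $m \in [0,M]$, the bidder bids $m$ on every item and wins exactly those items whose reserve is below $m$, paying the reserve. Hence
\[
u(m) = \int_0^m (1-p)\,dp = m - \tfrac{m^2}{2}.
\]
The truncation of the density at $M$ matters only for $m > M$, and we will show the orbit never reaches that region.

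Next, plug into~\eqref{eq:GD}:
\[
F(m) = \max\{0,\, m + (r-1)(m - m^2/2)\} = \max\Bigl\{0,\, rm\bigl(1 - \tfrac{r-1}{2r}m\bigr)\Bigr\}.
\]
The key step is to check that $I = [0, \frac{2r}{r-1}]$ is forward invariant and that the $\max$ never binds on $I$. Substitute $x = \frac{r-1}{2r}m \in [0,1]$, so that $F$ becomes $x \mapsto rx(1-x)$. This map is nonnegative on $[0,1]$ and has maximum $r/4 \leq 1$. Therefore $F(I) \subseteq I \subseteq [0,M]$, and the projection onto $\R_{\geq 0}$ acts as the identity along orbits starting in $I$. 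It follows that the conjugacy $m \mapsto \frac{r-1}{2r}m$, a linear homeomorphism of $I$ onto $[0,1]$, sends the autobidding dynamics exactly onto the logistic map $x^{(t+1)} = r x^{(t)}(1-x^{(t)})$. Because the two systems are topologically conjugate, every dynamical property transfers, including the chaotic regime at $r=4$.

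The main obstacle is only bookkeeping. We must make sure the finite support cap $M$ on reserve prices and the projection in~\eqref{eq:GD} do not interfere, and both are handled by the invariance of $I$ for $r \leq 4$. If one prefers a finite number of items, \Cref{lemma:integral-approx}-style quantization gives an arbitrarily close approximation rather than exact recovery. For the exact statement, however, the continuum instance suffices.
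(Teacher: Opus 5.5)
Your proposal is correct and follows the paper's own argument. A constant reserve-price density via \Cref{lemma:sim-max} gives $u(m) = m - m^2/2$, the choice $\eta = r-1$ turns~\eqref{eq:GD} into $\max\{0, rm(1 - \frac{r-1}{2r}m)\}$, and the rescaling $x = \frac{r-1}{2r}m$ yields $x \mapsto rx(1-x)$. Your closed invariant interval $[0, \frac{2r}{r-1}]$ is slightly more careful than the paper's open interval, which fails at the endpoint when $r = 4$ because $x = 1/2$ maps to $1$. Your explicit handling of the density cap $M$ is likewise slightly more careful than the paper's bookkeeping.
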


\Cref{fig:logistic-revenuewelfare} portrays the welfare and revenue corresponding to~\Cref{theorem:logistic}. In the convergent regime (\Cref{fig:logistic-revenuewelfare}, left), we find that the average welfare and revenue converge to a value of $2$, which corresponds to the unique autobidding equilibrium of this instance. Even when the dynamics are cycling (\Cref{fig:logistic-revenuewelfare}, middle), the average welfare and revenue are approaching values close to $2$, although they are not converging to it. In stark contrast, in the chaotic regime (\Cref{fig:logistic-revenuewelfare}, right), both average revenue and welfare are highly suboptimal. This suggests another critical implication of chaos: besides the erratic nature of the dynamics, macroscopic performance can also degrade.

Based on~\Cref{lemma:sim-max}, one can also consider the entropic mirror descent dynamics when $u(m) = m - \frac{m^2}{2}$, namely
\[
    m^{(t+1)} = m^{(t)} \exp\left( \eta \left( m^{(t)} - \frac{(m^{(t)})^2}{2} \right) \right).
\]
This is a variant of the Ricker map with a quadratic term. \Cref{fig:newcomp} confirms that when the learning rate is large enough, there is a period-3 orbit, and hence Li-Yorke chaos. We have thus identified an instance where Li-Yorke chaos manifests itself for both Euclidean and entropic regularization.

\begin{figure}
    \centering
    \includegraphics[scale=0.5]{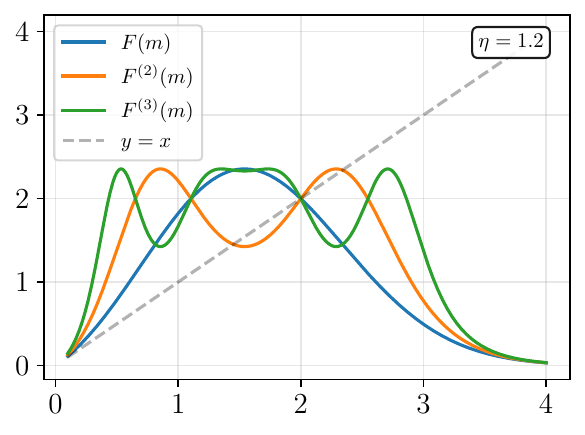}
    \hspace{1cm}
    \includegraphics[scale=0.5]{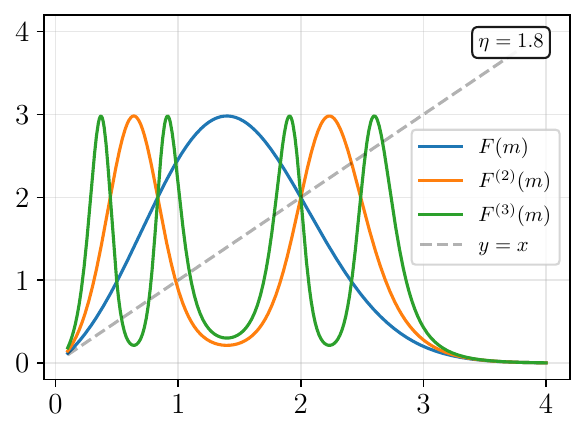}
    \caption{Compositions of $F(m) = m \exp(\eta( m - \frac{m^2}{2} ) )$. When $\eta = 1.8$ (right), there exists a period-3 orbit. On the other hand, when $\eta = 1.2$ (left), the only fixed point of $F^{(3)}$ is the unique fixed point of $F$.}
    \label{fig:newcomp}
\end{figure}
\section{Conclusions}

We established that the dynamics of value-maximizing autobidders under RoS constraints can be formally \emph{chaotic}, both in discrete and continuous time. For the more challenging continuous-time setting, our basic approach was to simulate existing, well-studied dynamical systems, such as Chua's circuit. For discrete-time dynamics, we connected different incarnations of mirror descent to classic dynamical systems such as the logistic map and the Ricker population model. The complex behavior established in discrete time manifests itself under remarkably simple market conditions, but, unlike our continuous-time results, hinges on autobidders employing a large learning rate.

The main takeaway of our results is that ostensibly simple autobidding dynamics can produce highly complex and unpredictable behavior. Chaos is not merely a theoretical curiosity; it has tangible practical implications. Most notably, the inherent sensitivity to initial conditions can cast doubt into the reliability of numerical simulations, making long-term market forecasting and simulation intractable. Going forward, more emphasis should be given beyond static equilibria toward statistical or ergodic properties of the dynamics.

\section*{Acknowledgments}

We are indebted to Zbigniew Galias for his valuable and prompt assistance concerning the simulation of~\Cref{fig:quadrangles-approx}.

\bibliography{refs}

@String{WINE = "International Workshop On Internet And Network Economics (WINE)"}

@String{IJCAI = "Proceedings of the International Joint
                 Conference on Artificial Intelligence (IJCAI)"}

@String{AAAI = "Conference on Artificial Intelligence (AAAI)"}

@String{EC = "ACM Conference on Economics and Computation (EC)"}

@String{ICML = "International Conference on Machine Learning (ICML)"}

@String{WWW = "International Conference on World Wide Web (WWW)"}

@String{ITCS = "Innovations in Theoretical
                  Computer Science (ITCS)"}

@String{COLT = "Conference on Learning Theory (COLT)"}

@String{NeurIPS = "Neural Information Processing Systems (NeurIPS)"}

@inproceedings{Aggarwal19:Autobidding,
  author       = {Gagan Aggarwal and
                  Ashwinkumar Badanidiyuru and
                  Aranyak Mehta},
  title        = {Autobidding with Constraints},
  booktitle    = {Web and Internet Economics  (WINE)},
  year         = {2019}
}

@inproceedings{Balseiro21:Robust,
  author       = {Santiago R. Balseiro and
                  Yuan Deng and
                  Jieming Mao and
                  Vahab S. Mirrokni and
                  Song Zuo},
  title        = {Robust Auction Design in the Auto-bidding World},
  booktitle    = NeurIPS,
  year         = {2021}
}

@inproceedings{Deng24:Efficiency,
  author       = {Yuan Deng and
                  Jieming Mao and
                  Vahab Mirrokni and
                  Hanrui Zhang and
                  Song Zuo},
  title        = {Efficiency of the First-Price Auction in the Autobidding World},
  booktitle    = NeurIPS,
  year         = {2024}
}

@inproceedings{Fikioris23:Liquid,
  title={Liquid welfare guarantees for no-regret learning in sequential budgeted auctions},
  author={Fikioris, Giannis and Tardos, {\'E}va},
  booktitle=EC,
  year={2023}
}

@inproceedings{Baldeschi26:Optimal,
  title={Optimal Type-Dependent Liquid Welfare Guarantees for Autobidding Agents with Budgets},
  author={Baldeschi, Riccardo Colini and Klumper, Sophie and Kroll, Twan and Leonardi, Stefano and Sch{\"a}efer, Guido and Tsikiridis, Artem},
  booktitle={Symposium on Discrete Algorithms (SODA)},
  year={2026}
}

@article{Elowitz00:Synthetic,
  title={A synthetic oscillatory network of transcriptional regulators},
  author={Elowitz, Michael B and Leibler, Stanislas},
  journal={Nature},
  volume={403},
  number={6767},
  pages={335--338},
  year={2000}
}

@inproceedings{Bichler25:Beyond,
  author       = {Martin Bichler and
                  Stephan B. Lunowa and
                  Matthias Oberlechner and
                  Fabian R. Pieroth and
                  Barbara I. Wohlmuth},
  title        = {Beyond Monotonicity: On the Convergence of Learning Algorithms in
                  Standard Auction Games},
  booktitle    = AAAI,
  year         = {2025}
}

@inproceedings{Dobzinski14:Efficiency,
  title={Efficiency guarantees in auctions with budgets},
  author={Dobzinski, Shahar and Leme, Renato Paes},
  booktitle={International Colloquium on Automata, Languages, and Programming (ICALP)},
  year={2014}
}

@inproceedings{Deng21:Towards,
  author       = {Yuan Deng and
                  Jieming Mao and
                  Vahab S. Mirrokni and
                  Song Zuo},
  title        = {Towards Efficient Auctions in an Auto-bidding World},
  booktitle    = {The Web Conference (WWW)},
  year         = {2021}
}

@inproceedings{Leme24:Complex,
  author       = {Renato Paes Leme and
                  Georgios Piliouras and
                  Jon Schneider and
                  Kelly Spendlove and
                  Song Zuo},
  title        = {Complex Dynamics in Autobidding Systems},
  booktitle    = EC,
  year         = {2024}
}

@article{Li75:Period,
  title={Period three implies chaos},
  author={Li, Tien-Yien and Yorke, James A},
  journal={The American Mathematical Monthly},
  volume={82},
  number={10},
  pages={985--992},
  year={1975}
}

@article{Bruckner87:Scrambled,
  title={On scrambled sets for chaotic functions},
  author={Bruckner, Andrew M and Hu, Thakyin},
  journal={Transactions of the American Mathematical Society},
  volume={301},
  number={1},
  pages={289--297},
  year={1987}
}

@inproceedings{Papadimitriou16:Computational,
  author       = {Christos H. Papadimitriou and
                  Nisheeth K. Vishnoi},
  editor       = {Madhu Sudan},
  title        = {On the Computational Complexity of Limit Cycles in Dynamical Systems},
  booktitle    = ITCS,
  year         = {2016}
}

@inproceedings{Chatziafratis19:Computational,
  author       = {Vaggos Chatziafratis and
                  Tim Roughgarden and
                  Joshua R. Wang},
  title        = {On the Computational Power of Online Gradient Descent},
  booktitle    = COLT,
  year         = {2019}
}

@inproceedings{Gaitonde23:Budget,
  author       = {Jason Gaitonde and
                  Yingkai Li and
                  Bar Light and
                  Brendan Lucier and
                  Aleksandrs Slivkins},
  title        = {Budget Pacing in Repeated Auctions: Regret and Efficiency Without
                  Convergence},
  booktitle    = ITCS,
  year         = {2023}
}

@inproceedings{Lucier24:Autobidders,
  author       = {Brendan Lucier and
                  Sarath Pattathil and
                  Aleksandrs Slivkins and
                  Mengxiao Zhang},
  title        = {Autobidders with Budget and {ROI} Constraints: Efficiency, Regret,
                  and Pacing Dynamics},
  booktitle    = COLT,
  year         = {2024}
}

@article{Aggarwal24:Autobidding,
  author       = {Gagan Aggarwal and
                  Ashwinkumar Badanidiyuru and
                  Santiago R. Balseiro and
                  Kshipra Bhawalkar and
                  Yuan Deng and
                  Zhe Feng and
                  Gagan Goel and
                  Christopher Liaw and
                  Haihao Lu and
                  Mohammad Mahdian and
                  Jieming Mao and
                  Aranyak Mehta and
                  Vahab Mirrokni and
                  Renato Paes Leme and
                  Andr{\'{e}}s Perlroth and
                  Georgios Piliouras and
                  Jon Schneider and
                  Ariel Schvartzman and
                  Balasubramanian Sivan and
                  Kelly Spendlove and
                  Yifeng Teng and
                  Di Wang and
                  Hanrui Zhang and
                  Mingfei Zhao and
                  Wennan Zhu and
                  Song Zuo},
  title        = {Auto-Bidding and Auctions in Online Advertising: {A} Survey},
  journal      = {SIGecom Exchanges},
  year         = {2024}
}

@article{Ricker54:Stock,
  title={Stock and recruitment},
  author={Ricker, William Edwin},
  journal={Journal of the Fisheries Board of Canada},
  volume={11},
  number={5},
  pages={559--623},
  year={1954}
}

@book{Dold12:Lectures,
  title={Lectures on algebraic topology},
  author={Dold, Albrecht},
  volume={200},
  year={2012}
}

@article{Galias97:Positive,
  title={Positive topological entropy of Chua's circuit: A computer assisted proof},
  author={Galias, Zbigniew},
  journal={International Journal of Bifurcation and Chaos},
  volume={7},
  number={02},
  pages={331--349},
  year={1997}
}

@article{Tucker02:Rigorous,
  author  = {Tucker, Warwick},
  title   = {A Rigorous {ODE} Solver and {Smale}'s 14th Problem},
  journal = {Foundations of Computational Mathematics},
  year    = {2002},
  volume  = {2},
  number  = {1},
  pages   = {53--117}
}

@inproceedings{Li24:Vulnerabilities,
  author       = {Juncheng Li and
                  Pingzhong Tang},
  title        = {Vulnerabilities of Single-Round Incentive Compatibility in Auto-bidding:
                  Theory and Evidence from ROI-Constrained Online Advertising Markets},
  booktitle    = IJCAI,
  year         = {2024}
}

@article{Sharkovskii07:Co,
  title={Co-existence of cycles of a continuous mapping of the line into itself},
  author={Oleksandr Mykolayovych Sharkovskii},
  journal={Ukrainian Mathematical Journal},
  volume={16},
  pages={61--71},
  year={1964}
}

@article{Matsumoto03:Chaotic,
  author={Matsumoto, T.},
  journal={IEEE Transactions on Circuits and Systems}, 
  title={A chaotic attractor from Chua's circuit}, 
  year={1984},
  volume={31},
  number={12},
  pages={1055-1058}
}

@inproceedings{Palaiopanos17:Multiplicative,
  author       = {Gerasimos Palaiopanos and
                  Ioannis Panageas and
                  Georgios Piliouras},
  title        = {Multiplicative Weights Update with Constant Step-Size in Congestion
                  Games: Convergence, Limit Cycles and Chaos},
  booktitle    = {Neural Information Processing Systems (NeurIPS)},
  year         = {2017}
}

@inproceedings{Bielawski21:Follow,
  author       = {Jakub Bielawski and
                  Thiparat Chotibut and
                  Fryderyk Falniowski and
                  Grzegorz Kosiorowski and
                  Michal Misiurewicz and
                  Georgios Piliouras},
  title        = {Follow-the-Regularized-Leader Routes to Chaos in Routing Games},
  booktitle    = {International Conference on Machine Learning (ICML)},
  year         = {2021}
}

@inproceedings{Cheung20:Chaos,
  author       = {Yun Kuen Cheung and
                  Georgios Piliouras},
  title        = {Chaos, Extremism and Optimism: Volume Analysis of Learning in Games},
  booktitle    = {Neural Information Processing Systems (NeurIPS)},
  year         = {2020}
}

@article{Sato02:Chaos,
  title={Chaos in learning a simple two-person game},
  author={Sato, Yuzuru and Akiyama, Eizo and Farmer, J Doyne},
  journal={Proceedings of the National Academy of Sciences},
  volume={99},
  number={7},
  pages={4748--4751},
  year={2002}
}

@article{Morrison24:Diversity,
  title={Diversity of emergent dynamics in competitive threshold-linear networks},
  author={Morrison, Katherine and Degeratu, Anda and Itskov, Vladimir and Curto, Carina},
  journal={SIAM Journal on Applied Dynamical Systems},
  volume={23},
  number={1},
  pages={855--884},
  year={2024}
}

@article{Vickrey61:Counterspeculation,
  title={Counterspeculation, Auctions, and Competitive Sealed Tenders},
  author={Vickrey, William},
  journal={The Journal of Finance},
  volume={16},
  number={1},
  pages={8--37},
  year={1961}
}

@article{Clarke71:Multipart,
  title={Multipart Pricing of Public Goods},
  author={Clarke, Edward H},
  journal={Public Choice},
  volume={11},
  number={1},
  pages={17--33},
  year={1971}
}

@article{Groves73:Incentives,
  title={Incentives in Teams},
  author={Groves, Theodore},
  journal={Econometrica},
  volume={41},
  number={4},
  pages={617--631},
  year={1973}
}

@book{Strogatz94:Nonlinear,
  title={Nonlinear Dynamics and Chaos},
  author={Strogatz, Steven H.},
  year={1994},
  publisher={Perseus Books}
}

@article{Singer78:Stable,
  title={Stable orbits and bifurcation of maps of the interval},
  author={Singer, David},
  journal={SIAM Journal on Applied Mathematics},
  volume={35},
  number={2},
  pages={260--267},
  year={1978}
}

@article{Vellekoop94:Intervals,
  title={On intervals, transitivity= chaos},
  author={Vellekoop, Michel and Berglund, Raoul},
  journal={The American Mathematical Monthly},
  volume={101},
  number={4},
  pages={353--355},
  year={1994}
}

@article{Zgliczynski96:Fixed,
  title={Fixed point index for iterations of maps, topological horseshoe and chaos},
  author={Zgliczy{\'n}ski, Piotr},
  journal={Topological Methods in Nonlinear Analysis},
  volume={8},
  number={1},
  pages={169--177},
  year={1996}
}

@article{Adler65:Topological,
  title={Topological entropy},
  author={Adler, Roy L. and Konheim, Alan G. and McAndrew, M. H.},
  journal={Transactions of the American Mathematical Society},
  volume={114},
  number={2},
  pages={309--319},
  year={1965},
  publisher={American Mathematical Society}
}

@article{Bowen71:Periodic,
  title={Periodic points and measures for Axiom A diffeomorphisms},
  author={Bowen, Rufus},
  journal={Transactions of the American Mathematical Society},
  volume={154},
  pages={377--397},
  year={1971}
}

@article{Myerson81:Optimal,
  author    = {Myerson, Roger B.},
  title     = {Optimal Auction Design},
  journal   = {Mathematics of Operations Research},
  volume    = {6},
  number    = {1},
  pages     = {58--73},
  year      = {1981}
}

@article{Chua86:Double,
  title = {The double scroll family},
  author = {Chua, Leon O and Komuro, Motomasa and Matsumoto, Takashi},
  journal = {IEEE Transactions on Circuits and Systems},
  volume = {33},
  number = {11},
  pages = {1072--1118},
  year = {1986},
  publisher = {IEEE}
}

@article{Ausubel06:Lovely,
  title={The lovely but lonely {V}ickrey auction},
  author={Ausubel, Lawrence M and Milgrom, Paul},
  journal={Combinatorial auctions},
  volume={17},
  number={3},
  pages={22--26},
  year={2006}
}

@article{Feigenbaum78:Quantitative,
  title={Quantitative universality for a class of nonlinear transformations},
  author={Feigenbaum, Mitchell J},
  journal={Journal of Statistical Physics},
  volume={19},
  number={1},
  pages={25--52},
  year={1978}
}

@book{NemirovskiYudin83:Problem,
  author    = {Nemirovski, Arkadii Semenovich and Yudin, David Borisovich},
  title     = {Problem Complexity and Method Efficiency in Optimization},
  year      = {1983},
  publisher = {John Wiley \& Sons}
}

@article{Hammel88:Numerical,
author = {Stephan M. Hammel and James A. Yorke and Celso Grebogi},
title = {{Numerical orbits of chaotic processes represent true orbits}},
volume = {19},
journal = {Bulletin (New Series) of the American Mathematical Society},
number = {2},
publisher = {American Mathematical Society},
pages = {465 -- 469},
year = {1988}
}

@article{Banks92:Devaney,
  title={On Devaney's definition of chaos},
  author={Banks, John and Brooks, Jeffrey and Cairns, Grant and Davis, Gary and Stacey, Peter},
  journal={The American mathematical monthly},
  volume={99},
  number={4},
  pages={332--334},
  year={1992}
}

@book{Devaney89:Introduction,
  title={An Introduction to Chaotic Dynamical Systems},
  author={Devaney, Robert L.},
  year={1989},
  publisher={Addison-Wesley},
  address={Reading, MA},
  edition={2nd}
}

@book{Elaydi07:Discrete,
  title={Discrete Chaos: With Applications in Biology and Engineering},
  author={Elaydi, Saber N.},
  year={2007},
  publisher={Chapman and Hall/CRC},
  edition={2nd}
}

\clearpage
\appendix

\section{Further background}
\label{sec:moreprels}

This section provides further background on discrete-time dynamical systems, introducing some key concepts upon which~\Cref{sec:discrete} relies.

\subsection{Li-Yorke chaos}

Let $m^{(t+1)} = F(m^{(t)})$ be a one-dimensional, discrete-time dynamical system, where $F$ maps an interval $J$ to itself. We will say that a point $m \in J$ is \emph{periodic with period $p$} if $m = F^{(p)}(m)$ and $m \neq F^{(t)}(m)$ for all $t < p$. We say that $m$ is periodic if it is periodic for some period $p \geq 1$. We are now ready to state the seminal result of~\citet{Li75:Period}.

\begin{theorem}[\citealp{Li75:Period}]
    \label{theorem:LiYorke}
    Let $J$ be an interval and $F : J \to J$ be continuous. Suppose further there is a point $a \in J$ such that the points $b \defeq F(a)$, $c \defeq F(b)$, and $d \defeq F(c)$ satisfy $d \leq a < b < c$ or $d \geq a > b > c$. Then for every $t = 1, 2, \dots,$ there is a periodic point in $J$ with period $t$. Furthermore, there is an uncountable set $S \subseteq J$ (containing no periodic points) that satisfies the following conditions.
    \begin{enumerate}
        \item For every $p, q \in S$ with $p \neq q$,\label{item:scrambled}
        \begin{equation*}
            \limsup_{t \to \infty} | F^{(t)}(p) - F^{(t)}(q) | > 0 \text{ and } \liminf_{t \to \infty} | F^{(t)}(p) - F^{(t)}(q) | = 0.
        \end{equation*}
        \item For every $p \in S$ and periodic point $q \in J$,\label{item:periodic}
        \begin{equation*}
            \limsup_{t \to \infty} | F^{(t)}(p) - F^{(t)}(q) | > 0.
        \end{equation*}
    \end{enumerate}
\end{theorem}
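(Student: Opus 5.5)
The plan is to follow the classical topological argument of \citet{Li75:Period}, which uses only continuity of $F$ and the intermediate value theorem. The case $d \geq a > b > c$ reduces to the case $d \leq a < b < c$ by conjugating with $x \mapsto -x$, so I would treat only the latter. Set $K \defeq [a,b]$ and $L \defeq [b,c]$. Since $F(a)=b$ and $F(b)=c$, we get $F(K) \supseteq L$. Since $F(b)=c$ and $F(c)=d \leq a$, we get $F(L) \supseteq [a,c] = K \cup L$.

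\textbf{Covering lemmas.} I would first prove three elementary facts.
\begin{enumerate}
\item If $I$ is a compact interval and $F(I) \supseteq J$ for a compact interval $J$, then there is a compact $Q \subseteq I$ with $F(Q) = J$. To see this, pick preimages of the endpoints of $J$ and shrink to the closest pair.
\item If $F(I) \supseteq I$, then $F$ has a fixed point in $I$, by the intermediate value theorem applied to $F(x)-x$.
\item Iterating the first fact along any chain of compact intervals $I_0, I_1, \dots$ with $F(I_n) \supseteq I_{n+1}$ gives nested compact sets $Q_0 \supseteq Q_1 \supseteq \cdots$ in $I_0$ with $F^{(n)}(Q_n) = I_n$. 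By compactness there is a point $p$ with $F^{(n)}(p) \in I_n$ for all $n$. For a periodic chain, $F^{(k)}(Q_k) = I_k = I_0 \supseteq Q_k$, so the second fact yields $p$ with $F^{(k)}(p) = p$ that follows the chain.
\end{enumerate}

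\textbf{Periodic points of every period.} For $k=1$, apply the second fact to $L$. For $k \geq 2$, use the periodic chain $L, L, \dots, L, K$ ($k-1$ copies of $L$ followed by $K$), repeated. The resulting $p$ has $F^{(n)}(p) \in L$ for $0 \le n \le k-2$ and $F^{(k-1)}(p) \in K$. If the minimal period were $j<k$, then $j$ divides $k$ and $F^{(k-1)}(p)$ would equal some earlier iterate lying in $L$. It would then lie in $K \cap L = \{b\}$, which forces $F^{(k+1)}(p) = F^{(2)}(b) = d \notin L$ (because $d \le a < b$). This contradicts $F^{(k+1)}(p) = F(p) \in L$, at least when $k \geq 3$. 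The small cases need a direct check using $F(b) = c$ and $F(c) = d$; for instance, if $p = b$ then $F(p) = c \notin K$ when $k=2$.

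\textbf{Scrambled set.} This is the main obstacle. For each $r \in (3/4, 1)$, I would build an itinerary $M^r = (M^r_n)_n$ of intervals in $\{K, L\}$ as follows. The interval $K$ is placed only at positions $n = s^2$, and at those positions only for an $r$-dependent set of $s$, chosen so that the number of $K$'s among the first $s$ squares is $\approx rs$. Every $K$ is surrounded by $L$'s, and all other positions are $L$, so consecutive-interval coverings are valid ($F(L) \supseteq K \cup L$, $F(K) \supseteq L$). The third fact gives a point $p_r$ following $M^r$, and $S = \{p_r\}$ is uncountable because distinct $r$ give distinct itineraries. One also argues that $p_r$ is not periodic and that the $p_r$ are distinct; distinctness follows from the different $K$-counts.

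For the $\limsup$ conditions, distinct $r \neq r'$ disagree at infinitely many squares, so one orbit lies in $K$ while the other lies in $L$. The orbits can only both be near $b$ at such times, and that is excluded because an orbit at $b$ is mapped two steps later to $d$, outside $L$, while the itinerary demands $L$ there. This gives a uniform positive separation. The same argument handles a periodic $q$, whose $K$-frequency pattern cannot match the sparse square pattern.

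For $\liminf = 0$, I would refine the construction as Li and Yorke do. Using that $F^{(2)}$ maps a small interval near $b$ onto a neighbourhood of $b$, one can force all the $p_r$ to pass simultaneously through shrinking neighbourhoods of a common point at infinitely many shared times. Concretely, I would insert long common blocks in the itineraries that converge to the fixed point of $F$ in $L$.

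Getting these quantitative details right is the step I expect to require the most care. The remaining steps are routine intermediate-value arguments.
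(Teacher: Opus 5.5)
The paper gives no proof of this statement; it quotes Li and Yorke, so the benchmark is their original argument. Your outline follows Li and Yorke, and is sound, for the covering lemmas, for the $L,\dots,L,K$ chains giving every period (including the $k\le 2$ checks), and for the itineraries with $K$ only at selected squares. The same holds for the uniform $\limsup$ separation, once you fix $\delta>0$ with $F^{(2)}<b$ on $[b-\delta,b]$. The gap is the clause $\liminf_t |F^{(t)}(p)-F^{(t)}(q)|=0$, and both concrete ideas you give for it fail.

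First, $F^{(2)}$ does not map a small interval near $b$ onto a neighbourhood of $b$. Since $F^{(2)}(b)=d\le a<b$, continuity sends a small neighbourhood of $b$ close to $d$.

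Second, steering orbits toward \emph{the} fixed point of $F$ in $L$ fails, because that fixed point may be attracting. Take $a=d=0$, $b=1$, $c=2$, and let $F$ be piecewise linear through $(0,1),(1,2),(1.4,1.5),(1.6,1.5),(2,0)$. Its only fixed point in $L$ is $z=1.5$. Every orbit entering $(1.4,1.6)$ is stuck at $z$ forever and never revisits $K$. So no point of your $S$ can come within $0.1$ of $z$, and no admissible chain of intervals can shrink to $z$ and then lead back to $K$. Your sketch also does not explain how a shared block is entered and exited through valid coverings at times common to two different itineraries.

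The missing idea is Li and Yorke's nested-interval construction. Start from $[b_0,c_0]=[b,c]$ and use $F(b)=c$, $F(c)=d<b$. The intermediate value theorem gives $[b_{n+1},c_{n+1}]\subseteq[b_n,c_n]$ with $F(b_{n+1})=c_n$ and $F(c_{n+1})=b_n$: take $b_{n+1}$ to be the last point of $[b_n,c_n]$ mapped to $c_n$, and $c_{n+1}$ the first point after it mapped to $b_n$. The limits $b^*=\lim b_n$ and $c^*=\lim c_n$ satisfy $F(b^*)=c^*$ and $F(c^*)=b^*$. This orbit of period at most two (about $\{1.271,1.661\}$ in the example) is the correct replacement for ``the fixed point''.

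Whenever an itinerary has $K$ at both $k^2$ and $(k+1)^2$, fill the $2k$ slots in between with $[b_{2k-1},b^*],[c^*,c_{2k-2}],[b_{2k-3},b^*],\dots,[b_1,b^*],[c^*,c_0]$. The coverings all hold:
\begin{itemize}
\item $F(K)\supseteq L$ covers the first interval.
\item Each interval covers the next by the intermediate value theorem.
\item The last interval covers $[d,b^*]\supseteq K$.
\end{itemize}
All these intervals lie in $L$, so your $\limsup$ argument is unaffected.

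The threshold $r>3/4$, which you adopt but never use, is exactly what makes this work. For each $r$, the set of $k$ with $K$ at both $k^2$ and $(k+1)^2$ has lower density at least $2r-1>1/2$. Hence any two itineraries share infinitely many such $k$. At time $k^2+1$ both orbits lie in $[b_{2k-1},b^*]$, whose length tends to $0$, which gives the $\liminf$.
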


The implication given by~\Cref{item:scrambled} is referred to as $S$ being \emph{scrambled}; it is worth noting that some authors make the stronger assumption $\limsup_{t \to \infty} | F^{(t)}(p) - F^{(t)}(q) | = 1$ (when $J \defeq [0, 1]$)~\citep{Bruckner87:Scrambled}. A system that satisfies the conclusion of~\Cref{theorem:LiYorke} is said to exhibit \emph{Li-Yorke chaos}. We refer to~\citet{Sharkovskii07:Co} for a precursor of~\Cref{theorem:LiYorke} related to period-ordering constraints.

Li-Yorke chaos is one possible manifestation of chaos. One limitation is that the scrambled set may possess (Lebesgue) measure zero---a complex set may exist, but may not be observed under a random initialization. We proceed with other notions of chaos.

\subsection{Sensitivity to initial conditions}

One of the hallmarks of a chaotic system is its sensitivity to initial conditions, popularly known as the ``butterfly effect.'' This means that a small error in the initialization will be amplified over time. The important practical consequence is that, in such a system, computer simulations can be misleading~\citep{Elaydi07:Discrete}.

\begin{definition}[Sensitivity to initial conditions]
    \label{def:sensitivity}
    A map $F$ of an interval $I$ is said to possess \emph{sensitive dependence on initial conditions} if there exists $\nu > 0$ such that for any $x_0 \in I$ and $\delta > 0$, there exists $x_0' \in I \cap (x_0 - \delta, x_0 + \delta)$ and a positive integer $k$ such that
    \begin{equation*}
        \left| F^{(k)}(x_0) - F^{(k)}(x_0') \right| \geq \nu.
    \end{equation*}
    The number $\nu$ is called the \emph{sensitivity constant} of $F$.
\end{definition}

The simplest example of sensitivity to initial conditions is a linear map $x \mapsto c x$, for $c > 1$. However, this is not an interesting example as it does not possess the other properties necessary for chaos. Quantitatively, sensitivity per~\Cref{def:sensitivity} is characterized by a \emph{positive Lyapunov exponent}, which measures the average exponential rate of divergence of nearby orbits. The largest Lyapunov exponent can be numerically estimated (\Cref{appendix:LLE}); a positive (estimated) Lyapunov exponent for a bounded system is a common working definition of chaos.

\subsection{Transitivity}

We next state the definition of \emph{topological transitivity}~\citep{Elaydi07:Discrete}.

\begin{definition}[Topological transitivity]
    \label{def:transitivity}
    Let $F$ be a map on an interval $I$ (or $\mathbb{R}$). Then $F$ is said to be \emph{topologically transitive} if for any pair of nonempty open intervals $J_1$ and $J_2$ in $I$, there exists a positive integer $k$ such that
    \begin{equation*}
        F^{(k)}(J_1) \cap J_2 \neq \emptyset.
    \end{equation*}
\end{definition}
Equivalently, one may replace the intervals $J_1$ and $J_2$ with open subsets $U_1$ and $U_2$ of $I$. (An open set here is just the union of open intervals.)

A convenient criterion for proving topological transitivity for a map $F: I \to I$ on the interval $I$ is that it has a dense orbit~\citep{Elaydi07:Discrete}.

\subsection{Devaney chaos}

Armed with the previous definitions, we next state the definition of chaos in the classic sense of~\citet{Devaney89:Introduction}.

\begin{definition}[Devaney chaos]
    \label{def:Devaney}
    A map $F : I \to I$, where $I$ is an interval, is said to be \emph{chaotic} if
    \begin{enumerate}
        \item $F$ is topologically transitive (\Cref{def:transitivity});
        \item The set of periodic points $P$ is dense in $I$; and
        \item $F$ has sensitive dependence on initial conditions (\Cref{def:sensitivity}).
    \end{enumerate}
\end{definition}

\begin{remark}
It was subsequently found by~\citet{Banks92:Devaney} that the first two conditions imply the third. Moreover, \citet{Vellekoop94:Intervals} showed that for continuous maps on intervals in $\R$, transitivity implies that the set of periodic points is dense. In other words, topological transitivity implies Devaney chaos. Nonetheless, it is still common to define Devaney chaos as in~\Cref{def:Devaney}.
\end{remark}

For a continuous map $F : I \to I$ on a closed and bounded interval $I$, Devaney's chaos is equivalent to having \emph{positive topological entropy}~\citep{Elaydi07:Discrete}. In the one-dimensional setting, the topological entropy, which we denote by $h$, is a measure of the growth of the number of periodic cycles as a function of the length of the period; namely,
\begin{equation*}
    h(F) = \lim_{t \to \infty} \frac{\ln N_t}{t},
\end{equation*}
where $N_t$ is the number of distinct periodic orbits of length $t$. The more general high-dimensional setting is treated later in~\Cref{sec:top-entropy}. To put the foregoing definitions into context, it is worth noting that there are examples in which Devaney chaos does not imply Li-Yorke chaos~\citep[Example 3.15]{Elaydi07:Discrete}.

To prove that a map $F$ is topologically transitive, one can establish a \emph{conjugacy}\footnote{Let $f: X \to X$ and $g: Y \to Y$ be two continuous maps on topological spaces $X$ and $Y$. $f$ and $g$ are said to be \emph{conjugate} if there exists a homeomorphism $h: X \to Y$ (a continuous bijection with a continuous inverse) such that $h \circ f = g \circ h$.} with the shift map~\citep{Elaydi07:Discrete}; this can be shown for the Ricker map with respect to a certain set $\Lambda$, leading to~\Cref{thm:Devaney} claimed earlier.

\subsection{Unimodal maps}

Finally, to put~\Cref{sec:symmetry} in a better context, we state the definition of an $S$-unimodal map. The canonical example of such a map is the logistic map $x \mapsto r x (1 - x)$.

\begin{definition}[S-unimodal map; \citealp{Singer78:Stable}]
    \label{def:unimodal}
    A $C^3$ map $F : [a, b] \to [a, b]$ is \emph{$S$-unimodal} if
    \begin{enumerate}
        \item either $F(a) = F(b) = a$ or $F(a) = F(b) = b$;
        \item $F$ has a unique critical point $c \in (a, b)$;
        \item the Schwarzian derivative of $F$, defined as
        \begin{equation*}
            SF(x) = \frac{F'''(x)}{F'(x)} - \frac{3}{2} \left( \frac{F''(x)}{F'(x)} \right)^2,
        \end{equation*}
        is negative for all $x \in [a, b]$ such that $x \neq c$.
    \end{enumerate}
\end{definition}
\section{Framework for proving topological chaos}
\label{app:galias_framework}

In this section, we outline the theoretical framework employed by~\citet{Galias97:Positive} to rigorously establish that Chua’s circuit has positive topological entropy. We earlier used the framework of~\citet{Galias97:Positive} to confirm that our approximate version of Chua's circuit---which relies on continuous negation---behaves similarly to the original system (\Cref{fig:quadrangles-approx}).

\subsection{Fixed point index and homotopy property}

The argument of~\citet{Galias97:Positive} makes use of \emph{fixed point index theory}~\citep{Dold12:Lectures}. In what follows, $\partial U$ denotes the boundary of $U$; $\widebar{U}$ the closure of $U$; and $\interior U$ the interior of $U$. Let $U$ be an open and bounded set. For a continuous map $F: V \to \mathbb{R}^n$, such that $\widebar{U} \subset V \subset \R^d$ and with no fixed points on $\partial U$, the fixed point index $I(F, U)$ is an integer that provides a topological count of the fixed points; the precise definition is not important for our purposes here. A basic consequence of the definition is that if $I(F, U) \neq 0$, then $F$ admits at least one fixed point in $U$. 

A useful technique in this theory is the \textit{homotopy method}. Specifically, if two maps $F$ and $F'$ can be continuously transformed into each other through a \emph{homotopy} $h(t)$, where $t \in [0,1]$, such that no fixed points of $h(t)$ lie on the boundary $\partial U$ for any $t$, then their fixed point indices are equal: $I(F, U) = I(F', U)$. This allows the characterization of a map by relating it to a known template map.

\subsection{The Smale horseshoe map and its deformation}

Let $P = [-1, 1] \times \mathbb{R}$ be a vertical stripe in the plane $\mathbb{R}^2$. Within this stripe, we consider two disjoint compact sets $N_0$ and $N_1$ defined as
\begin{equation*}
    N_0 = [-1, 1] \times [-1, -0.5] \text{ and } N_1 = [-1, 1] \times [0.5, 1].
\end{equation*}
The regions of $P$ relative to these sets are $M_{-}$ (below $N_0$), $M_{0}$ (between $N_0$ and $N_1$), and $M_{+}$ (above $N_1$). Also, let $N_{0 D}, N_{0 U}$ be the lower and upper horizontal edges of $N_0$ and $N_{0 L}, N_{0 R}$ be the left and right vertical edges of $N_0$; similar notation is used with respect to $N_1$.

The standard \emph{Smale horseshoe}, denoted by $h_s$, is a linear map on $N_0$ and $N_1$, defined as
\begin{equation*}
h_s(x,y) \defeq \begin{cases} \left( \frac{1}{4} x - \frac{1}{2}, \pm 5 \left(y + \frac{3}{4} \right) \right) & \text{for } (x,y) \in N_0, \\ \left(\frac{1}{4} x + \frac{1}{2} , \pm 5 \left(y - \frac{3}{4} \right) \right) & \text{for } (x,y) \in N_1. \end{cases}
\end{equation*}
The coefficient above determines the orientation (\Cref{fig:horseshoe}). It can be shown that $h_s^{(t)}$ has $2^t$ different fixed points~\citep{Galias97:Positive}.

Galias considered the \emph{deformed horseshoe map} $h_d$. It remains linear on both rectangles, but the stretching action on $N_1$ is weakened:
\begin{equation*}
h_d(x,y) = \begin{cases} \left( \frac{1}{4} x - \frac{1}{2}, \pm 5(y + \frac{3}{4} ) \right) & \text{for } (x,y) \in N_0, \\ \left(\frac{1}{4} x + \frac{1}{2}, \pm 2 \left(y - \frac{3}{4} \right) - \frac{3}{4} \right) & \text{for } (x,y) \in N_1. \end{cases}
\end{equation*}
The critical distinction is that $h_d(N_1) \cap N_1 = \emptyset$, meaning that the transition $1 \to 1$ is forbidden, as shown in~\Cref{fig:horseshoe}.

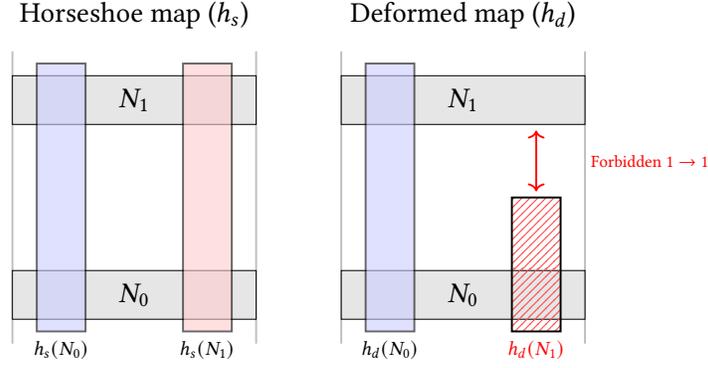
\begin{figure}[h]
    \centering
    \scalebox{0.9}{    \begin{tikzpicture}[scale=1.8]
        \node at (0, 1.5) {Horseshoe map ($h_s$)};
        
        \draw[thick, gray!50] (-1, -1.2) -- (-1, 1.2);
        \draw[thick, gray!50] (1, -1.2) -- (1, 1.2);
        
        \draw[fill=gray!20] (-1, -1) rectangle (1, -0.6);
        \node at (0, -0.8) {$N_0$};
        \draw[fill=gray!20] (-1, 0.6) rectangle (1, 1);
        \node at (0, 0.8) {$N_1$};
        
        \draw[thick, fill=blue!20, opacity=0.6] (-0.8, -1.1) rectangle (-0.4, 1.1);
        \node[scale=0.7] at (-0.6, -1.25) {$h_s(N_0)$};
        
        \draw[thick, fill=red!20, opacity=0.6] (0.4, -1.1) rectangle (0.8, 1.1);
        \node[scale=0.7] at (0.6, -1.25) {$h_s(N_1)$};
    \end{tikzpicture}
    \hspace{1cm}
    \begin{tikzpicture}[scale=1.8]
        \node at (0, 1.5) {Deformed map ($h_d$)};
        
        \draw[thick, gray!50] (-1, -1.2) -- (-1, 1.2);
        \draw[thick, gray!50] (1, -1.2) -- (1, 1.2);
        
        \draw[fill=gray!20] (-1, -1) rectangle (1, -0.6);
        \node at (0, -0.8) {$N_0$};
        \draw[fill=gray!20] (-1, 0.6) rectangle (1, 1);
        \node at (0, 0.8) {$N_1$};
        
        \draw[thick, fill=blue!20, opacity=0.6] (-0.8, -1.1) rectangle (-0.4, 1.1);
        \node[scale=0.7] at (-0.6, -1.25) {$h_d(N_0)$};
        
        \draw[thick, pattern=north east lines, pattern color=red!70] (0.4, -1.1) rectangle (0.8, 0);
        \node[scale=0.7, red] at (0.6, -1.25) {$h_d(N_1)$};
        
        \draw[<->, red, thick] (0.6, 0.05) -- (0.6, 0.55);
        \node[red, right, scale=0.6] at (1.0, 0.3) {Forbidden $1 \to 1$};
    \end{tikzpicture}}
    \caption{Standard and deformed horseshoe maps.}
    \label{fig:horseshoe}
\end{figure}

\subsection{Existence of infinitely many periodic orbits}
The following results formalize the conditions under which a continuous map admits infinitely many periodic orbits by way of a homotopy with the (deformed) horseshoe map.

\begin{theorem}[Existence for horseshoe maps; \citealp{Zgliczynski96:Fixed}]
Suppose that $F(N_0), F(N_1) \subset \interior P$, the horizontal edges of $N_0, N_1$ are mapped by $F$ outside of $N_0 \cup N_1$ in such a way that one of the sets $F(N_{0D}), F(N_{0U})$ is enclosed in $M_+$ while the second one is enclosed in $M_-$, and similarly for horizontal edges of $N_1$. Then for any finite sequence $a_0, a_1, \dots, a_{t-1} \in \{0, 1\}^t$, there exists a point $x$ satisfying $F^{(\tau)}(x) \in N_{a_i}$ for $\tau = 0, \dots, t-1$ and $F^{(t)}(x) = x$.
\end{theorem}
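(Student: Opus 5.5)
The plan is to reduce the existence of a periodic orbit with a prescribed itinerary to a fixed-point problem on a product space. Then we compute the fixed point index by a homotopy to the linear horseshoe $h_s$, using the homotopy property recalled above.

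\textbf{Step 1: reformulation.} Fix $a_0,\dots,a_{t-1}\in\{0,1\}$ and let
\[
U \defeq \interior N_{a_0}\times\cdots\times\interior N_{a_{t-1}}\subset\R^{2t}.
\]
On a neighborhood of $\widebar U$ define
\[
G(x_0,\dots,x_{t-1}) \defeq \bigl(F(x_{t-1}),F(x_0),\dots,F(x_{t-2})\bigr).
\]
A fixed point of $G$ in $\widebar U$ is a tuple with $x_{i+1}=F(x_i)$ (indices mod $t$) and $x_i\in N_{a_i}$. Hence $x=x_0$ satisfies $F^{(\tau)}(x)\in N_{a_\tau}$ for $\tau<t$ and $F^{(t)}(x)=x$. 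It therefore suffices to show $I(G,U)\neq 0$.

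\textbf{Step 2: admissible homotopy.} For each rectangle $N_j$, choose the sign $\pm$ in $h_s$ on $N_j$ so that $h_s$ sends the same horizontal edge of $N_j$ into $M_+$ (respectively $M_-$) as $F$ does; this is possible by hypothesis. Note that $h_s(N_{jD})$ and $h_s(N_{jU})$ lie at heights $\pm 5/4$. Set $F_s=(1-s)F+s\,h_s$ on $N_0\cup N_1$, and let $G_s$ be the induced product map. We check that $G_s$ has no fixed points on $\partial U$. A boundary point has some $x_i\in\partial N_{a_i}$, and there are two cases.
\begin{itemize}
\item If $x_i$ lies on a vertical edge of $N_{a_i}$, then a fixed point would require $x_i=F_s(x_{i-1})$. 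But $F(N_j)\subset\interior P$ and $h_s(N_j)\subset\interior P$, since the first coordinate $x/4\pm1/2$ lies in $(-1,1)$. As $\interior P$ is convex, $F_s(x_{i-1})\in\interior P$, which cannot lie on $\{x=\pm1\}$.
\item If $x_i$ lies on a horizontal edge, then $F(x_i)$ and $h_s(x_i)$ both lie in the same set $M_+$ or $M_-$ by the choice of signs. These sets are convex half-stripes (intersected with $\interior P$), so $F_s(x_i)$ lies there too. Hence $F_s(x_i)\notin N_{a_{i+1}}$, contradicting $x_{i+1}=F_s(x_i)$.
\end{itemize}
By the homotopy property, $I(G,U)=I(G_1,U)$.

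\textbf{Step 3: index of the linear model.} $G_1$ is affine on $\widebar U$, with derivative the cyclic block matrix built from the blocks $\mathrm{diag}(1/4,\pm5)$. Its fixed points solve a linear system $(I-DG_1)z=c$. The map $DG_1^{\,t}$ acts blockwise as $\mathrm{diag}(4^{-t},\pm5^t)$, which has no eigenvalue $1$; hence the eigenvalues $\mu$ of $DG_1$ (whose $t$-th powers are those of $DG_1^{\,t}$) are not $1$. So $\det(I-DG_1)\neq0$, and the fixed point is unique. It lies in $U$ by the explicit computation: the $y$-coordinate contraction of $h_s^{-1}$ into $N_{a}$ is the familiar horseshoe coding. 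Consequently $I(G_1,U)=\operatorname{sign}\det(I-DG_1)=\pm1\neq0$, and Step 1 yields the desired point.

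The main obstacle I expect is Step 3: verifying that the unique fixed point of the affine product map indeed lies in $U$ rather than merely in $\R^{2t}$. I would handle this by solving coordinatewise. The $x$-coordinates are contracted, so the fixed point's $x$-coordinates are determined by a contraction into $(-1,1)$. The $y$-coordinates are determined backward via $y_i=h_s^{-1}$-type relations $y_i=\pm y_{i+1}/5+\text{const}$, which is also a contraction. That contraction maps $[0.5,1]$ and $[-1,-0.5]$ into their interiors ($\pm3/4\pm1/5$), which gives membership in $\interior N_{a_i}$.
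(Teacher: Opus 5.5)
Your proposal is correct. The paper gives no proof of this statement; it is quoted from \citet{Zgliczynski96:Fixed}, so there is no in-paper argument to compare against. Your route is the standard fixed-point-index proof of that result:
\begin{itemize}
\item pass to the cyclic product map $G$ on $\interior N_{a_0}\times\cdots\times\interior N_{a_{t-1}}$;
\item deform $F$ to $h_s$ by the straight-line homotopy, which is admissible because $\interior P$ and the half-strips $M_\pm\cap\interior P$ are convex and you matched the orientation of $h_s$ on each $N_j$ to that of $F$;
\item read off $I(G_1,U)=\operatorname{sign}\det(I-DG_1)=\pm 1$ for the affine model.
\end{itemize}

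The only step to state more precisely is the membership of the affine fixed point in $U$. What you actually use is that each backward branch $\eta\mapsto \pm\tfrac34\pm\tfrac{\eta}{5}$ maps all of $[-1,1]$ into $[\pm\tfrac34-\tfrac15,\pm\tfrac34+\tfrac15]$. That interval lies in the open $y$-range of the corresponding $N_a$. Hence the cyclic composition is a self-contraction of $[-1,1]$, and every second coordinate of the fixed point is interior. Together with the analogous forward argument for the first coordinates, this places the fixed point in $U$ and even gives $F^{(\tau)}(x)\in\interior N_{a_\tau}$.
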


In what follows, $T_t \defeq \{ (a_0, \dots, a_{t-1}) \in \{0, 1\}^t : (a_{\tau}, a_{\tau + 1 \mod t }) \neq (1, 1) \text{ for } 0 \leq \tau < t \}$. It is easy to show that $|T_t| = (\frac{1 + \sqrt{5}}{2})^t + (\frac{1 - \sqrt{5}}{2})^t$. It can also be shown that $|T_t|$ is the number of fixed points of $h_d^{(t)}$~\citep{Galias97:Positive}.

\begin{theorem}[Existence for deformed horseshoes; \citealp{Galias97:Positive}]
\label{theorem:deformed}
Suppose that $F(N_0), F(N_1) \subset \operatorname{int} P$, and the horizontal edges of $N_0, N_1$ are mapped by $F$ in such a way that one of the sets $F(N_{0D})$, $F(N_{0U})$ is enclosed in $M_+$ while the second one is enclosed in $M_-$, and one of the sets $F(N_{1D})$, $F(N_{1U})$ is enclosed in $M_-$ while the second one is enclosed in $M_0 \cup N_0 \cup M_+$. Then for any finite sequence $a = (a_0, a_1, \dots, a_{t-1}) \in T_t$, there exists a point $x$ satisfying $F^{(\tau)}(x) \in N_{a_i}$ for $\tau = 0, \dots, t-1$ and $F^{(t)}(x) = x$.
\end{theorem}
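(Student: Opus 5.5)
The plan is to reduce the existence of a periodic orbit with itinerary $a \in T_t$ to a nonzero fixed point index, in the spirit of~\citet{Zgliczynski96:Fixed}, and then to compute that index by a homotopy to the deformed horseshoe $h_d$. Fix $a = (a_0, \dots, a_{t-1}) \in T_t$ and set $N_a \defeq N_{a_0} \times \dots \times N_{a_{t-1}} \subset (\R^2)^t$. Consider the cyclic map
\begin{equation*}
    \Phi_F(x_0, \dots, x_{t-1}) \defeq \bigl(F(x_{t-1}), F(x_0), \dots, F(x_{t-2})\bigr).
\end{equation*}
A fixed point of $\Phi_F$ in $N_a$ is exactly a point $x_0$ with $F^{(\tau)}(x_0) = x_\tau \in N_{a_\tau}$ for $0 \le \tau < t$ and $F^{(t)}(x_0) = x_0$. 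So it suffices to show $I(\Phi_F, \interior N_a) \neq 0$.

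For the model map, $\Phi_{h_d}$ is affine on $N_a$. Because $a \in T_t$, no transition $1 \to 1$ is ever used, so for each $i$ the image $h_d(N_{a_{i-1}})$ genuinely crosses $N_{a_i}$ vertically. Hence $\Phi_{h_d}$ has exactly one fixed point in $\interior N_a$, and it is hyperbolic: contraction by $1/4$ horizontally and expansion by $5$ or $2$ vertically. Its index is therefore $\pm 1$, with the sign fixed by the number of expanding directions and the orientation signs $\pm$. This is the count of fixed points of $h_d^{(t)}$ cited from~\citet{Galias97:Positive}.

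Next I will connect $F$ to $h_d$ by a homotopy $H_s$, $s \in [0,1]$, applied coordinatewise to $\Phi$. The first step is to align orientations. The hypothesis states which horizontal edge of $N_0$ (respectively $N_1$) goes up or down, and I choose the $\pm$ signs in $h_d$ so that the same edge of $N_i$ goes to the same side under both maps. I then take $H_s$ to be the straight-line interpolation in the horizontal coordinate. In the vertical coordinate, I will use a path that keeps each horizontal edge image in the same complementary region throughout.

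It remains to rule out fixed points of $\Phi_{H_s}$ on $\partial N_a$. A boundary point has some $x_i$ on an edge of $N_{a_i}$, and we need to show $H_s(x_{i-1}) \neq x_i$ or $H_s(x_i) \notin N_{a_{i+1}}$. There are two cases.
\begin{itemize}
    \item \textbf{Vertical edges.} Since $F(N_j), h_d(N_j) \subset \interior P$ and $\interior P$ is convex, the first coordinate of $H_s(x_{i-1})$ has modulus $<1$. So it cannot equal a point with $x$-coordinate $\pm 1$.
    \item \textbf{Horizontal edges.} We need $H_s(x_i) \notin N_{a_{i+1}}$. For $N_0$, one edge image lies in $M_+$ and the other in $M_-$ for both maps. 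Both regions are convex horizontal strips of $P$, so straight-line interpolation keeps them there, and they are disjoint from $N_0 \cup N_1$. For $N_1$, one edge goes to $M_-$ for both maps, which is again preserved. The other edge goes to $M_0 \cup N_0 \cup M_+$ under $F$ and to $M_0$ under $h_d$. Since $a$ forbids $1 \to 1$, the successor set is $N_{a_{i+1}} = N_0$.
\end{itemize}

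The main obstacle is this last $N_1$ case. Avoiding $N_0$ is not automatic: the stated hypothesis allows the image of that edge to meet $N_0$, so a naive interpolation could create boundary fixed points. My first attempt will avoid a straight-line interpolation there. Instead I will argue through the vertical-coordinate degree. The vertical component of $x_{i+1} - H_s(x_i)$, restricted to the vertical direction of $N_0 \times N_1$, has prescribed sign on the edge mapped into $M_-$. On the opposite edge it is merely "not below $N_0$". I will treat the constraint as the sign condition "lies above $M_-$" and deform $N_0$ slightly, or equivalently use the excision and additivity properties of the index, so that fixed points on that face cannot occur. If that fails, I will strengthen the reading of the hypothesis to "enclosed in $M_0 \cup M_+$", which is what $h_d$ realizes. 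Once admissibility is settled, the homotopy invariance of the index gives $I(\Phi_F, \interior N_a) = I(\Phi_{h_d}, \interior N_a) = \pm 1 \neq 0$, which yields the desired periodic point.
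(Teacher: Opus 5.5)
Your framework (the cyclic map $\Phi_F$ on $N_{a_0}\times\cdots\times N_{a_{t-1}}$, the affine model $\Phi_{h_d}$ with one hyperbolic fixed point of index $\pm 1$, and an admissible straight-line homotopy) is the standard Zgliczy\'nski argument. The paper gives no proof of its own; it only cites Galias and sketches the index/homotopy idea in the appendix. Your treatment of the vertical faces and of the horizontal faces of $N_0$ is correct.

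The gap is the remaining $N_1$ face, and it is more serious than you suspect. With the hypothesis as printed (second edge of $N_1$ mapped into $M_0 \cup N_0 \cup M_+$), the statement is false. So your first attempt (a vertical degree argument, a slight deformation of $N_0$, excision/additivity) cannot succeed. For example, take $F(\xi,\eta) = (\xi/4 - 1/2,\; 5(\eta + 3/4))$ on $N_0$ and $F(\xi,\eta) = (\xi/4 + 1/2,\; \eta - 7/4)$ on $N_1$. All hypotheses hold: $F(N_{1D})$ lies at height $-5/4$, in $M_-$, and $F(N_{1U})$ lies at height $-3/4$, inside $N_0$. Now take $a = (0,1) \in T_2$. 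A periodic point with this itinerary needs $\eta_0 \in [-1,-1/2]$ and $\eta_1 = 5(\eta_0 + 3/4) \in [1/2,1]$, which forces $\eta_0 \in [-0.65,-0.55]$. But also $\eta_0 = \eta_1 - 7/4 \le -3/4$, so no such point exists.

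The $N_0$ in that union should be $N_1$. What the argument needs, and presumably what was intended, is that the second edge of $N_1$ lands strictly above $N_0$, i.e.\ in $M_0 \cup N_1 \cup M_+$. With that reading nothing clever is required. The half-strip $\{(\xi,\eta) \in P : \eta > -1/2\}$ is convex and contains both the $F$-image and the $h_d$-image of that edge (the latter lies in $M_0$). Hence $H_s = (1-s)F + s\,h_d$ keeps the edge out of $N_{a_{i+1}} = N_0$ for every $s$, exactly as in your $N_0$ case. Your fallback hypothesis $M_0 \cup M_+$ also works, but that set is not convex, so you would still have to pass to the half-strip. It also needlessly excludes images that meet $N_1$. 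So discard the first attempt and state the corrected hypothesis explicitly. A single straight-line homotopy in both coordinates then suffices, and homotopy invariance gives $I(\Phi_F, \interior N_a) = \pm 1$ as you conclude.
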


One can apply this theorem more generally by placing $N_0$ and $N_1$ on different positions on the real plane, but suitably changing the coordinate system (\Cref{fig:quadrangles}). \Cref{theorem:deformed} tells us that proving the existence of infinitely many periodic orbits reduces to characterizing the images of $N_0$ and $N_1$ under $F$; this is where a computer-assisted proof comes handy. The following lemma is useful for reducing the amount of computation needed.

\begin{lemma}
If $F$ is a one-to-one map and the images of the boundaries $\partial N_0, \partial N_1$ are contained in $P$, then $F(N_0) \subset P$ and $F(N_1) \subset P$.
\end{lemma}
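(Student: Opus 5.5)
The plan is to exploit injectivity together with the Jordan–Brouwer separation theorem in the plane. Each $N_a$ ($a \in \{0,1\}$) is a compact topological disk (a quadrangle) with boundary $\partial N_a$ a Jordan curve, and $F$ is continuous and one-to-one on $N_a$. Since $N_a$ is compact and $F$ is injective and continuous, $F|_{N_a}$ is a homeomorphism onto its image. Invariance of domain then gives $F(\interior N_a) = \interior F(N_a)$ and $\partial F(N_a) = F(\partial N_a)$. So $F(N_a)$ is a compact set whose topological boundary is the Jordan curve $F(\partial N_a)$.

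\textbf{Step 1: locate $F(N_a)$ using the Jordan curve theorem.} The curve $\Gamma_a \defeq F(\partial N_a)$ splits $\R^2$ into a bounded component $B_a$ and an unbounded one. The set $F(\interior N_a)$ is connected, open, and disjoint from $\Gamma_a$, so it lies in a single component; since it is bounded with boundary contained in $\Gamma_a$, it must be exactly $B_a$. (If it lay in the unbounded component, that component would contain a point of $\partial F(\interior N_a) \subset \Gamma_a$, which is impossible.) Hence $F(N_a) = \widebar{B_a} = B_a \cup \Gamma_a$.

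\textbf{Step 2: use convexity of the stripe.} The stripe $P$ is closed and convex, and its complement consists of two open half-planes, each of which is unbounded and connected. By hypothesis $\Gamma_a \subset P$. Each half-plane in $\R^2 \setminus P$ is therefore a connected set avoiding $\Gamma_a$, and because it is unbounded it lies in the unbounded component. So $B_a \cap (\R^2 \setminus P) = \emptyset$, which gives $B_a \subset P$ and hence $F(N_a) \subset P$. When the lemma is applied in the form needed for \Cref{theorem:deformed}, with $\interior P$ in place of $P$, the same argument works with $\interior P$, which is still convex with a complement made of two unbounded connected pieces.

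\textbf{Main obstacle.} The delicate step is Step 1: showing that the image of the interior fills exactly the bounded Jordan domain. This requires invariance of domain, and hence injectivity on a neighborhood or at least on $N_a$ with $F$ extended continuously. In the Poincaré-map setting used here, injectivity follows from uniqueness of ODE solutions. The payoff is that one only needs to check the image of the one-dimensional boundary curves, which is what makes the interval-arithmetic verification cheaper.
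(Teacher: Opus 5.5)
Your proof is correct. The paper gives no proof of this lemma: it is quoted from \citet{Galias97:Positive} only as a device for reducing the interval-arithmetic computation, so there is no argument in the paper to compare yours against. Your route is sound, including the variant with $\interior P$. It goes through a homeomorphism onto the image and invariance of domain, then uses the Jordan curve theorem to identify $F(\interior N_a)$ with the bounded component of $\R^2 \setminus F(\partial N_a)$. Finally, the two unbounded half-planes of $\R^2 \setminus P$ must lie in the unbounded component.

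It is, however, heavier than necessary, because you never need $F(\partial N_a)$ to be a Jordan curve. Invariance of domain applied to $F$ restricted to the open set $\interior N_a$ already makes $F(\interior N_a)$ open. Hence $\partial F(N_a) \subseteq F(N_a) \setminus F(\interior N_a) \subseteq F(\partial N_a) \subseteq P$. Then note that a compact set whose boundary lies in the stripe lies in the stripe. Suppose some $q \in F(N_a)$ had first coordinate $q_1 > 1$ (the case $q_1 < -1$ is symmetric). The point of $F(N_a)$ on the horizontal ray $\{q + s e_1 : s \geq 0\}$ with the largest first coordinate exists by compactness. It is a boundary point of $F(N_a)$ with first coordinate at least $q_1 > 1$, contradicting $\partial F(N_a) \subseteq P$.

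This shorter argument also shows that injectivity is needed only on $\interior N_a$. It is not needed on a neighborhood of $N_a$, as your closing paragraph suggests. Continuity, by contrast, is essential and, as you did, must be read as implicit in the statement.
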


\subsection{Poincar\'e map}
\label{sec:Poincare}

To characterize the three-dimensional flow arising from Chua's circuit, as defined in~\eqref{eq:Chua}, Galias defines a suitable two-dimensional discrete map referred to as the \emph{Poincar\'e map}. (More broadly, this is a general method to reduce problems in continuous time to corresponding problems in discrete time.) The \emph{traversal plane} is defined as $\Sigma = \{ (x, y, z) \in \R^3 : x = 1 \} $. The Poincar\'e map is defined by $P_\Sigma(\vec{x}) = \phi_{\tau(\vec{x})}(\vec{x})$, where $\phi_t(\vx)$ is the trajectory of Chua's circuit~\eqref{eq:Chua} starting at $\vx$, and $\tau(\vx)$ is the time needed to return to $\Sigma$. Galias defines $N_0$ and $N_1$ in such a way that the Poincar\'e map restricted to those sets is continuous. 

In particular, this is done as follows. On the transversal plane $\Sigma$, the following 8 points are chosen.
\begin{align*}
    A_1 &= (-0.1950, -2.6942956550), & A_5 &= (-0.3181, -4.1785885539), \\
    A_2 &= (-0.1761, -2.2243882059), & A_6 &= (-0.3315, -4.0981421985), \\
    A_3 &= (-0.2376, -2.9659317744), & A_7 &= (-0.3597, -4.4381670543), \\
    A_4 &= (-0.2410, -3.2489461290), & A_8 &= (-0.3472, -4.5294652668).
\end{align*}
These points lie on two parallel lines:
\begin{equation*}
    z = 9.623 (1.253 y - 0.0105) \quad \text{and} \quad z = 9.623 (1.253 y - 0.03565).
\end{equation*}
Let $N_1$ be the quadrangle $A_1 A_2 A_3 A_4$, and $N_0$ be the quadrangle $A_5 A_6 A_7 A_8$. Further, we let $N_{1U} = A_1 A_2, N_{1D} = A_3 A_4, N_{0U} = A_5 A_6, N_{0D} = A_7 A_8$. An illustration is given in~\Cref{fig:quadrangles}.

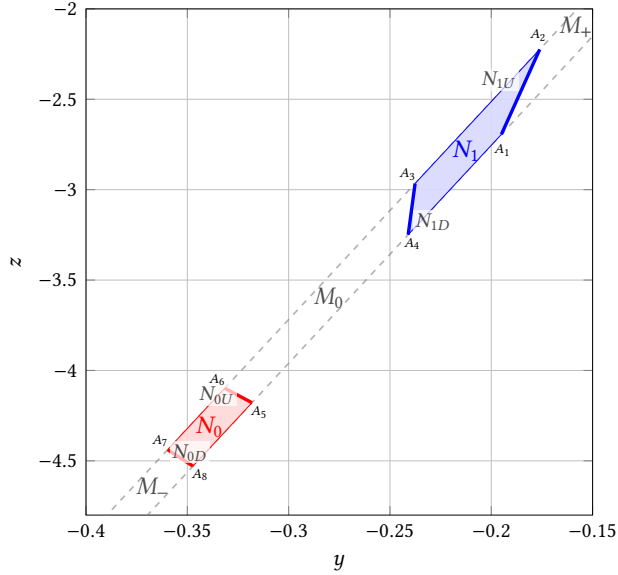
\begin{figure}
    \centering
    \scalebox{0.8}{\begin{tikzpicture}
\begin{axis}[
    title={},
    xlabel={$y$},
    ylabel={$z$},
    grid=major,
    xmin=-0.40, xmax=-0.15,
    ymin=-4.8, ymax=-2.0,
    width=10cm, height=10cm,
    label style={font=\small},
    tick label style={font=\footnotesize},
    legend style={at={(0.98,0.02)}, anchor=south east, font=\footnotesize, fill=white, fill opacity=0.8}
]

    \coordinate (A1) at (-0.1950, -2.6942956550);
    \coordinate (A2) at (-0.1761, -2.2243882059);
    \coordinate (A3) at (-0.2376, -2.9659317744);
    \coordinate (A4) at (-0.2410, -3.2489461290);
    \coordinate (A5) at (-0.3181, -4.1785885539);
    \coordinate (A6) at (-0.3315, -4.0981421985);
    \coordinate (A7) at (-0.3597, -4.4381670543);
    \coordinate (A8) at (-0.3472, -4.5294652668);

    
    \addplot[
        domain=-0.40:-0.15,
        samples=2, 
        color=gray,
        thick,
        dashed,
        opacity=0.6
    ] {(1.253*x - 0.0105) * 9.623} coordinate[pos=0.85] (L1_label);

    \addplot[
        domain=-0.40:-0.15,
        samples=2,
        color=gray,
        thick,
        dashed,
        opacity=0.6
    ] {(1.253*x - 0.03565) * 9.623} coordinate[pos=0.9] (L2_label);

    \node[gray, anchor=west, font=\footnotesize] at (L1_label) {};
    \node[gray, anchor=west, font=\footnotesize] at (L2_label) {};

    \fill[blue!15, opacity=0.9] (A1) -- (A2) -- (A3) -- (A4) -- cycle;
    \draw[blue, thin] (A2) -- (A3);
    \draw[blue, thin] (A4) -- (A1);
    \draw[blue, ultra thick] (A1) -- (A2) node[midway, above left, font=\footnotesize, text=black, fill=white, fill opacity=0.7, inner sep=1pt] {$N_{1U}$};
    \draw[blue, ultra thick] (A3) -- (A4) node[midway, below right, font=\footnotesize, text=black, fill=white, fill opacity=0.7, inner sep=1pt] {$N_{1D}$};
    \node[blue, font=\normalsize] at (barycentric cs:A1=1,A2=1,A3=1,A4=1) {$N_1$};

    \fill[red!15, opacity=0.9] (A5) -- (A6) -- (A7) -- (A8) -- cycle;
    \draw[red, thin] (A6) -- (A7);
    \draw[red, thin] (A8) -- (A5);
    \draw[red, ultra thick] (A5) -- (A6) node[midway, above left, font=\footnotesize, text=black, fill=white, fill opacity=0.7, inner sep=1pt,yshift=-0.2cm] {$N_{0U}$};
    \draw[red, ultra thick] (A7) -- (A8) node[midway, below right, font=\footnotesize, text=black, fill=white, fill opacity=0.7, yshift=0.3cm, xshift=-0.2cm, inner sep=1pt] {$N_{0D}$};
    \node[red, font=\normalsize] at (barycentric cs:A5=1,A6=1,A7=1,A8=1) {$N_0$};

    \node[above left, xshift=0.3cm, yshift=-0.5cm, font=\tiny] at (A1) {$A_1$};
    \node[above right, xshift=-0.3cm, yshift=0, font=\tiny] at (A2) {$A_2$};
    \node[below right, xshift=-0.4cm, yshift=0.4cm, font=\tiny] at (A3) {$A_3$};
    \node[below left, xshift=0.35cm, yshift=0.1cm, font=\tiny] at (A4) {$A_4$};

    \node[above left, xshift=0.4cm, yshift=-0.4cm, font=\tiny] at (A5) {$A_5$};
    \node[above right, xshift=-0.4cm, yshift=-0.1cm, font=\tiny] at (A6) {$A_6$};
    \node[below right, yshift=0.4cm, xshift=-0.4cm, font=\tiny] at (A7) {$A_7$};
    \node[below left, xshift=0.4cm, yshift=0.1cm, font=\tiny] at (A8) {$A_8$};

    
    \node[font=\normalsize, anchor=center, black!70] at (axis cs: -0.158, -2.1) {$M_+$};

    \node[font=\normalsize, anchor=center, black!70] at (axis cs: -0.28, -3.6) {$M_0$};

    \node[font=\normalsize, anchor=center, black!70] at (axis cs: -0.367, -4.65) {$M_-$};

\end{axis}
\end{tikzpicture}}
    \caption{The definition of $N_0$ and $N_1$ upon which the Poincar\'e map is defined.}
    \label{fig:quadrangles}
\end{figure}

\subsection{Topological entropy}
\label{sec:top-entropy}

We next recall the notion of topological entropy for discrete-time systems~\citep{Adler65:Topological}. We will introduce an equivalent definition based on the notion of \emph{separated sets}.

\begin{definition}[Separated sets]
    Let $(X, \rho)$ be a compact metric space. A set $E \subset X$ is called $(t, \epsilon)$-separated if for every two distinct points $x, x' \in E$, there exists $\tau \in \{0, \dots, t-1 \}$ such that $\rho( F^{(\tau)}(x), F^{(\tau)}(x') ) > \epsilon$.
\end{definition}

\begin{theorem}[\citealp{Bowen71:Periodic}]
    \label{theorem:top-entropy}
    The topological entropy of a map $F$ can be expressed as 
    \begin{equation*}
        h(F) = \lim_{\epsilon \to 0} \limsup_{t \to \infty} \frac{1}{t} \log s_t(\epsilon),
    \end{equation*}
    where $s_t(\epsilon) \defeq \max \{\card E : E \text{ is } (t, \epsilon)\text{-separated} \} $.
\end{theorem}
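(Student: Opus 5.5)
The plan is to show that Bowen's separated-set quantity agrees with the open-cover definition of topological entropy of \citet{Adler65:Topological}, namely $h(F) = \sup_{\mathcal{U}} \lim_{t \to \infty} \frac{1}{t} \log N\bigl(\bigvee_{\tau=0}^{t-1} F^{-\tau}\mathcal{U}\bigr)$. Here $N(\cdot)$ is the minimal cardinality of a subcover, and the supremum ranges over finite open covers $\mathcal{U}$ of the compact space $X$. Throughout I assume $F$ is continuous. I would introduce the Bowen metrics $\rho_t(x,x') \defeq \max_{0 \le \tau < t} \rho(F^{(\tau)}(x), F^{(\tau)}(x'))$, which are compatible with the topology by continuity and compactness. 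In this language a $(t,\epsilon)$-separated set is exactly an $\epsilon$-separated set for $\rho_t$. I would also use the companion notion of $(t,\epsilon)$-spanning sets, with minimal cardinality $r_t(\epsilon)$.

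The first step is the elementary comparison $r_t(\epsilon) \le s_t(\epsilon) \le r_t(\epsilon/2)$. A maximal separated set is spanning, which gives the left inequality. For the right inequality, each point of a $\rho_t$-spanning set at scale $\epsilon/2$ can be within $\epsilon/2$ of at most one point of a separated set at scale $\epsilon$. Both quantities are monotone in $\epsilon$, so $\lim_{\epsilon\to 0}\limsup_t \frac1t\log s_t(\epsilon)$ exists in $[0,\infty]$ and equals the analogous limit for $r_t$. This already shows the expression in \Cref{theorem:top-entropy} is well defined.

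The second step compares with covers, in two directions.

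\emph{Lower bound on the cover-based entropy.} Take $\mathcal{U}$ to be a finite cover by open balls of radius $\epsilon/2$. Any element of $\bigvee_{\tau<t} F^{-\tau}\mathcal{U}$ has $\rho_t$-diameter at most $\epsilon$, so it contains at most one point of a $(t,\epsilon)$-separated set. Hence $s_t(\epsilon) \le N\bigl(\bigvee_{\tau<t} F^{-\tau}\mathcal{U}\bigr)$, and taking $\limsup_t$ and then $\epsilon \to 0$ bounds Bowen's quantity by $h(F)$.

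\emph{Upper bound on the cover-based entropy.} Given any finite open cover $\mathcal{U}$, let $\delta$ be a Lebesgue number. Take a $(t,\delta/2)$-spanning set $E$. Each $\rho_t$-ball of radius $\delta/2$ around a point of $E$ has all its $\tau$-images in balls of radius $\delta/2$, hence inside single elements of $\mathcal{U}$, so the ball lies in a single element of the join. Therefore $N\bigl(\bigvee_{\tau<t} F^{-\tau}\mathcal{U}\bigr) \le r_t(\delta/2) \le s_t(\delta/2)$. Taking $\frac1t\log$, $t\to\infty$, and then the supremum over $\mathcal{U}$ (small $\delta$) bounds $h(F)$ by Bowen's quantity.

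Combining the two directions gives equality. The cover limit exists by subadditivity of $\log N$ under joins (Fekete's lemma), which also justifies replacing $\lim$ by $\limsup$.

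The only genuinely delicate point is the second direction. The worry is the bookkeeping that a $\rho_t$-ball is refined by the join: one needs $\delta/2 < \delta$ so that each $\rho$-ball sits inside a Lebesgue-number set. A second issue is that the supremum over covers is achieved along covers with vanishing Lebesgue number, which requires that refining covers only increases $N$ of the join. I expect that monotonicity check, together with compactness to guarantee finite spanning and separated sets, to be where care is needed; everything else is routine.
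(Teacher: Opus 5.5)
Your argument is correct. It is the standard proof, as in Walters' textbook, that Bowen's separated-set quantity equals the Adler--Konheim--McAndrew open-cover entropy. The paper gives no proof of this statement: it cites \citet{Bowen71:Periodic} and uses the formula as an equivalent definition. It also implicitly assumes that $F$ is a continuous self-map of the compact metric space $(X,\rho)$, which you rightly make explicit. So there is no paper proof to compare against; your write-up supplies what the paper takes for granted. Your comparison $r_t(\epsilon)\le s_t(\epsilon)\le r_t(\epsilon/2)$ is also consistent with the paper's strict convention ($\rho>\epsilon$) for separation.

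Two small remarks on the write-up.

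First, in the upper bound on the cover-based entropy, a closed $\rho$-ball of radius $\delta/2$ can have diameter exactly $\delta$. If you use the usual convention that every set of diameter \emph{less than} $\delta$ lies in a member of $\mathcal{U}$, work with $(t,\delta/3)$-spanning sets instead, or state the Lebesgue-number convention you use. Nothing downstream changes.

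Second, the refinement-monotonicity issue you flag as delicate is not needed. In one direction, each finite cover $\mathcal{U}$ is bounded on its own by $\limsup_t \frac{1}{t}\log s_t(\delta/2)$. That is at most $\lim_{\epsilon\to 0}\limsup_t \frac{1}{t}\log s_t(\epsilon)$, because $s_t(\epsilon)$ is non-increasing in $\epsilon$. In the other direction, each fixed $\epsilon$ is bounded by the entropy of the single cover by $\epsilon/2$-balls. So the supremum over covers never has to be approached along a sequence with vanishing Lebesgue number.
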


For the Poincar\'e map $P_\Sigma$ defined in~\Cref{sec:Poincare}, which is induced by Chua's circuit~\eqref{eq:Chua}, \citet{Galias97:Positive} used~\Cref{theorem:top-entropy} to prove that $h(P_\Sigma) \geq \log \frac{1 + \sqrt{5}}{2}$. In particular, his argument lower bounds $s_{t}(\epsilon)$ by $|T_t|$, which is in turn at least $( \frac{1 + \sqrt{5}}{2} )^t$ ; the lower bound on the topological entropy of $P_\Sigma$ then follows from~\Cref{theorem:top-entropy}. This part of the argument directly extends to our setting.

The final step is to use the fact that the Poincar\'e map has positive topological entropy to conclude the same for the flow corresponding to Chua's circuit. 

\begin{definition}
    The topological entropy of a flow $(X, F)$ is the topological entropy of the map $\pi_1$, where $\pi_t : X \to X$ is defined as $\pi_t(\vx) = F(\vx, t)$.
\end{definition}

The existence of infinitely many periodic orbits of $P_\Sigma$ on $N_0 \cup N_1$ can be used to prove that the topological entropy of the flow is positive. The argument given by~Galias extends directly in our setting.
\section{Omitted details}
\label{sec:details}

This section contains additional details and background on our numerical simulations. 

\subsection{Cobweb plot}
\label{sec:cobweb}

In~\Cref{fig:cobweb}, we presented several \emph{cobweb plots}~\citep{Strogatz94:Nonlinear} to visualize the trajectory of the system for different values of the learning rate. Cobwebs are a standard tool in dynamical systems because they provide immediate geometric intuition regarding the behavior of the orbit. They work as follows. Starting from an initial state $m^{(0)}$ on the horizontal axis, we draw a vertical line to the update map to find the next value $m^{(1)}$. We then project this value horizontally on the identity line. This values serves as the input for the next iteration. Repeating this vertical-to-curve and horizontal-to-identity pattern traces the full orbit of the dynamics.

\subsection{Bifurcation diagram}
\label{sec:bifurcation}

A mainstay tool to capture the global behavior of a system as a function of its parameter---in our case the learning rate---is the \emph{bifurcation diagram}. The horizontal axis represents the control parameter $\eta$, while the vertical axis shows the set of values asymptotically attained by the system; the initial values are not included to discount transient phenomena. For each $\eta$, the system is simulated for a sufficiently long period; the iterates are then plotted vertically. As a result, a curve comprising a single point shows convergence to a fixed point, while a branching into $k$ distinct points reveals a period-$k$ limit cycle. The disintegration of these branches into dense, scattered regions marks the onset of chaos.

\Cref{fig:bifurcation} shows the bifurcation diagram of the dynamical system~\eqref{eq:expmap}, corresponding to entropic mirror descent, whereas~\Cref{fig:GD-bifurcation} shows the one for~\eqref{eq:GD}---Euclidean mirror descent.

\begin{figure}[t]
    \centering
    \includegraphics[scale=0.35]{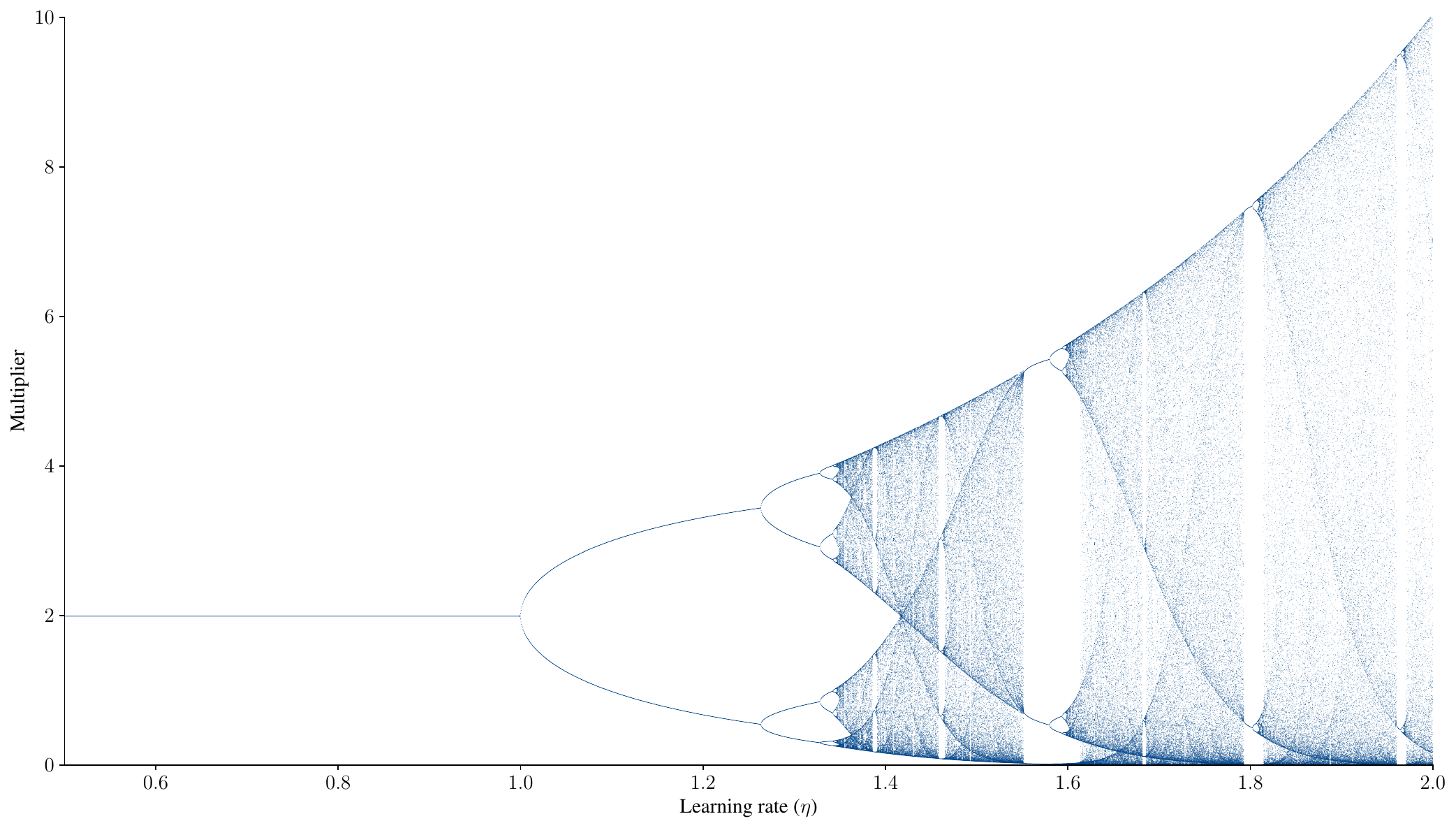}
    \caption{Bifurcation diagram for the system~\eqref{eq:expmap} under the valuation profile given in~\eqref{eq:param-valuations} for $v = 2$.}
    \label{fig:bifurcation}
\end{figure}

\begin{figure}[h]
    \centering
    \includegraphics[scale=0.35]{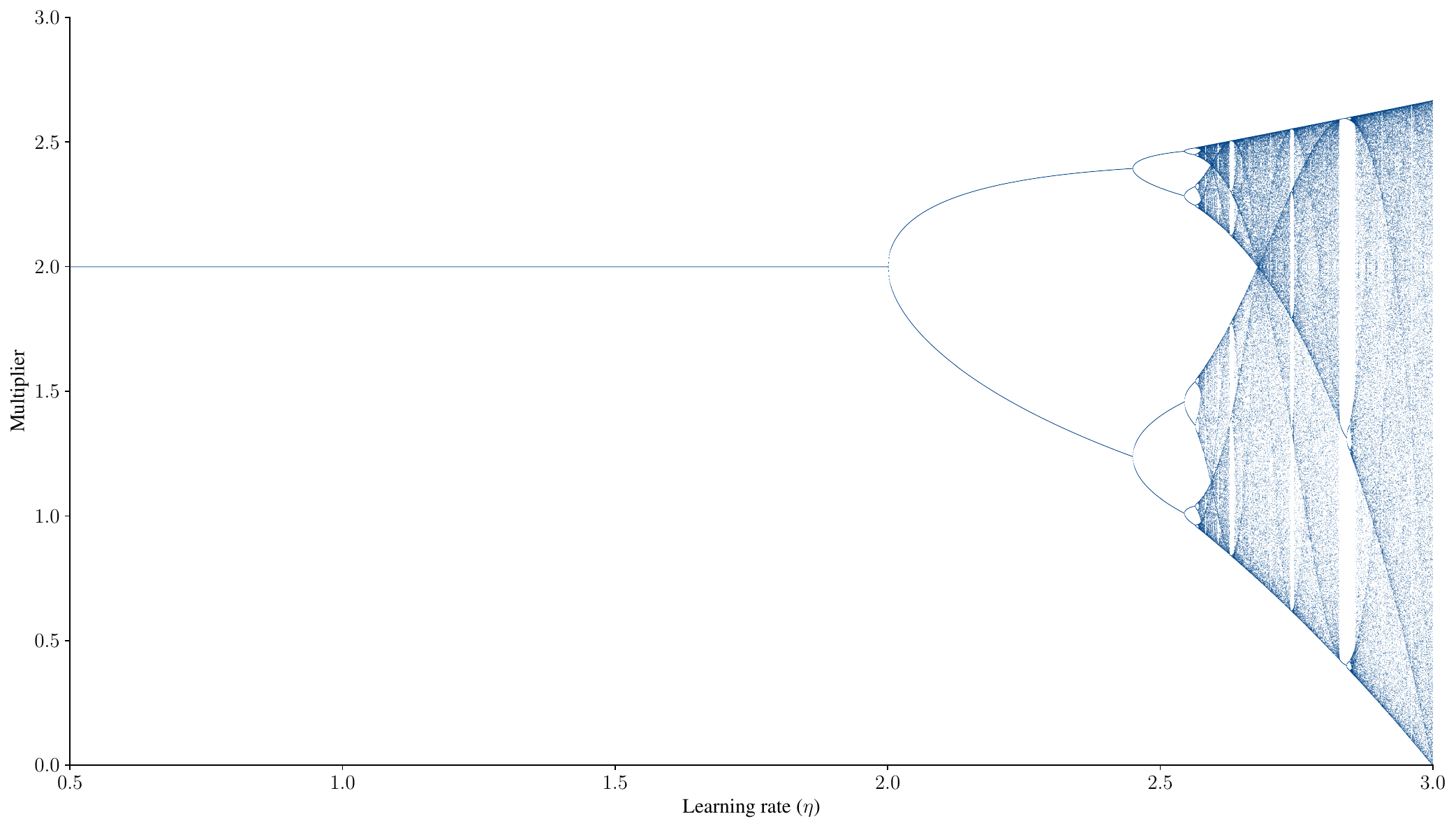}
    \caption{Bifurcation diagram for the system~\eqref{eq:GD} under the valuation profile given in~\eqref{eq:param-valuations} for $v = 2$.}
    \label{fig:GD-bifurcation}
\end{figure}

\subsection{Lyapunov exponents}
\label{appendix:LLE}

A standard way to numerically confirm the onset of chaos is by estimating the largest \emph{Lyapunov exponent}. This serves as a measure of the average rate at which two infinitesimally close trajectories separate. A positive Lyapunov exponent is important as it implies sensitivity to initial conditions: microscopic uncertainties are amplified exponentially over time.

Our numerical estimation follows the standard Benettin algorithm. We simulate a reference trajectory alongside a shadow trajectory perturbed by a tiny magnitude $d_0 = 10^{-8}$. Because chaotic orbits diverge rapidly, we do not let the shadow trajectory run free, as it would quickly separate too far to capture the local rate of expansion. Instead, at fixed time intervals of $\Delta t = 0.5$, we measure the new Euclidean distance $d(t)$, record the logarithmic expansion rate $\ln(d(t)/d_0)$, and then pull back the shadow trajectory toward the reference to restore the initial separation $d_0$ while preserving its direction. The largest Lyapunov exponent is computed as the long-term time average of these local expansion rates. For the experiment reported in the main body, we integrated over a total time horizon of $T=2000$, discarding the first 200 time units to eliminate transient effects.

\subsection{Chua's circuit}
\label{appendix:Chuacircuit}

We continue by providing further details on Chua's circuit. For the convenience of the reader, let us first recall the state equations governing its evolution.

\begin{figure}[!h]
    \centering
    \scalebox{0.8}{\begin{tikzpicture}
    \pgfmathsetmacro{\Ga}{-3.4429} 
    \pgfmathsetmacro{\Gb}{-2.1849} 
    \pgfmathsetmacro{\E}{1}     
    \pgfmathsetmacro{\Ya}{\Ga*\E} 

    \begin{axis}[
        axis lines = center,
        xlabel = {$x$ (Voltage)},
        ylabel = {$g(x)$ (Current)},
        xmin = -2.5, xmax = 2.5,
        ymin = -7, ymax = 7,
        grid = major,
        grid style = {dashed, gray!30},
        width = 10cm,
        height = 10cm, 
        xtick = {-\E, \E},
        xticklabels = {$-1$, $1$},
        ytick = {-\Ya, \Ya},
        yticklabels = {},
        axis line style = {-stealth, thick},
        ticklabel style={font=\small},
        xlabel style={at={(ticklabel* cs:1)},anchor=north west},
        ylabel style={at={(ticklabel* cs:1)},anchor=south east}
    ]

    \addplot[
        mark=none,
        domain=-2:2,
        color=red!80!black,
        ultra thick,
        samples at={-2, -\E, \E, 2} 
    ]
    coordinates{
        (-2,  {-\Ya + \Gb*(-2 - (-\E))}) 
        (-\E, {-\Ya})                   
        (\E,  {\Ya})                    
        (2,   {\Ya + \Gb*(2 - \E)})     
    };

    
    \draw[dashed, gray] (axis cs:\E, 0) -- (axis cs:\E, \Ya) -- (axis cs:0, \Ya);
    \draw[dashed, gray] (axis cs:-\E, 0) -- (axis cs:-\E, -\Ya) -- (axis cs:0, -\Ya);
    
    
    \node[anchor=west, font=\footnotesize] at (axis cs: 0.47, -1.5) {$G_a \approx -3.44$};
    
    \node[anchor=west, font=\footnotesize] at (axis cs: 1, -6) {$G_b \approx -2.18$};
    \node[anchor=east, font=\footnotesize] at (axis cs: -1.2, 6) {$G_b \approx -2.18$};
    \end{axis}
\end{tikzpicture}}
    \caption{The characteristic curve $g(x)$ of Chua's diode, defined in~\eqref{eq:nonlinear-Chua}.}
    \label{fig:diode}
\end{figure}
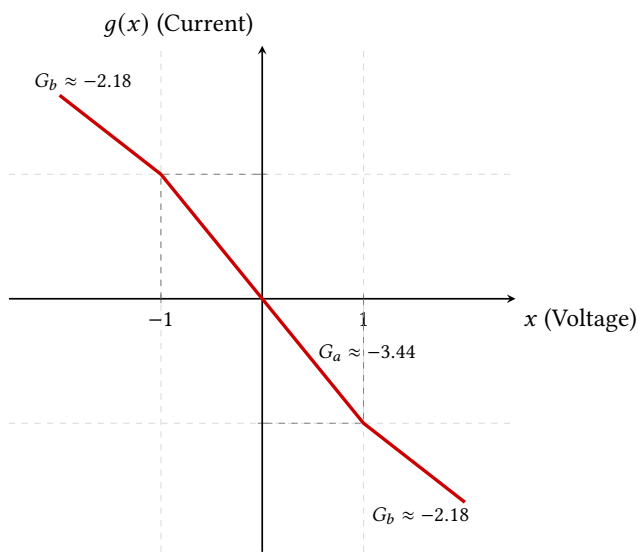

\begin{equation*}\begin{gathered}C_1 \frac{dx}{dt} = G(y - x) - g(x),\\ C_2 \frac{dy}{dt} = G(x - y) + z, \\
L \frac{dz}{dt} = -y - R_0 z.
\end{gathered}\end{equation*}
What makes this system interesting is the presence of the nonlinear function $g(x)$, which is defined as
\begin{equation}
\label{eq:nonlinear-Chua}
g(x) = G_b x + \frac{1}{2}(G_a - G_b)(|x + 1| - |x - 1|).
\end{equation}
This represents the three-segment piecewise-linear characteristic curve of Chua's diode, illustrated in~\Cref{fig:diode}. Furthermore, the set of parameters used by Galias to prove positive topological entropy is as follows.
\begin{itemize}
\item $C_1 = 1$; 
\item $C_2 = 9.3515$; 
\item $L = 0.06913$; 
\item $R = 0.33065$ (where $G = 1/R$); \item $G_a = -3.4429$;
\item $G_b = -2.1849$;
\item $R_0 = 0.00036$.
\end{itemize}

\Cref{fig:Chua-sim} depicts Chua's iconic double scroll attractor~\citep{Chua86:Double}.

\begin{figure}
    \centering
    \includegraphics[scale=0.55]{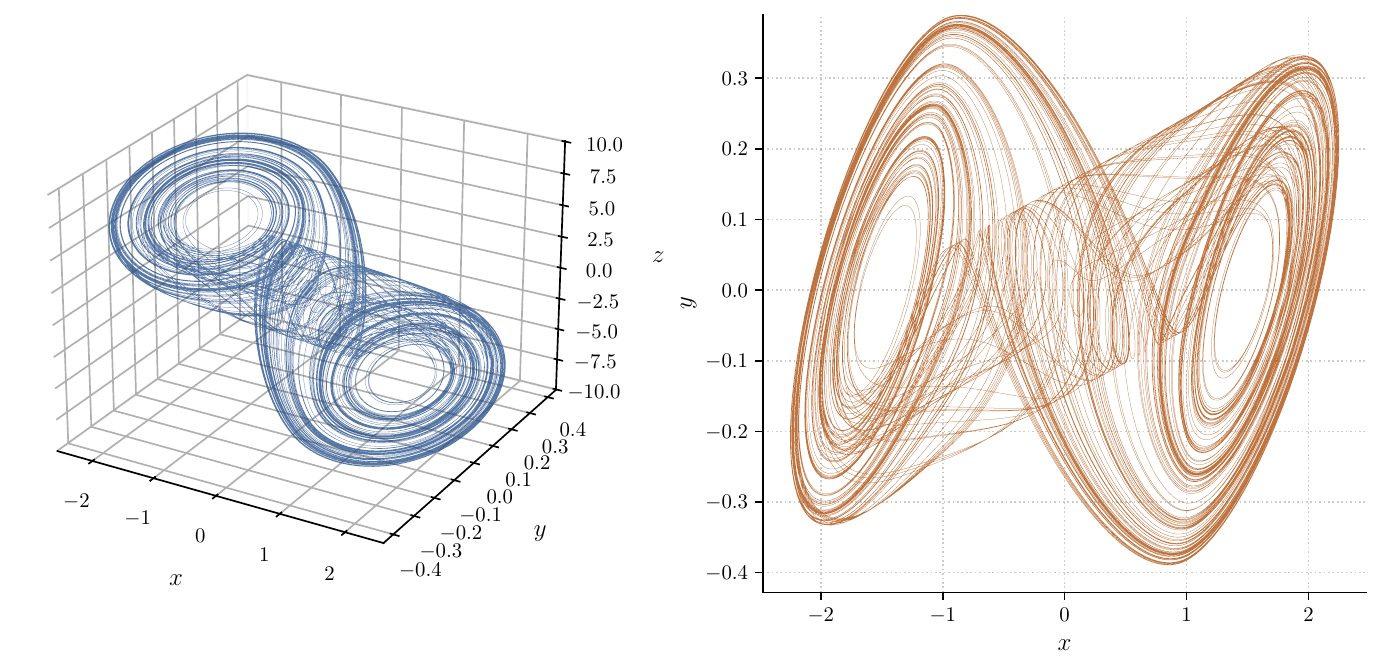}
    \caption{The double scroll attractor of Chua's circuit~\eqref{eq:Chua}.}
    \label{fig:Chua-sim}
\end{figure}

\paragraph{Physical interpretation} The above dynamical system can be obtained by applying Kirchhoff's laws to the electronic circuit depicted in~\Cref{fig:circuit} through the correspondence $x \leftrightarrow v_{C_1}, y \leftrightarrow v_{C_2}, z \leftrightarrow i_L$. The physical derivation is as follows.
\begin{itemize}
\item We first apply Kirchhoff's current law at the node joining $C_1$, $R$, and the Chua diode, which yields $i_G = i_{C_1} + i_g$. Since $i_{G} = G (v_{C_2} - v_{C_1})$ (current flows from the higher to the lower potential), $i_g = g(v_{C_1})$, and $i_{C_1} = C_1 \frac{d v_{C_1} }{dt}$, the first differential equation follows.
\item We then apply Kirchhoff's current law at the node joining $C_2$, $R$, and the inductor, which yields $i_{C_2} = - i_G + i_L$, and the second differential equation follows. 
\item Finally, we apply Kirchhoff's voltage law applied to the loop containing the inductor $L$ and the second capacitor $C_2$ to obtain the third equation.
\end{itemize}

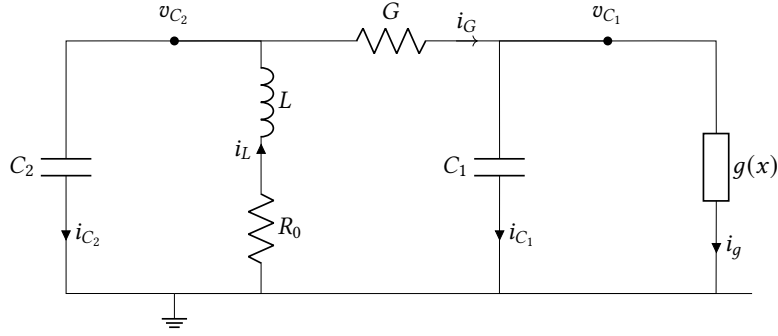
\begin{figure}
    \centering
    \scalebox{0.8}{\begin{circuitikz}[american, scale=1.2]

\tikzset{
  node/.style = {circle, fill, inner sep=1.5pt}
}

\coordinate (G) at (0,0);
\coordinate (Y) at (0,3.5);   
\coordinate (X) at (6,3.5);   

\draw (G) node[ground]{} -- (8,0);

\draw (Y) to[R, l=$G$] (X);

\draw[->]
  ($(Y)!0.65!(X)$) -- ($(Y)!0.70!(X)$)
  node[midway, above] {$i_G$};


\draw (Y) to[short] ++(-1.5,0)
          to[C, l_=$C_2$, i^=$i_{C_2}$] ++(0,-3.5)
          -- ++(1.5,0);

\draw (Y) to[short] ++(1.2,0)
          to[L, l=$L$, i_<=$i_L$] ++(0,-1.7)
          to[R, l=$R_0$] ++(0,-1.8)
          -- ++(-1.2,0);


\draw (X) to[short] ++(-1.5,0)
          to[C, l_=$C_1$, i^=$i_{C_1}$] ++(0,-3.5)
          -- ++(1.5,0);

\draw (X) to[short] ++(1.5,0)
          to[generic, l=$g(x)$, i^=$i_g$] ++(0,-3.5)
          -- ++(-1.5,0);

\draw (Y) node[node]{} node[above=4pt] {$v_{C_2}$};
\draw (X) node[node]{} node[above=4pt] {$v_{C_1}$};

\end{circuitikz}}
    \caption{Chua's electronic circuit.}
    \label{fig:circuit}
\end{figure}
\section{Omitted proofs}
\label{appendix:proofs}

This section contains the proofs omitted from the main body.

\subsection{Proofs from Section~\ref{sec:Chua}}
\label{sec:proofsChua}

We begin with the omitted proofs from~\Cref{sec:Chua}.

\nonlinear*

\begin{proof}
    We assume that there is a continuum of items, which we represent with a density function $\rho: [1.1, 1.9] \to \R_{\geq 0}$. We parameterize $\rho$ in terms of the reserve price attached to each item in the continuum, so that $\rho(p)$ is the density of items with reserve price $p$. Moreover, we assume that the autobidder values these items equally for 1 per unit. Let $m \in [1.1, 1.9]$ be the multiplier. The utility of the autobidder reads $u(m) = \int_{p = 1.1}^m ( 1 - p) \rho(p) dp$. As a result, if we set $\rho(p) = h'(p)/(1 - p) \geq 0$ for $p \in [1.1, 1.9]$, the claim follows.
\end{proof}

It is worth noting that the density constructed above may not be continuous as a function of the price. If one is willing to tolerate an arbitrarily small amount of error, which our main simulation result does, one can afford to use a continuous density (for example, by approximating each function $h_i$ using Weierstrass theorem).

\approxint*

\begin{proof}
Let $L > 0$ be a parameter such that $|m(t + \tau) - m(t)| \leq L \tau$ for any $t$ and $\tau$. We have
\begin{equation*}
    e^{-\lambda t} \int_{\tau=0}^t \lambda e^{\lambda \tau} ( 3 - m(\tau)) d \tau = e^{-\lambda t} \int_{\tau=0}^{t - \epsilon} \lambda e^{\lambda \tau} ( 3 - m(\tau)) d \tau + e^{-\lambda t} \int_{\tau=t - \epsilon}^t \lambda e^{\lambda \tau} ( 3 - m(\tau)) d \tau
\end{equation*}
for any $\epsilon \in (0, t)$. We bound the two resulting terms separately. First,
\begin{equation*}
    e^{-\lambda t} \int_{\tau=0}^{t - \epsilon} \lambda e^{\lambda \tau} ( 3 - m(\tau)) d \tau \leq 2 e^{-\lambda t} \int_{\tau=0}^{t-\epsilon} \lambda e^{\lambda \tau} d \tau \leq 2 e^{-\lambda \epsilon} - 2e^{-\lambda t}
\end{equation*}
since $m(\tau) \geq 1$. Second,
\begin{align*}
    e^{-\lambda t} \int_{\tau=t - \epsilon}^t \lambda e^{\lambda \tau} ( 3 - m(\tau)) d \tau &\leq e^{-\lambda t} \int_{\tau=t - \epsilon}^t \lambda e^{\lambda \tau} ( 3 - m(t)) d \tau + \epsilon L e^{-\lambda t} \int_{\tau = t - \epsilon}^t \lambda e^{\lambda \tau} \\
    &\leq (3 - m(t)) ( 1 - e^{- \lambda \epsilon}) + \epsilon L ( 1 - e^{- \lambda \epsilon}) \\
    &\leq 3 - m(t) - e^{-\lambda \epsilon} + \epsilon L
\end{align*}
since $m(t) \leq 2$. Combining and using the fact that $\widebar{m}(t) = e^{-\lambda t} \int_{\tau=0}^t \lambda e^{\lambda \tau} ( 3 - m(\tau)) d \tau + e^{- \lambda t} \widebar{m}(0)$,
\begin{align*}
    \widebar{m}(t) &\leq 3 - m(t) + \epsilon L + e^{-\lambda \epsilon} - 2 e^{-\lambda t} + e^{- \lambda t} \widebar{m}(0) \\
    &\leq 3 - m(t) + \epsilon L + e^{-\lambda \epsilon}.
\end{align*}
Similar reasoning bounds the difference from below as
\begin{equation*}
    \widebar{m}(t) \geq 3 - m(t) - \epsilon L - 2 e^{-\lambda \epsilon}.
\end{equation*}
As a result,
\begin{equation*}
    | \widebar{m}(t) -  (3 - m(t)) | \leq  \inf_{\epsilon \in (0, t) } \left( \epsilon L + 2 e^{- \lambda \epsilon} \right).
\end{equation*}
Now, let $\epsilon^* = \frac{1}{\lambda} \ln \left( \frac{2 \lambda }{L} \right)$ be the minimizer of the above. If $\epsilon^* \leq t$, it follows that
\begin{equation*}
    | \widebar{m}(t) -  (3 - m(t)) | \leq \Theta_\lambda \left( \frac{\ln \lambda}{\lambda} \right),
\end{equation*}
as desired. Otherwise, $t \leq \frac{1}{\lambda} \ln \left( \frac{2 \lambda }{L} \right)$. We have
\begin{align}
    \widebar{m}(t) &= e^{-\lambda t} \int_{\tau=0}^t \lambda e^{\lambda \tau} ( 3 - m(\tau)) d \tau + e^{- \lambda t} \widebar{m}(0) \notag \\
    &\leq (3 - m(t) + L t) e^{-\lambda t} \int_{\tau=0}^t \lambda e^{\lambda \tau} d \tau  + e^{- \lambda t} \widebar{m}(0) \notag \\
    &\leq (3 - m(t) + L t)(1 - e^{-\lambda t} ) + e^{- \lambda t} (3 - m(t) + L t) \notag \\
    &= 3 - m(t) + Lt.\label{align:final1}
\end{align}
Similarly,
\begin{align}
    \widebar{m}(t) &\geq (3 - m(t) - L t) ( 1 - e^{-\lambda t}) + e^{- \lambda t} \widebar{m}(0) \notag \\
    &\geq (3 - m(t) - L t) ( 1 - e^{-\lambda t}) + e^{- \lambda t} (3 - m(t) - L t) \notag \\
    &= 3 - m(t) - L t,\label{align:final2}
\end{align}
where we used the fact that $\widebar{m}(0) = 3 - m(0)$. Combining~\eqref{align:final1} and~\eqref{align:final2}, the proof follows.
\end{proof}

\simgen*

\begin{proof}
Starting from the original dynamical system, we consider the system
\begin{equation*}
\begin{gathered}
\frac{dx_i}{dt} = \langle \mat{A}_{i, :}, \vec{x} \rangle + h_i(x_i), \\
\frac{d \widebar{x}_i }{dt} = 3 \lambda - \lambda x_i(t) - \lambda \widebar{x}_i(t).
\end{gathered}
\quad i = 1, 2, \dots, d.
\end{equation*}
This can be equivalently cast as
\begin{equation}
    \label{eq:nonlinear-new}
\begin{gathered}
\frac{dx_i}{dt} = \sum_{j = 1}^n \left( \mathbbm{1} \{ \mat{A}_{i, j} > 0 \} \mat{A}_{i, j} x_j + \mathbbm{1} \{ \mat{A}_{i, j} < 0 \} \mat{A}_{i, j} x_j  \right)  + h_i(x_i), \\
\frac{d \widebar{x}_i }{dt} = 3 \lambda - \lambda x_i(t) - \lambda \widebar{x}_i(t).
\end{gathered}
\quad i = 1, 2, \dots, d.
\end{equation}
We now consider an approximation to~\eqref{eq:nonlinear-new} using~\Cref{lemma:approxint}:
\begin{equation}
    \label{eq:sub}
\begin{gathered}
\frac{dx_i}{dt} = \sum_{j = 1}^n \left( \mathbbm{1} \{ \mat{A}_{i, j} > 0 \} \mat{A}_{i, j} (3 - \widebar{x}_j) - \mathbbm{1} \{ \mat{A}_{i, j} < 0 \} (-\mat{A}_{i, j}) x_j  \right)  + h_i(x_i), \\
\frac{d \widebar{x}_i }{dt} = 3 \lambda - \lambda x_i(t) - \lambda \widebar{x}_i(t).
\end{gathered}
\quad i = 1, 2, \dots, d.
\end{equation}
What we did was to replace the non-competitive appearances of $x_j$ with $3 - \widebar{x}_j$ so that the linear component becomes competitive. This substitution is justified by~\Cref{lemma:approxint}. In particular, \Cref{assumption:nonlinear} together with~\Cref{lemma:approxint} imply that, for sufficiently large $\lambda$, $x_i(t) \in [1.05, 1.95]$ and $\widebar{x}_i(t) \in [1.05, 1.95]$. For convenience in the notation, we define, for each $i \in [d]$, $P_i \defeq \{ j \in [d] : \mat{A}_{i, j} > 0 \} $ and $N_i \defeq \{ j \in [d] : \mat{A}_{i, j} < 0 \} $. With this notation, we write~\eqref{eq:sub} as
\begin{equation*}
\begin{gathered}
\frac{dx_i}{dt} = \sum_{j \in P_i} - \mat{A}_{i, j} \widebar{x}_j - \sum_{j \in N_i} (-\mat{A}_{i, j} x_j) + 3 \sum_{j \in P_i} \mat{A}_{i, j}  + h_i(x_i), \\
\frac{d \widebar{x}_i }{dt} = 3 \lambda - \lambda x_i(t) - \lambda \widebar{x}_i(t).
\end{gathered}
\quad i = 1, 2, \dots, d.
\end{equation*}
We write this equivalently as
\begin{equation*}
\begin{gathered}
\frac{dx_i}{dt} = \sum_{j \in P_i} (2 \mat{A}_{i, j} - \mat{A}_{i, j} \widebar{x}_j) + \sum_{j \in N_i} (2\widebar{\mat{A}}_{i, j} - \widebar{\mat{A}}_{i, j} x_j) + \sum_{j \in P_i} \mat{A}_{i, j} + 2 \sum_{j \in N_i} \mat{A}_{i, j}  + h_i(x_i), \\
\frac{d \widebar{x}_i }{dt} = 3 \lambda - \lambda x_i(t) - \lambda \widebar{x}_i(t).
\end{gathered}
\quad i = 1, 2, \dots, d.
\end{equation*}
To make the purpose of the above step clear, we used the notation $\widebar{\mat{A}} = - \mat{A}$. Moreover, for each $i$ such that $i \in N_i$, we incorporate that term into the nonlinear function $h_i$; since the added linear function has negative derivative, this does not affect the application of~\Cref{lemma:nonlinear-sim}.

To complete the proof, we use certain ideas from the reduction of~\citet{Leme24:Complex}. For each variable $i \in [d]$ and $j \in P_i$, the term $2 \mat{A}_{i, j} - \mat{A}_{i, j} \widebar{x}_j$ can be simulated by having autobidder $i$ compete with autobidder $\widebar{j}$ for an item valued $2 \mat{A}_{i, j}$ and $\mat{A}_{i, j}$, respectively. Because of the invariance $x_i(t) \in [1.05, 1.95]$ and $\widebar{x}_i(t) \in [1.05, 1.95]$, it is always the former autobidder who procures the item, but the price is dictated by the other bidder, matching the desired linear term. Similarly, for each variable $i \in [d]$ and $j \in N_i$, the term $2 \widebar{\mat{A}}_{i, j} - \widebar{\mat{A}}_{i, j} x_j$ can be simulated by having the autobidder $i$ compete with autobidder $j$ for an item valued $2 \widebar{\mat{A}}_{i, j}$ and $\widebar{\mat{A}}_{i, j}$, respectively. (By construction, $\widebar{\mat{A}}_{i, j} > 0$ for all $j \in N_i$.) 

What remains is to treat the constant terms. A positive constant $C_i$ can always be incorporated into the utility of each autobidder $i$ by having them obtain an item of value $C_i$ without any competition. For a negative constant $C_i < 0$, we introduce an auxiliary autobidder $i'$ whose multiplier is initialized and remains at $2$. This can be achieved, for example, by having $i'$ always win an item of value $1$ with no competition and also win an item of value $1$ with reserve price $2$. Now, we can subtract a certain constant from $i$'s utility by having $i'$ value an item at $1$ while $i$ values it at $1.99$. It is always $i$ that wins this item, adding $-0.01$ in its utility. We can repeat this process with new auxiliary autobidders until the remaining constant $C_i'$ lies in $(-0.01, 0)$. Finally, we employ the same construction but $i$ now values the item for $2 + C_i'$.
\end{proof}

\subsection{Proofs from Section~\ref{sec:discrete}}
\label{sec:proofs-discrete}

We next establish~\Cref{lemma:period3}. To do so, we examine $F^{(1)}, F^{(2)},$ and $F^{(3)}$; see~\Cref{fig:compositions}.

\periodthree*

\begin{proof}
    We first claim that $F$ has a unique fixed point at $m = v$. That $v$ is a fixed point follows because $u(v) = 0$; when both autobidders pick a multiplier equal to $v$, the first autobidder wins the first item but pays $v$ for that item and the second autobidder wins the second item and pays $v$ for that item. For any multiplier $m > v$, the utility of each autobidder is negative since they pay more than $v$ for the item, whereas for any multiplier $m < v$ the utility of each autobidder is positive since each wins their favorite item and pays less than $v$. In particular, $F$ is given by $F(m) = m e^{\eta (v - m)}$. In other words, for any $m \neq v$ it holds that $F(m) \neq m$.

    We next turn to the analysis of $F^{(2)} = F(F(\cdot))$. Since $F(m) = m e^{\eta (v - m)}$, we have that $F(F(m)) = m e^{\eta (v - m)} e^{\eta(v - m e^{\eta (v - m)})}$. As a result, $F(m) = F(F(m)) \iff v - m e^{\eta (v - m)} = 0$. To analyze the solutions to this equation, we consider the function $g(m) \defeq v - m e^{\eta(v - m)}$. Its derivative reads $g'(m) = e^{\eta(v - m)} (\eta m - 1)$. So, since $g(0) > 0$ and $g(v) = 0$, it follows that $g$ has a root somewhere in the interval $(0, 1/\eta)$ and the obvious root at $m = v$. Thus, given that $\eta \geq 1$, $g$ has no root in $(1, v)$.

    We finally prove that $F^{(3)} = F(F(F(\cdot)))$ has a fixed point in $(1, v)$. For any $m > 1$, we let $m_1 = m e^{\eta (v - m)}$, $m_2 = m_1 e^{\eta u(m_1)} = m e^{\eta ( u(m_1) + u(m) )}$, and $m_3 = m e^{\eta (3v - (m_2 + m_1 + m) )}$. We analyze the solutions to the equation $3v = m + m e^{\eta (v - m) } + m e^{\eta (v - m) + \eta (v - m e^{\eta (v - m)}) }$. We define $g(m) \defeq 3v - m - m e^{\eta (v - m) } - m e^{\eta (v - m) + \eta (v - m e^{\eta (v - m)}) }$. It is easy to derive that $g'(m=v) = - 3 + 3 \eta v - \eta^2 v^2 < 0$, no matter the value of $\eta$. Furthermore, $g(1) < 3v - 1 - e^{\eta(v - 1)}$ since $e^{(\cdot)} > 0$. When $\eta \geq \frac{1}{v-1} \ln(3v - 1)$, it follows that $g(1) < 0$. Given that $g(v) = 0$ and $g'(m = v) < 0$, it follows that $g$ admits a root in $(1, v)$.

    Taken together, we have shown that there is an orbit with period $3$, as claimed.
\end{proof}

\Schwarzian*

\begin{proof}
    We first have
    \begin{equation*}
        F'(m) = \frac{d}{dm} \left[ m e^{\eta(v - m)} \right] = e^{\eta(v - m)} (1 - \eta m).
    \end{equation*}

The second derivative reads
\begin{equation*}
    F''(m) = \frac{d}{dm} \left[ e^{\eta(v - m)} (1 - \eta m) \right] 
= -\eta e^{\eta(v - m)} (1 - \eta m) + e^{\eta(v - m)} (-\eta) = e^{\eta(v - m)} (\eta^2 m - 2\eta).
\end{equation*}
Finally, we calculate the third derivative as
\begin{equation*}
    F'''(m) = \frac{d}{dm} \left[ e^{\eta(v - m)} (\eta^2 m - 2\eta) \right] = -\eta e^{\eta(v - m)} (\eta^2 m - 2\eta) + e^{\eta(v - m)} (\eta^2)
= e^{\eta(v - m)} (3\eta^2 - \eta^3 m).
\end{equation*}
We now combine those expressions to evaluate $S F (m)$ as follows.
\begin{align*}
    S F (m) = \frac{\eta^2(3 - \eta m)}{1 - \eta m} - \frac{3}{2} \left( \frac{-\eta(2 - \eta m)}{1 - \eta m} \right)^2 &= \frac{\eta^2(3 - \eta m)}{1 - \eta m} - \frac{3\eta^2(2 - \eta m)^2}{2(1 - \eta m)^2} \\
    &= - \frac{\eta^2}{2(1 - \eta m)^2} ( 6 - 4\eta m + \eta^2 m^2 ).
\end{align*}
    Since $6 - 4\eta m + \eta^2 m^2 = (\eta m - 2)^2 + 2 > 0$, it follows that $S F(m) < 0$ for any $m \neq 1/\eta$, and this is so no matter the choice of the learning rate $\eta$.
\end{proof}

\begin{figure}
    \centering
    \includegraphics[scale=0.6]{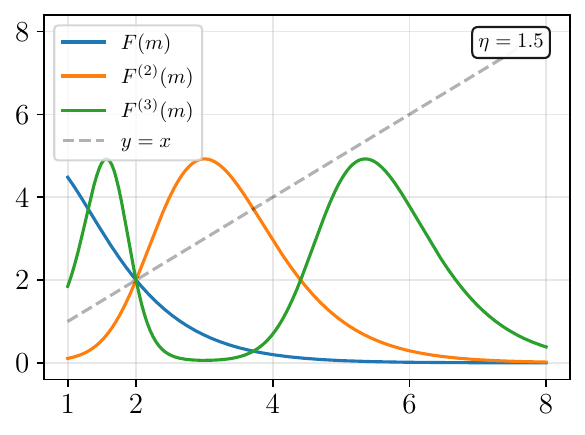}
    \includegraphics[scale=0.6]{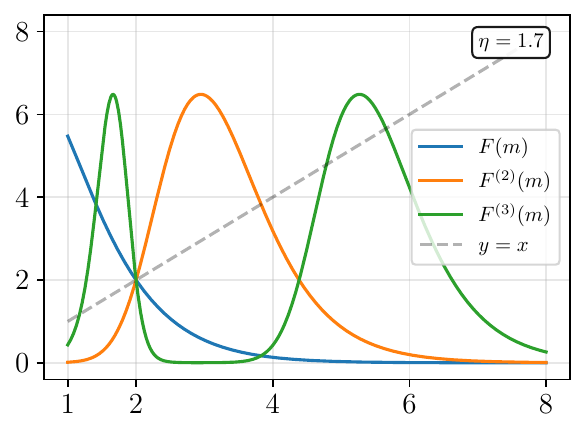}
    \caption{Compositions of $F = m e^{\eta (v - m)}$ for $\eta = 1.5$ (left) and $\eta = 1.7$ (right). $v$ is set to $2$. According to~\Cref{lemma:period3}, when $\eta \geq \frac{1}{v - 1} \ln (3v - 1) = \ln 5 \approx 1.6094$, $F^{(3)}$ has a fixed point in $(1, 2)$; this is confirmed in the plots.}
    \label{fig:compositions}
\end{figure}

\end{document}